\crefname{equation}{Equation}{Equations}
\Crefname{equation}{Equation}{Equations}
\crefname{examples}{Examples}{Examples}
\Crefname{examples}{Examples}{Examples}
\crefname{figure}{Figure}{Figures}
\Crefname{figure}{Figure}{Figures}
\newcommand{\mynewtheorem}[4][]{
  \ifthenelse{\equal{#1}{}}{	
    \newtheorem{#2}{#3}		
  }{
    \newaliascnt{#2}{#1}	
    \newtheorem{#2}[#2]{#3}	
    \aliascntresetthe{#2}	
  }
  \crefname{#2}{#3}{#4}		
}
\numberwithin{equation}{section}
\newtheorem{theorem}{Theorem}[section]
\newtheorem*{proof}{Proof}
\newcommand{\beq}{\begin{equation}}
\newcommand{\eeq}{\end{equation}}
\newcommand{\Leq}[1]{\label{#1}\end{equation}}
\newcommand{\es}{\emptyset}
\renewcommand {\l}{\left}
\newcommand {\ri}{\right}
\newcommand {\vep}{\varepsilon}
\newcommand {\LA}{\left\langle}
\newcommand {\RA}{\right\rangle}
\newcommand {\pa}{\partial}
\newcommand {\eh}{{\textstyle \frac{1}{2}}}
\newcommand {\ev}{{\textstyle \frac{1}{4}}}
\newcommand {\dv}{{\textstyle \frac{3}{4}}}
\newcommand {\ar}{\rightarrow}
\newcommand {\sign}{\mathrm{sign}}
\newcommand {\ess}{\operatorname{ess}}
\newcommand {\Euclid}{\mathrm{Euclid}}
\newcommand {\Id}{\mathrm{Id}}
\newcommand {\Span}{\mathrm{span}}
\newcommand {\bC}{{\mathbb C}}
\newcommand {\bE}{{\mathbb E}} 
\newcommand {\bN}{{\mathbb N}}
\newcommand {\bR}{{\mathbb R}}
\newcommand {\bZ}{{\mathbb Z}}
\newcommand {\bQ}{{\mathbb Q}}
\newcommand{\rstr}{{\upharpoonright}}
\newcommand{\idty}{{\rm 1\mskip-4mu l}} 
\newcommand{\cB}{{\cal B}} 
\newcommand{\cD}{{\cal D}} %
\newcommand{\cE}{{\cal E}} %
\newcommand{\cH}{{\cal H}}
\newcommand{\cK}{{\cal K}}
\newcommand{\cL}{{\mathcal L}}
\newcommand{\cM}{{\cal M}}
\newcommand{\cN}{{\cal N}}
\newcommand{\cO}{{\cal O}} 
\newcommand{\cS}{{\cal S}} 
\newcommand{\cV}{{\cal V}} %
\newcommand{\cT}{{\cal T}}
\newcommand{\ov}{\overline}
\newcommand{\bem}{\l(\! \begin{array}}
\newcommand{\eem}{\end{array}\!\ri)}
\newcommand{\bsm}{\left(\begin{smallmatrix}} 
\newcommand{\esm}{\end{smallmatrix}\right)}  
\newcommand{\hA}{\hat{A}}
\newcommand{\hS}{\hat{\Sigma}}
\newcommand{\q}{{\vec{q}\,}}
\newcommand{\tmu}{{\hat{\mu}}}
\newcommand{\tlambda}{{\hat{\lambda}}} 
\newcommand{\tP}{{\hat{P}}}
\newcommand{\tp}{{\hat{p}}}
\newcommand{\tpi}{{\hat{\pi}}}
\newcommand{\tB}{{\hat{B}}}
\newcommand{\tiB}{{\tilde{B}}}
\newcommand{\tH}{{\hat{H}}}
\newcommand{\tM}{{\hat{M}}}	
\newcommand{\tV}{{\hat{V}}}	
\newcommand{\tW}{{\hat{W}}}	
\newcommand{\tPhi}{{\hat{\Phi}}}
\newcommand{\ovv}{{\overline{v}}}
\newcommand{\qtext}[1]{\quad\text{#1}\quad}
\newcommand{\qtextq}[1]{\quad\text{#1}\quad}
\newcommand{\abs}[2][]{\lvert#2\rvert\ifthenelse{\equal{#1}{}}{}{_{#1}}}
\newcommand{\bigabs}[2][]{\bigl|#2\bigr|\ifthenelse{\equal{#1}{}}{}{_{#1}}}
\newcommand{\Bigabs}[2][]{\Bigl|#2\Bigr|\ifthenelse{\equal{#1}{}}{}{_{#1}}}
\newcommand{\norm}[2][]{\lVert#2\rVert\ifthenelse{\equal{#1}{}}{}{_{#1}}}
\newcommand{\setsize}[2][]{\lvert#2\rvert\ifthenelse{\equal{#1}{}}{}{_{#1}}}
\newcommand{\dist}{\mathrm{dist\,}}
\newcommand{\supp}{\mathrm{supp}}
\providecommand{\Vintern}[1]{\left(\begin{smallmatrix}#1\end{smallmatrix}\right)}%
\providecommand{\V}[1]{{\mathchoice{\begin{pmatrix}#1\end{pmatrix}}{\Vintern{#1}}{\Vintern{#1}}{\Vintern{#1}}}}
\providecommand{\xto}{\xrightarrow}
\providecommand{\Union}{\bigcup}
\providecommand{\Isect}{\bigcap}
\providecommand{\isect}{\cap}
\providecommand{\ve}{\varepsilon}
\providecommand{\Mo}{\mathbf M_\omega}
\providecommand{\textq}[1]{\text{#1}\quad}
\begin{document}
\title {Classical Motion in Random Potentials}
\author{Andreas Knauf
\thanks{Department Mathematik, 
Universit\"at Erlangen-N\"urnberg,
Cauerstr.~11, D--91\,058 Erlangen, Germany.
e-mail: knauf@mi.uni-erlangen.de}
\and Christoph Schumacher
\thanks{Fakult\"at f\"ur Mathematik, 
Technische Universit\"at Chemnitz,
Reichenhainerstr.~$41$, D--09\,126 Chemnitz, Germany.
e-mail: christoph.schumacher@mathematik.tu-chemnitz.de}
}
\date{October 2011}
%
\maketitle
\begin{abstract}
  We consider the motion of a classical particle under the influence
  of a random potential on~$\bR^d$, in particular the distribution of
  asymptotic velocities and the question of ergodicity of time evolution.
\end{abstract}
\tableofcontents
%
\section{Introduction}\label{sec1}
%
Since its introduction to physics by Einstein
and its mathematical foundation by Wiener, Brownian motion is 
considered one cornerstone of probability and of thermodynamics.
It thus may seem natural to expect 
that classical motion in a spatially
homogeneous force field leads in the large time limit
to deterministic diffusion.

Such results can in fact be proven for a periodic Lorentz gas
with finite horizon (related to the Sinai billiard) 
or for coulombic periodic potentials, 
see \cite{BS81} respectively \cite{Kna87, DL91}.

However, in both cases the dynamics is non--smooth, of billiard
type for the Lorentz gas and with orbits locally approximating
Keplerian conic sections for the second case.
We show here (\cref{theo}), that the 
motion in a bounded smooth potential 
is incompatible with uniform hyperbolicity.

If periodic scatterers can lead to diffusive motion,
the more this should be true for random scatterers. 
However, this is certainly not the case for 1D and
also wrong for more degrees of freedom and Poisson potentials
(\cref{sec5,sec5a}).
In this and other cases the Hamiltonian flow is not even ergodic
(\cref{thmPoisson}).

For the random Coulomb case, however, we show in 
\cref{thm:toptrans} that the flow is 
typically topologically transitive for large energies.

Related results on the motion in random 
configurations of convex scatterers have been derived by 
Marco Lenci and collaborators. 
In \cite{Len03}, the planar situation with a finite modification
of periodic scatterers was studied.
In \cite{CLS10} (see also the references of that article) 
recurrence for particles in quenched tubes
with random scatterers has been proven.

The literature on the corresponding quantum problem of 
Schr\"odinger operators with random potentials is much broader. 
See \emph{e.g.}, \cite{LMW03}, and \cite{Ve08} with its extensive
references.

In the article \cite{DRF11} quantum diffusion in a thermal 
medium has been proven.

We describe the structure of the paper.
Random potentials arise in different guises. 
The one studied most extensively is based on a regular lattice~$\cL$
in configuration space. If $J$ indexes the single site potentials,
on assumes a measure on the space $J^\cL$,
for which the $\cL$--action is ergodic. 
This is studied in \cref{sec2}.
One result is an almost deterministic distribution of 
asymptotic velocities (\cref{sec3}).
One problem concerning the Liouville measure
on the energy surface is discussed in \cref{sec4a}.
Depending on the exact exponential decay rate,
the set of singular energy values may or may not be typically dense.
Poissonian random potentials are studied in \cref{sec4,sec5a}.
Relationships between different notions of ergodicity
and some of their dynamical implications are discussed in \cref{sec5}. 
\cref{sec6} concerns deterministic potentials, and the 
geometry compatible with a uniformly hyperbolic structure.
Finally, we treat random coulombic potentials in \cref{sec9,sect10},
showing topological transitivity and ergodicity of the compactified
flow for energy surfaces of large energy.
\bigskip

\noindent
{\bf Acknowledgement:} We thank Boris Gutkin for the useful conversations.
%
\section{The Lattice Case}\label{sec2}
%
We assume the random potential to be based on \emph{short range} single site
potentials $W_j\in C^\eta(\bR^d,\bR)$,
\nomenclature[AW]{$W_j$}{single site potential}{}{}%
\nomenclature[Ad]{$d$}{dimension}{}{}%
indexed by $j\in J$, $\setsize J\in\bN$, $\setsize J\ge2$%
\nomenclature[AJ]{$J$}{index set for single site potentials}{}{}.
Namely $\eta\in\bN\cup\{\infty\}$, $\eta\ge2$%
\nomenclature[Geta]{$\eta$}{order of differentiability}{}{}
and
\beq
\abs{\pa^\alpha W_j(q)}\leq\frac{C_\alpha}{\LA q\RA^{d+\vep}} \qquad
\bigl(q\in\bR^d,\ \alpha\in\bN_0^d,\ \abs\alpha\leq\eta\bigr)
\Leq{as:sisi}
\nomenclature[ACalpha]{$C_\alpha$}{constant}{}{}%
\nomenclature[Galpha]{$\alpha$}{multi index}{}{}%
with $\LA q\RA:=\sqrt{1+\norm{q}^2}$, for constants $C_\alpha>0$.

The single site potentials are placed on a regular lattice $\cL\subseteq\bR^d$%
\nomenclature[AL]{$\cL$}{regular lattice in $\bR^d$}{}{}
with basis $\ell_1,\dotsc,\ell_d\in\bR^d$%
\nomenclature[Aell1]{$\ell_1,\dotsc,\ell_d$}{basis of $\cL$}{}{}
according to $\omega\in\Omega:=J^\cL$%
\nomenclature[GOmega]{$\Omega$}{configurations of single site potentials}{}{}%
\nomenclature[Gomega]{$\omega$}{configuration of single site potentials}{}{}
to give the random potential
\[V\colon M:=\Omega\times\bR^d\to\bR \qtextq{,} V(\omega,q):=\sum_{\ell\in\cL}
W_{\omega(\ell)}(q-\ell)\]
\nomenclature[AV]{$V$}{(random) potential}{}{}%
\nomenclature[Al]{$\ell$}{lattice vector}{}{}%
on \emph{extended configuration space}~$M$%
\nomenclature[AM]{$M$}{extended configuration space}{}{}.
We use the discrete topology on~$J$ and product topology on~$\Omega$.
The probability measure~$\beta$%
\nomenclature[Gbeta]{$\beta$}{probability measure on $\Omega$}{}{}
on $\bigl(\Omega,\cB(\Omega)\bigr)$
is assumed to be invariant w.r.t.\ the action
\begin{equation}\label{eq:vartheta}
  \vartheta\colon\cL\times\Omega\to\Omega\qtextq{,}
   (\ell,\omega)\mapsto\vartheta_\ell(\omega)\qtextq{with}
  \vartheta_\ell(\omega)(\ell'):=\omega(\ell'+\ell)\text.
\end{equation}
\nomenclature[Gtheta]{$\vartheta$}{$\cL$--action on $\Omega$}{}{}%
Unless we explicitly say the contrary, $\beta$ is assumed to be 
$\vartheta$--ergodic (a simple example being a product measure
$\beta=\bigotimes_{\ell\in\cL}\hat{\beta}$
with a probability measure~$\hat{\beta}$%
\nomenclature[Gbetahat]{$\hat\beta$}{probability measure on~$J$}{}{}
on~$J$).
Ergodicity and $\setsize J\ge2$ imply that~$\beta$ is non--atomic.
The short range conditions above imply that for all $\omega\in\Omega$
the potentials
\[V_\omega\colon\bR^d\to\bR\qtextq{,} V_\omega(q):=V(\omega,q)\]
are as smooth as the single site potentials $W_j$.
Moreover~$V$ itself is continuous and bounded,
together with its partial derivatives $\pa^\alpha V,\ \abs\alpha\leq\eta$.
\begin{remark}\label{rem:A}
More relevant than the \emph{range} 
\begin{equation*}
  [V_{\min},V_{\max}]:=\ov{V(M)}
\end{equation*}
\nomenclature[AVmin]{$V_{\min}$}{infimum of~$V$}{}{}%
\nomenclature[AVmax]{$V_{\max}$}{supremum of~$V$}{}{}%
of the potential is its \emph{essential range}
($\lambda^d$%
\nomenclature[Glambdad]{$\lambda^d$}{Lebesgue measure on $\bR^d$}{}{}
denoting Lebesgue measure on $\bR^d$):
\begin{equation*}
  \supp\bigl(V(\beta\otimes\lambda^d)\bigr)
    =[V_{\ess\min},V_{\ess\max}]\text.
\end{equation*}
\nomenclature[AVessmin]{$V_{\ess\min}$}{essential infimum of~$V$}{}{}%
\nomenclature[AVessmax]{$V_{\ess\max}$}{essential supremum of~$V$}{}{}%
By $\vartheta$--ergodicity of $\beta$, this is a deterministic set:
\[\supp \bigl(V_\omega(\lambda^d)\bigr)
 = \supp\bigl(V(\beta\otimes\lambda^d)\bigr) \qquad 
 (\beta\text{--a.s.})\text.\]
As~$V$ is uniformly bounded together with its first derivatives, this is in fact
a bounded interval: 
$\supp\bigl(V_\omega(\lambda^d)\bigr)=\ov{V_\omega(\bR^d)}$.
In general the essential range of~$V$ is a proper subinterval of its range.
\end{remark}
Thus the flow $\Phi\colon\bR\times P\to P$%
\nomenclature[GPhi]{$\Phi$}{hamiltonian flow}{}{}
generated by the Hamiltonian function
\[H\colon P\to\bR \qtextq{,} H(\omega,p,q):=\eh\norm p^2+V(\omega,q)\]
\nomenclature[AH]{$H$}{hamiltonian function}{}{}%
on \emph{extended phase space}
\[P:=\Omega\times\bR^d\times\bR^d\]
\nomenclature[AP]{$P$}{extended phase space}{}{}%
uniquely exists for all times.
We write 
\[\Phi^t\colon P\to P\text{ , } \Phi^t(\omega,p_0,q_0):=\Phi(t,\omega,p_0,q_0)= 
\l(\omega\, ,\, p^t(\omega,p_0,q_0)\, ,\, q^t(\omega,p_0,q_0)\ri)\]
\nomenclature[Apt]{$p^t$}{momentum at time~$t$}{}{}%
\nomenclature[Aqt]{$q^t$}{location at time~$t$}{}{}%
for the solution of the initial value problem at time $t\in\bR$%
\nomenclature[At]{$t$}{time}{}{}.
Whenever a fixed $\omega\in\Omega$ is considered, we write it as a subscript
(\emph{e.g.}\ $\Phi_\omega^t\equiv(p_\omega^t,q_\omega^t)\colon\bR^{2d}\to\bR^{2d})$.
See 
\cref{abb:realisation}
for a realisation of $t\mapsto q_\omega(t,x_0)$, with lattice $\cL=\bZ^2$.
\begin{figure}[h]
\begin{center}
  \ifpdf
    \includegraphics[viewport=90 375 440 715,width=9cm,clip]{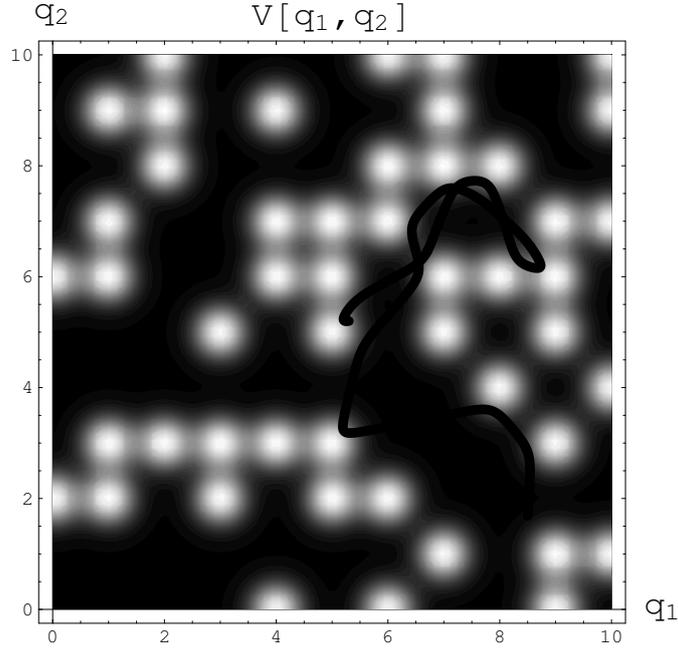}
  \else
    \includegraphics[viewport=90 375 440 715,width=9cm,clip]{zufall4.ps}
  \fi
\end{center}
  \caption{Path in configuration space}
  \label{abb:realisation}
\end{figure}

The space~$P$ is equipped with the locally finite Borel measure
$\mu:=\beta\otimes\lambda^{2d}$%
\nomenclature[Gmu]{$\mu$}{invariant measure on~$P$}{}{}.
The Hamiltonian flow~$\Phi$ leaves~$\mu$ invariant.

The lattice acts on extended phase space via the continuous group action
\[\Theta\colon\cL\times P\to P \qtextq{,}(\ell,\omega,p,q)\,\mapsto\,
\Theta_\ell(\omega,p,q) := \bigl(\vartheta_\ell(\omega),p,q-\ell\bigr)\]
\nomenclature[GTheta]{$\Theta$}{$\cL$--action on~$P$}{}{}%
and leaves $\mu$ invariant:
\beq
\mu\circ\Theta_\ell=\mu \qquad (\ell\in\cL)\text.
\Leq{A}
Similarly for all lattice vectors $\ell\in\cL$
\beq
H\circ\Theta_\ell=H \qtextq{and thus} \Theta_\ell\circ\Phi^t=\Phi^t\circ
\Theta_\ell \qquad (t\in\bR)\text.
\Leq{B}

By the Hamiltonian character of the motion the energy is invariant under 
the time evolution:
\beq
H\circ\Phi^t=H \qquad (t\in\bR)\text.
\Leq{C}
Thus we obtain a one--parameter family of Borel measures $\mu_E$%
\nomenclature[GmuE]{$\mu_E$}{invariant measure on $H^{-1}(-\infty,E]$}{}{}
on~$P$, given by
\beq
\mu_E(B):=\mu \l(B\cap H^{-1}(-\infty,E])\ri) 
 \qquad \bigl(B\in\cB(P)\bigr)\text,
\Leq{mu:E}
parametrised by the energy $E\in\bR$%
\nomenclature[AE]{$E$}{energy}{}{}.
By \eqref{A}, \eqref{B} and \eqref{C}
these measures~$\mu_E$ are $\Theta$--~and $\Phi$--invariant, too.

A relevant quantity is \emph{asymptotic} velocity
\beq
\ovv^\pm\colon P\to\bR^d \qtextq{,} \ovv^\pm(\omega,x_0):=\lim_{T\to\pm\infty}
\frac{q_\omega(T,x_0)}{T}\text.
\Leq{D}
\nomenclature[Avbar]{$\ovv$}{asymptotic velocity}{}{}%

\begin{remark}[Non--Existence of Asymptotic Velocity] \label{rem:never}
For Hamiltonian motion in bounded potentials the asymptotic velocity may not exist for \emph{any} initial condition 
$x_0\in H^{-1}(E)$ for \emph{all} energies $E$ above some threshold. 
This is the case for centrally symmetric potentials
($V(q)=\tilde V(\norm q)$), with $\tilde V$%
\nomenclature[AVtilde]{$\tilde V$}{slowly varying function}{}{}
being periodic in a slowly varying function like, for example, 
$\tilde V(r)=\cos(\log(r+1))$.

It is also possible to construct examples of truly random potentials 
where for some $\omega\in\Omega$ these limits do not exist on 
$ H_\omega^{-1}(E)$ for any $E>E_0$:

The lattice $\cL:=\bZ^d$ with fundamental domain $\cD=[0,1]^d$%
\nomenclature[AD]{$\cD$}{fundamental domain of~$\cL$}{}{}
admits an adapted partition of unity $F_\ell$%
\nomenclature[AFell]{$F_\ell$}{partition of unity}{}{},
$(\ell\in\cL)$
\[{\textstyle \sum_{\ell\in\cL}} F_\ell(q) = 1 \qquad (q\in\bR^d)\]
with $F_\ell(q) := F(q-\ell)$
for 
$F:=\idty_\cD\,*\,f$ with some
$f\in C_c^\infty\bigl(\bR^d,[0,\infty)\bigr)$ and $\int_{\bR^d}f\,dx=1$.
\nomenclature[Aidentity]{$\idty_\cD$}{indicator function of~$\cD$}{}{}%
\nomenclature[Af]{$f$}{smooth nonnegative function}{}{}%

We now take $J:=\{0,1,\ldots,d\}$ as index set for the single site potentials
$W_j:=j\,F$.
We choose $\omega\in\Omega=J^{(\bZ^d)}$ of the form 
$\omega(\ell)=\sum_{k=1}^{d} \widetilde\omega_k(\ell_k)$%
\nomenclature[Gomegatildek]{$\widetilde\omega_k$}{slowly varying configuration}{}{},
with $\widetilde\omega_k:\bZ\to \{0,1\}$ slowly varying like above.

Then the motion in the realization $V_\omega$
is separable in cartesian coordinates,
and thus for no $E>E_0:=d$%
\nomenclature[AEzero]{$E_0$}{energy threshold}{}{}
and initial condition $x_0\in H_\omega^{-1}(E)$
the asymptotic velocities $\ovv^\pm(x_0)$ exist.

By allowing for arbitrary modifications of $\omega$ over finite subsets
of~$\bZ^d$ the set of these $\widehat{\omega}$%
\nomenclature[Gomegahat]{$\widehat\omega$}{finite modification of $\omega$}{}{}
is even dense in $\Omega$,
and asymptotic velocity does not exist for any initial condition
$x_0\in H_{\widehat{\omega}}^{-1}(E)$
(possibly except for dimension $d>1$ and a set of finite Liouville measure,
for which $\ovv^\pm(\widehat{\omega},x_0)=0$).
\end{remark}
In case of non--existence of the limit, we will set $\ovv^\pm(\omega,x_0):=0$.

However, typically the above limit exists.
This is shown below by invoking Birkhoff's ergodic theorem.
As this deals with finite measures, we have to change our measure space.
Namely, we set 
\[\hat{P}:=P/\Theta\text.\] 
\nomenclature[APhat]{$\hat P$}{quotient of $P$ by $\cL$}{}{}%
As $P=\bigcup_{\ell\in\cL}\Theta_\ell(\Omega\times\bR^d_p\times\cD)$ 
with \emph{fundamental domain} of~$\cL$
\[\cD\,:=\,\bigl\{{\textstyle \sum_{k=1}^d} x_k\ell_k\ \bigm| \ x_k\in[0,1]\bigr\}
\,\subseteq\,\bR^d_q\text,\] 
the topological space~$\hat{P}$ is homeomorphic to
\beq
(\Omega\times\bR^d\times \cD)/{\sim}\text,
\Leq{homeo}
$\sim$ identifying points in $\Omega\times\bR^d_p\times\pa\cD$
via the $\Theta$ action.
Using \eqref{A} and \eqref{B}, the covering projection 
\[\hat{\pi}\colon P\to\hat{P}\]
\nomenclature[Gpihat]{$\hat\pi$}{covering projection $P\to\hat P$}{}{}%
allows us to induce measures $\tmu$%
\nomenclature[Gmuhat]{$\tmu$}{invariant measure on $\hat P$}{}{}
and $\tmu_E$%
\nomenclature[GmuhatE]{$\tmu_E$}{invariant measure on $\hat P$}{}{}
on $\hat{P}$ by setting
\beq
  \tmu(\tB):=\mu\bigl(\hat{\pi}^{-1}(\tB)\cap(\Omega\times\bR^d_p\times\cD)\bigr)
  \qquad\bigl(\tB\in\cB(\tP)\bigr)
\Leq{mu:hat}
and similarly for $\tmu_E$ (note that the Lebesgue measure 
$\lambda^d(\pa\cD)=0$).

Furthermore \eqref{B} allows us to define the continuous energy function
$\tH\colon\tP\to\bR$%
\nomenclature[AHhat]{$\tH$}{hamiltion function on $\hat P$}{}{}
and flow $\tPhi\colon\bR\times\tP\to\tP$%
\nomenclature[GPhitilde]{$\tPhi$}{hamiltion flow on $\hat P$}{}{}
uniquely by
$H=\tH\circ\hat{\pi}$ and $\tPhi^t\circ\hat{\pi}=\hat{\pi}\circ\Phi^t$.
Finally, by $\Theta$--invariance of the momenta $p\colon P\to\bR^d$,
they descend to
\beq
\tp\colon\tP\to\bR^d \qtextq{,} \tp\circ\hat{\pi}=p.
\Leq{hat:p}
\nomenclature[Aphat]{$\tp$}{momentum on $\hat P$}{}{}%
\begin{proposition}\label{prop:A}
The asymptotic velocities \eqref{D} exist $\mu$--a.e.\ on~$P$, and
$\ovv^+=\ovv^-\ \mu$--a.e.\,.
Setting $\ovv(x):=\ovv^\pm(x)$ in case of equality and $\ovv(x)=0$ otherwise, 
\[\ovv\in L_{\mathrm{loc}}^\infty(P,\mu)\qtextq{,}
\ovv\circ\Phi^t=\ovv\qtextq{and} \ovv\circ\Theta_\ell=\ovv
\qquad(t\in\bR,\ell\in\cL)\text.\]
\end{proposition}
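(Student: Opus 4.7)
The plan is to reduce the statement to one application of Birkhoff's ergodic theorem on the finite--measure quotient system $(\tP,\tmu_E,\tPhi^t)$, exploiting that position displacement is the time integral of momentum and that momentum descends to $\tP$ by \eqref{hat:p}.

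First I would verify that $\tmu_E$ is finite and that $\tp\in L^\infty(\tP,\tmu_E)$: on $\tH^{-1}(-\infty,E]$ one has $\norm{\tp}\leq\sqrt{2(E-V_{\min})}$, which is finite because~$V$ is uniformly bounded by \eqref{as:sisi}; combined with $\lambda^d(\cD)<\infty$ and $\beta$ being a probability measure, the definition \eqref{mu:hat} gives $\tmu_E(\tP)<\infty$.

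Second, I would apply Birkhoff's theorem componentwise to $\tp$ under the $\tmu_E$--preserving flow $\tPhi^t$. This yields $\tmu_E$--a.e.\ existence of the forward and backward Ces\`aro averages $\lim_{T\to\pm\infty}\tfrac1T\int_0^T\tp\circ\tPhi^s\,ds$; since the $\sigma$--algebras of $\tPhi^t$-- and $\tPhi^{-t}$--invariant sets agree, the two limits coincide $\tmu_E$--a.e.\ with the same $\tPhi^t$--invariant function $\bE[\tp\mid\cI]$.

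Third, I would lift back to~$P$. Because $p=\tp\circ\hat\pi$ and $\hat\pi\circ\Phi^s=\tPhi^s\circ\hat\pi$, the above convergence pulls back to $\mu_E$--a.e.\ convergence of $\tfrac1T\int_0^T p\circ\Phi^s\,ds$ on $\Omega\times\bR^d_p\times\cD$; by $\Theta$--invariance of $\mu_E$ (from \eqref{A}, \eqref{B}) the same then holds on every translate $\Theta_\ell(\Omega\times\bR^d_p\times\cD)$ and hence outside a $\mu_E$--null set in~$P$. Since $q_\omega^T(x_0)=q_\omega^0(x_0)+\int_0^T p_\omega^s(x_0)\,ds$, division by~$T$ kills the boundary term, so $q_\omega^T/T$ converges to the common limit $\ovv$, giving $\ovv^+=\ovv^-$ $\mu_E$--a.e. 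A countable exhaustion $E_n\nearrow\infty$ then yields $\mu$--a.e.\ existence on~$P$.

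Finally, the listed invariances are immediate: $\Phi^t$--invariance of $\ovv$ is part of Birkhoff's conclusion; $\Theta_\ell$--invariance follows from \eqref{B} together with $q\circ\Theta_\ell=q-\ell$, whose contribution $\ell/T$ vanishes in the limit. The $L^\infty_{\mathrm{loc}}$--bound $\norm{\ovv}\leq\sqrt{2(E-V_{\min})}$ on $H^{-1}(-\infty,E]$ suffices because any compact subset of~$P$ has bounded energy (via continuity of~$H$ and compactness of~$\Omega=J^\cL$). I expect the main technical chore---rather than a deep obstacle---to be the careful bookkeeping in passing from $\tmu_E$--null sets in the quotient back to a single $\mu$--null set in the $\sigma$--finite extended phase space~$P$, uniformly over the energy cut--offs.
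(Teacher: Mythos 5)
Your proposal is correct and follows essentially the same route as the paper: write $q_\omega(T,x_0)-q_0=\int_0^T p_\omega(t,x_0)\,dt$, push the momentum down to the finite--measure quotient via \eqref{hat:p}, apply Birkhoff's theorem to $\tp$ under the $\tmu_E$--invariant flow $\tPhi$, use the bound $\norm{\tp}\le\sqrt{2(E-V_{\min})}$ for the $L^\infty$ statement, and exhaust in the energy cut--off $E$. The only difference is that you spell out the lifting of the $\tmu_E$--null sets back to~$P$ via $\Theta$--invariance and the $\Theta_\ell$--invariance of $\ovv$, bookkeeping the paper leaves implicit.
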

\begin{proof}
  For initial conditions $x_0=(p_0,q_0)\in \bR^d_p\times \bR^d_q$ we have
  \begin{align*}
    \ovv^\pm(\omega,x_0)&
      =\lim_{T\to\pm\infty}\frac{q_\omega(T,x_0)-q_0}{T}
      =\lim_{T\to\pm\infty}\frac{1}{T}\int_0^Tp_\omega(t,x_0)\,dt\\&
      =\lim_{T\to\pm\infty}\frac{1}{T}\int_0^T\tp\bigl(t,\hat{\pi}(\omega,x_0)\bigr)\,dt
      \text.
  \end{align*}
  But as the finite measures $\tmu_E$ on $\tP$ are $\tPhi$--invariant,
  by Birkhoff's ergodic theorem the limits $\lim_{T\to\pm\infty}\frac{1}{T}
  \int_0^T\tp(t,\hat x_0)\,dt$ exist, coincide and are $\tPhi$--invariant 
  $\tmu_E$--a.e.\,.
  Moreover the limit function lies in $L^1(\tP,\tmu_E)$ and is bounded by
  $\sqrt{2(E-V_{\min})}$ in absolute value, as 
  $\norm{\tp(t,\hat x_0)}^2\leq2 (\tH(\hat x_0)-V_{\min})$.
  Thus it is in $L^\infty(\tP,\tmu_E)$.

  The measure $\tmu$ is non--finite, but 
  $\tmu\rstr_{\tH^{-1}((-\infty,E])} =\tmu_E$.
  Thus we obtain the result.
\end{proof}
\begin{corollary}\label{cor:A}
For $\beta$--a.e.\ $\omega\in\Omega$ the asymptotic velocities
$\ovv_\omega^\pm\colon \bR^{d}_p\times\bR^{d}_q\to\bR^d$ exist and are equal
$\lambda^{2d}$--a.e.\,.
\end{corollary}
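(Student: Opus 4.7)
The plan is to derive this as a direct consequence of \cref{prop:A} via a Fubini/Tonelli slicing argument. First I would form the exceptional set
\[N:=\{(\omega,x_0)\in P\mid\ovv^+(\omega,x_0)\text{ or }\ovv^-(\omega,x_0)\text{ fails to exist, or }\ovv^+(\omega,x_0)\neq\ovv^-(\omega,x_0)\},\]
and observe that \cref{prop:A} asserts exactly $\mu(N)=0$. Measurability of $N$ is clear because $\ovv^\pm$ are, by construction, pointwise limits (set to $0$ off the existence set) of the jointly measurable functions $(\omega,x_0)\mapsto q_\omega(T,x_0)/T$, so $\ovv^\pm$ are Borel measurable on $P$ and so is $N$.

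Next I would apply Tonelli's theorem to $\mu=\beta\otimes\lambda^{2d}$. Although $\mu$ is not finite, it is $\sigma$-finite (e.g.\ exhaust $P$ by the sets $\tH^{-1}((-\infty,E])$ of finite $\mu_E$-measure used in the proof of \cref{prop:A}, or more simply exhaust $\bR^{2d}$ by bounded balls in $x_0$), so Tonelli applies. From $\mu(N)=0$ it follows that
\[\int_\Omega\lambda^{2d}(N_\omega)\,d\beta(\omega)=0,\]
where $N_\omega:=\{x_0\in\bR^{2d}\mid(\omega,x_0)\in N\}$. Hence $\lambda^{2d}(N_\omega)=0$ for $\beta$-a.e.\ $\omega\in\Omega$. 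For every such $\omega$, $\ovv_\omega^+(x_0)$ and $\ovv_\omega^-(x_0)$ exist and agree for $\lambda^{2d}$-a.e.\ $x_0$, which is precisely the claim.

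There is essentially no obstacle here: the only point demanding attention is that the ambient measure $\mu$ on $P$ is infinite, which is why \cref{prop:A} had to be set up on the quotient $\tP$ with the finite measures $\tmu_E$; once $\sigma$-finiteness of $\mu$ is noted, the passage from the joint almost-everywhere statement on $P$ to the fibre-wise almost-everywhere statement in $\omega$ is immediate.
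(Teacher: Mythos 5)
Your argument is correct and is essentially the paper's own proof: the paper also deduces \cref{cor:A} from \cref{prop:A} by applying Fubini's theorem to $\mu=\beta\otimes\lambda^{2d}$ on~$P$. Your extra remarks on measurability of the exceptional set and on the $\sigma$--finiteness of~$\mu$ simply make explicit what the paper leaves implicit.
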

\begin{proof}
  This follows from \cref{prop:A} and Fubini's theorem, applied to
  the measure $\mu=\beta\otimes\lambda^{2d}$ on extended phase space~$P$.
\end{proof}
Similarly, for all regular energies $E$ (see \cref{sec5} below), 
the asymptotic velocities exist Liouville--almost everywhere on $H^{-1}_\omega(E)$.
%
\section{Distribution of Asymptotic Velocities}\label{sec3}
%
Next we consider the joint distribution of energy and asymptotic velocity,
using the measurable maps
\[\Gamma_\omega:=(H_\omega,\ovv_\omega)\colon\bR^{d}_p\times\bR^{d}_q\longrightarrow\bR\times\bR^d \qquad
(\omega\in\Omega)\text.\]
\nomenclature[GGammaomega]{$\Gamma_\omega$}{energy velocity map}{}{}%
We thus consider phase space regions 
$\bR^d_p\times Q_n\subseteq\bR^{d}_p\times\bR^{d}_q$ with
\begin{equation*}
  Q_n:=\bigl\{{\textstyle \sum_{k=1}^d} x_k\ell_k\ \bigm|\ 
    \forall k\in\{1,\dotsc,d\}:\; -n\leq x_k<n\bigr\}
    \qquad(n\in\bN)\text,
\end{equation*}
\nomenclature[AQn]{$Q_n$}{rectangle in $\bR_q^d$}{}{}%
and the normalised restrictions
\[\lambda_n\, :=\,
 \frac{1}{\lambda^d(Q_n)}\lambda^{2d}\rstr_{\bR^d_p\times Q_n}
 \,=\, (2n)^{-2d}\; \lambda^{2d}\rstr_{\bR^d_p\times Q_n}\]
\nomenclature[Glambdan]{$\lambda_n$}{normalized Lebesgue measure on~$Q_n$}{}{}%
of Lebesgue measure to these regions.
\begin{proposition}\label{prop:B}
For $\beta$--a.e.\ $\omega\in\Omega$ the {\bf energy velocity distribution}
\[\nu_\omega:=\lim_{n\to\infty}\Gamma_\omega(\lambda_n)\]
\nomenclature[Gnuomega]{$\nu_\omega$}{energy velocity distribution}{}{}%
exists in the sense of vague convergence and is independent of $\omega$.
\end{proposition}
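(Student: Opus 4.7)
The strategy is to rewrite the spatial average defining $\Gamma_\omega(\lambda_n)$ as an ergodic average on $(\Omega,\beta)$, and then apply Birkhoff's theorem for the $\cL$--action~$\vartheta$.

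Fix $f\in C_c(\bR\times\bR^d)$ and set $g_\omega:=f\circ\Gamma_\omega$. Since $\supp f$ is compact and $V$ is bounded below by $V_{\min}$, any $(p,q)$ with $\Gamma_\omega(p,q)\in\supp f$ satisfies $\tfrac12\norm{p}^2\leq C_f$ for some $C_f$ depending only on~$f$, so $p$ lies in a bounded ball $B\subset\bR^d_p$. Set $L_n:=\{\sum_{k=1}^d n_k\ell_k\mid n_k\in\bZ,\,-n\le n_k<n\}\subset\cL$; then $Q_n=\bigcup_{\ell\in L_n}(\ell+\cD)$ is a disjoint union up to a Lebesgue null set. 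Relabelling the sum in the definition of~$V$ gives $V_\omega(q+\ell)=V_{\vartheta_\ell\omega}(q)$, which combined with the $\Theta$--invariance of $\ovv$ (\cref{prop:A}) yields the covariance
\begin{equation*}
  \Gamma_\omega(p,q+\ell)=\Gamma_{\vartheta_\ell\omega}(p,q)\qquad(\ell\in\cL)\text,
\end{equation*}
valid $\lambda^{2d}$-a.e.\ in $(p,q)$ for $\beta$-a.e.\ $\omega$ (via \cref{cor:A} and the countability of~$\cL$). A substitution $q=q_0+\ell$ in each cell then yields
\begin{equation*}
  \int f\,d\Gamma_\omega(\lambda_n) = \frac{1}{\setsize{L_n}}\sum_{\ell\in L_n}h_f(\vartheta_\ell\omega)\qtextq{with}
  h_f(\omega'):=\frac{1}{\lambda^d(\cD)}\int_B\int_\cD g_{\omega'}(p,q_0)\,dq_0\,dp\text.
\end{equation*}
Since $f$ is bounded and $B\times\cD$ has finite Lebesgue measure, $h_f$ is uniformly bounded on~$\Omega$ and hence lies in $L^1(\Omega,\beta)$.

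Because $\beta$ is $\vartheta$--ergodic and $(L_n)_{n\in\bN}$ is a F{\o}lner sequence in $\cL\cong\bZ^d$ (after identifying~$\cL$ with~$\bZ^d$ linearly it is the standard centred cube sequence), the Birkhoff ergodic theorem for $\bZ^d$--actions delivers
\begin{equation*}
  \frac{1}{\setsize{L_n}}\sum_{\ell\in L_n}h_f(\vartheta_\ell\omega)\ \longrightarrow\ \int_\Omega h_f\,d\beta\qquad(n\to\infty)
\end{equation*}
for $\beta$-a.e.\ $\omega$. The right-hand side is a deterministic positive linear functional $L(f)$ in~$f$, and $f\mapsto L(f)$ defines a Radon measure~$\nu$ on $\bR\times\bR^d$ by the Riesz representation theorem.

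To pass from almost-sure convergence for a single test function to vague convergence of the measures on a common $\beta$-conull set, I would fix a countable subset $\{f_m\}\subset C_c(\bR\times\bR^d)$ whose $\bQ$-linear span is sup-norm dense on every compact set, intersect the $\beta$-conull sets on which convergence holds for each~$f_m$, and extend to arbitrary $f\in C_c$ by a standard $\vep/3$-argument, using the uniform bound $\Gamma_\omega(\lambda_n)(K)\le\lambda^d(B_K)$ on compact $K\subset\bR\times\bR^d$, where $B_K$ depends on~$K$ only. I expect the main technical obstacle to be the simultaneous validity of the covariance for all $\ell\in\cL$ on a common full-measure set: the set in~$P$ where $\ovv$ fails to exist is $\mu$-null and $\Theta$-invariant, so Fubini together with countability of $\cL$ supply, for $\beta$-a.e.\ $\omega$, a single $\lambda^{2d}$-null bad set off which the covariance holds for every~$\ell$.
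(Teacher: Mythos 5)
Your proposal is correct and follows essentially the same route as the paper: reduce to test functions $f\in C_c(\bR^{d+1})$ via Riesz, bound the momenta using the compact support of $f$ and $V_{\min}$, rewrite the spatial average over $Q_n$ as a lattice ergodic average of a bounded function of $\omega$ using the covariance $\Gamma_\omega(p,q+\ell)=\Gamma_{\vartheta_\ell\omega}(p,q)$, and apply the multidimensional Birkhoff theorem for the ergodic $\vartheta$--action. Your extra care (a countable sup--norm dense family of test functions to get one conull set, and the a.e.\ validity of the covariance — which in fact holds everywhere by the exact equivariance $\ovv\circ\Theta_\ell=\ovv$ of \cref{prop:A}) only makes explicit steps the paper leaves implicit.
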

\begin{proof}
  In view of Riesz representation theorem
  we have to check that for every function
  $f\in C_c(\bR^{d+1})=\{g\in C(\bR^{d+1})\mid\supp(g)\text{ is compact}\}$
  \[\lim_{n\to\infty}\int_{\bR^{d+1}}f\,d\nu_{n,\omega}\qquad \text{, with $\nu_{n,\omega}:=\Gamma_\omega(\lambda_n)$,}
  \]
  exists and is independent of~$\omega$.
  By compactness of support of~$f$, there exists an $E_0\in\bR$ (depending
  on~$f$) with $f(E,\ovv)=0$ for all $\ovv\in\bR^d$ and $E\ge E_0$.
  On the other hand we know that 
  $\norm{p}\leq\sqrt{2(E_0-V_{\min})}$ if $(p,q)\in H^{-1}([V_{\min},E_0])$.
  Thus we have the estimate 
  \[\int_{\bR^{d+1}}\abs f\,d\nu_{n,\omega}\leq\frac{(2(E_0-V_{\min})\,\pi)
  ^{d/2}}{\Gamma\l(\frac{d}{2}+1\ri)}\cdot\norm[\infty]f\text,\]
\nomenclature[GGamma]{$\Gamma$}{Gamma function}{}{}%
  which is uniform in $n\in\bN$ and $\omega\in\Omega$.
  The function
  \[g\colon\Omega\to\bR \qtextq{,} g(\omega):=\int_{\bR^{d+1}}f\,d\nu_{1,\omega}\]
  is thus in $L^\infty(\Omega,\beta)$.
  Moreover
  \[\int_{\bR^{d+1}}f\,d\nu_{n,\omega}=\frac{1}{\setsize{\cL_n}}\sum_{\ell\in\cL_n}
  g\bigl(\vartheta_\ell(\omega)\bigr)\]
  with $\cL_n:=\bigl\{\sum_{k=1}^dn_k\ell_k\mid n_k\in\{-n,-n+1,\ldots,n-1\}\bigr\}
  \subseteq\cL$%
\nomenclature[ALn]{$\cL_n$}{rectangle in $\cL$}{}{}.

  By assumption the probability measure~$\beta$ is ergodic w.r.t.\ the
  $\vartheta$--action on~$\Omega$.
  Thus a lattice version of Birkhoff's
  ergodic theorem (see, \emph{e.g.}\ Keller \cite{Kel98}, Thm.\ 2.1.5) assures that 
  \[\lim_{n\to\infty}\int_{\bR^{d+1}}f\,d\nu_{n,\omega}\ \text{ exists\ and\ 
  is\ $\omega$--independent}\]
  $\beta$--almost surely.
\end{proof}
\vspace{2mm}
The energy--velocity distribution is the unique non--random limit measure
\[\hat{\nu}\colon\cB(\bR^{d+1})\to[0,\infty]\text.\]
\nomenclature[Gnuhat]{$\hat\nu$}{non--random energy velocity distribution}{}{}%
This measure always has a certain symmetry property:
\begin{proposition}
  The energy--velocity distribution $\hat{\nu}$ is invariant w.r.t.\ the 
  inversion of velocity
  \[I\colon\bR^{d+1}\to\bR^{d+1} \qtextq{,} (E,\ovv)\mapsto(E,-\ovv)\text.\]
\end{proposition}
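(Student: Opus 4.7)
The plan is to exploit the time-reversal symmetry of the Hamiltonian flow, together with the equality $\ovv^+=\ovv^-$ established in \cref{prop:A}, and push the resulting symmetry of each prelimit measure $\nu_{n,\omega}$ through to the limit $\hat\nu$.

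Concretely, I would introduce the momentum-reversal involution
\[R\colon\bR^d_p\times\bR^d_q\to\bR^d_p\times\bR^d_q\qtextq{,}R(p,q):=(-p,q)\text.\]
Since $H_\omega(p,q)=\eh\norm p^2+V_\omega(q)$, one has $H_\omega\circ R=H_\omega$, and the classical time-reversal identity
$(p_\omega^t,q_\omega^t)\circ R=(-p_\omega^{-t},q_\omega^{-t})$
holds for all $t\in\bR$. From the definition \eqref{D} this immediately gives $\ovv_\omega^\pm\circ R=-\ovv_\omega^\mp$ wherever the right-hand side exists. By \cref{prop:A}, $\ovv_\omega^+=\ovv_\omega^-=\ovv_\omega$ holds $\lambda^{2d}$-a.e.\ for $\beta$-a.e.\ $\omega$, so
\[\ovv_\omega\circ R=-\ovv_\omega\qquad\lambda^{2d}\text{-a.e.\,,}\]
and hence $\Gamma_\omega\circ R=I\circ\Gamma_\omega$ $\lambda^{2d}$-a.e.

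Next, $R$ leaves each slab $\bR^d_p\times Q_n$ invariant and preserves $\lambda^{2d}$, so $R_*\lambda_n=\lambda_n$. Combining these two facts,
\[I_*\nu_{n,\omega}=I_*\Gamma_\omega(\lambda_n)=(I\circ\Gamma_\omega)_*\lambda_n=(\Gamma_\omega\circ R)_*\lambda_n=\Gamma_\omega(R_*\lambda_n)=\Gamma_\omega(\lambda_n)=\nu_{n,\omega}\text,\]
so each prelimit measure is already $I$-invariant. Vague convergence is preserved under the pushforward by the continuous (indeed, homeomorphic) map $I$, so passing to the limit $n\to\infty$ using \cref{prop:B} yields $I_*\hat\nu=\hat\nu$.

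The only technically delicate step is verifying the a.e.\ identity $\ovv_\omega\circ R=-\ovv_\omega$: one needs to know that the set where $\ovv_\omega^+=\ovv_\omega^-$ is $R$-invariant up to a null set, which is automatic once one checks that $R$ maps orbits to reverse-time orbits and that the null set of non-coincidence is exchanged accordingly. Everything else is bookkeeping with pushforwards of measures.
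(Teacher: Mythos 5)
Your argument is correct and is essentially the paper's own proof: the same reversibility identity for the flow, the same use of $\ovv^+=\ovv^-$ a.e., the invariance of $\bR^d_p\times Q_n$ and $\lambda_n$ under $(p,q)\mapsto(-p,q)$, and passage of $I$-invariance to the vague limit. The "delicate" point you flag is also handled the same way implicitly in the paper, since $R$ exchanges forward and backward limits and thus maps the non-coincidence set onto itself.
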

\begin{proof}
  Consider $\omega\in\Omega$ with $\nu_\omega=\hat{\nu}$
  and $\ovv_\omega^+(p_0,q_0)=\ovv_\omega^-(p_0,q_0)$
  for $\lambda^{2d}$--a.e.\ phase space point $(p_0,q_0)\in\bR^{2d}$.
  (By \cref{prop:B,cor:A} $\beta$--a.e.~$\omega$ meets these conditions.)

  By \emph{reversibility} of the flow
  $\Phi_\omega^t=(p_\omega^t,q_\omega^t)$, that is
  \[p_\omega^t(-p_0,q_0)=-p_\omega^{-t}(p_0,q_0) \qtextq{,}
  q_\omega^t(-p_0,q_0)=q^{-t}_\omega(p_0,q_0)\,\text,\]
  we have $\ovv_\omega^+(-p_0,q_0)=-\ovv_\omega^-(p_0,q_0)$.
  Together this gives
  \[\ovv_\omega(-p_0,q_0)=-\ovv_\omega(p_0,q_0)
  \qquad (\lambda^{2d}\text{--a.e.})\text.\]
  On the other hand the phase space region $\bR^d\times Q_n$ as well as the 
  measure $\lambda_n$ on it are invariant w.r.t.\ the antisymplectic 
  transformation $(p,q)\mapsto(-p,q)$ on phase space.
  Thus the image measures $\nu_{n,\omega}= \Gamma_\omega(\lambda_n)$ 
  are $I$--invariant.
  This carries over to the vague limit $\nu_\omega=\hat{\nu}$.
\end{proof}
\begin{defi}
For $\omega\in\Omega$ a phase space point $x_0=(p_0,q_0)\in\bR^{2d}$ is called
\emph{forward} resp.\ \emph{backward bounded} if
\[q_\omega\bigl([0,\infty),x_0\bigr) \qtextq{resp.} 
  q_\omega\bigl((-\infty,0],x_0\bigr)\]
are bounded subsets of configuration space $\bR^d$.
\end{defi}
Note that $\lambda^{2d}$--a.e.\ $x_0\in\bR^{2d}$ is simultaneously bounded
or unbounded in both time directions.
This is a direct consequence of the flow invariance of $\lambda^{2d}$.
\begin{proposition}
  For $E>V_{\ess\max}$ and for $\beta$--a.e.\ $\omega\in\Omega$ for
  every initial position $q_0\in\bR^d$ there exists an initial direction
  $p_0\in\bR^d$ with $H_\omega(p_0,q_0)=E$ and $(p_0,q_0)$ forward 
  unbounded, with positive minimal speed (\/$\inf_{t>0}\norm{(q(t)-q_0)/t}>0$).
\end{proposition}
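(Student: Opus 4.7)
The plan is to reduce the problem to finding a geodesic ray in a suitable Riemannian metric via the Jacobi--Maupertuis principle. First I invoke \cref{rem:A} (together with its closing assertion $\supp(V_\omega(\lambda^d))=\ov{V_\omega(\bR^d)}$) to conclude that for $\beta$-a.e.\ $\omega\in\Omega$ one has the pointwise bound $V_\omega(q)\leq V_{\ess\max}$ for every $q\in\bR^d$. Fix such an $\omega$ and an arbitrary $q_0\in\bR^d$ for the remainder of the argument.

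Let $g_\omega:=2(E-V_\omega)\,\langle\cdot,\cdot\rangle$ denote the Jacobi metric associated with the energy value $E$, where $\langle\cdot,\cdot\rangle$ is the Euclidean inner product on $\bR^d$. Its conformal factor is bounded between $2(E-V_{\ess\max})>0$ and $2(E-V_{\ess\min})$, so $g_\omega$ is bi-Lipschitz equivalent to the Euclidean metric; in particular, $(\bR^d,g_\omega)$ is complete. Fix any Euclidean unit vector $e\in\bR^d$. By the Hopf--Rinow theorem, for each $n\in\bN$ there exists a minimising $g_\omega$-geodesic $\gamma_n\colon[0,L_n]\to\bR^d$ (parameterised by $g_\omega$-arc length) joining $q_0$ to $q_0+ne$, and the lower Lipschitz bound forces $L_n\to\infty$. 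Passing to a subsequence so that the $g_\omega$-unit initial tangents $v_n:=\dot\gamma_n(0)$ converge to some $v_*$, completeness yields a geodesic $\gamma_*\colon[0,\infty)\to\bR^d$ with $\gamma_*(0)=q_0$ and $\dot\gamma_*(0)=v_*$. A standard uniform-convergence argument shows that every restriction $\gamma_*|_{[0,T]}$ inherits the minimising property, so $\gamma_*$ is a geodesic ray with $d_{g_\omega}(q_0,\gamma_*(s))=s$ for all $s\geq0$; this already forces the trace of $\gamma_*$ to be unbounded, as otherwise it would lie in a compact set with uniformly bounded $g_\omega$-distance from $q_0$.

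To extract the required momentum, I reparameterise in physical time via $ds/d\tau=2(E-V_\omega(\gamma_*(s)))$, which converts $\gamma_*$ into the configuration projection of the unique Hamiltonian trajectory at energy $E$ with initial data $q(0)=q_0$ and $p_0:=2(E-V_\omega(q_0))\,v_*$; one checks directly from $\|v_*\|_{g_\omega}=1$ that $H_\omega(p_0,q_0)=E$. Combining the straight-line upper bound $d_{g_\omega}(q_0,q)\leq\sqrt{2(E-V_{\ess\min})}\,\|q-q_0\|$ (which gives $\|\gamma_*(s)-q_0\|\geq s/\sqrt{2(E-V_{\ess\min})}$) with $ds/d\tau\geq2(E-V_{\ess\max})$ yields $\|q(\tau)-q_0\|\geq c\tau$ for some explicit $c>0$, establishing both forward unboundedness and positive minimal speed. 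The main obstacle is verifying that the limit curve $\gamma_*$ is genuinely a forward-minimising ray; this rests on Hopf--Rinow together with the closedness of the minimising property under uniform convergence of geodesics on compact time intervals. Once this is granted, the remaining estimates are routine bi-Lipschitz bookkeeping.
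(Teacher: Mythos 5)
Your argument is correct and is essentially the paper's own proof: the $\beta$-a.s.\ pointwise bound $V_\omega\le V_{\ess\max}$ from \cref{rem:A}, completeness of the Jacobi metric, Hopf--Rinow minimising geodesics to points tending to infinity, compactness of the unit sphere of initial directions, the Jacobi--Maupertuis correspondence with the Hamiltonian trajectory, and the two-sided conformal-factor bounds for the minimal-speed estimate. The only difference is presentational: you spell out the limit-ray step (minimality passing to the limit under uniform convergence on compact intervals), which the paper leaves implicit when it extracts the accumulation direction $v_\infty$ and carries the distance lower bound along.
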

\begin{proof}
  For $\beta$-a.a.\ $\omega\in\Omega$ we have
  $V_\omega(q)\leq V_{\ess\max}$ for all $q\in\bR^d$
  (see \cref{rem:A} and use the continuity of $V_\omega$).
  Thus the Jacobi metric $g_{E,\omega}$%
  \nomenclature[AgEomega]{$g_{E,\omega}$}{Jacobi metric}{}{}
  on configuration space~$\bR^d$, given by
  \begin{equation}\label{eq:JacobiMetric}
    g_{E,\omega}(q)=\bigl(E-V_\omega(q)\bigr)g_\Euclid\text,
  \end{equation}
  is non--degenerate,
  and the riemannian manifold $(\bR^d,g_{E,\omega})$ is geodesically complete, 
  since the conformal factor is bounded below by $E-V_{\ess\max}>0$.
  So the Hopf--Rinow theorem (see, \emph{e.g.}, \cite{GHL48}, Thm.\ 2.103) implies
  for any $q_n\in\bR^d$ the existence of an initial direction 
  $v_n\in S_{q_0}:=\{v\in\bR^d\mid g_{E,\omega}(q_0)(v,v)=1\}$%
\nomenclature[ASqzero]{$S_{q_0}$}{$g_{E\omega}$--sphere at $q_0$}{}{}
  such that the forward geodesic $t\mapsto\gamma(t,q_0,v_n)$%
\nomenclature[Ggamma]{$\gamma$}{geodesic}{}{}
  with initial condition $(q_0,v_n)$ meets~$q_n$ first at a time~$t_n\ge0$,
  and is a shortest such geodesic so that 
  \[\norm{\gamma(t,q_0,v_n)-q_0}\ge \sqrt{2(E-V_{\ess\max})}\,t 
  \qquad \bigl(n\in\bN,\ t\in[0,t_n]\bigr)\text.\] 
  For $\lim_{n\ar\infty} \norm{q_n-q_0}=\infty$,
  $\lim_{n\ar\infty} t_n=+\infty$, and
  by compactness of $S_{q_0}$ there is an accumulation point
  $v_\infty\in S_{q_0}$ of the $v_n$
  leading to forward unbounded geodesic motion.
  \par
  Up to parametrisation, the geodesic of
  $g_{E,\omega}$ with initial condition $(q_0,v_\infty)$
  coincides with the trajectory $q_\omega(t,x_0)$ (with
  $x_0 := \bigl(2\bigl(E-V_\omega(q_0)\bigr)\,v_\infty,q_0\bigr)$ 
  as initial condition), see \cite[Thm.\ 3.7.7]{AM78}.
  The positivity of the minimal speed is preserved.
\end{proof}
In dimensions $d\ge2$ bounded and unbounded motion can coexist 
$\beta$--a.s.\ for energies $E>V_{\ess\max}$ as well as for $E<V_{\ess\max}$.
In one dimension this is not possible:
\begin{proposition}\label{prop:C}
  For $d=1$ and $\beta$--a.e.\ $\omega\in\Omega$ the motion 
  through $x_0=(p_0,q_0)$ is 
  bounded if $E:=H_\omega(x_0)<V_{\ess\max}$ and 
  unbounded with asymptotic velocity
  \begin{equation*}\label{E}
    \ovv_\omega(x_0)
      =\frac{\ell_1}{\bE_\beta\bigl(\tau(x_0)\bigr)}\qtextq{,}
    \tau_\omega(x_0)
      =\int_0^{\ell_1}\frac{\sign(p_0)}{\sqrt{2(E-V_\omega(q))}}\,dq
  \end{equation*}
\nomenclature[Gtauomega]{$\tau_\omega$}{passage time for fundamental interval}{}{}%
  if $E>V_{\ess\max}$.
\end{proposition}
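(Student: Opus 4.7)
The strategy is to treat the bounded regime $E<V_{\ess\max}$ and the unbounded regime $E>V_{\ess\max}$ separately: in dimension one, energy conservation determines each orbit up to the sign of~$p$, and the lattice $\cL=\ell_1\bZ$ yields a single $\vartheta$-shift to feed into Birkhoff's ergodic theorem. I start with $E<V_{\ess\max}$. By \cref{rem:A}, $\beta$-a.s.\ the range $\ov{V_\omega(\bR)}$ coincides with $[V_{\ess\min},V_{\ess\max}]$, so the event
\[A:=\bigl\{\omega'\in\Omega:\max_{q\in[0,\ell_1]}V_{\omega'}(q)>E\bigr\}\]
has positive $\beta$-measure (otherwise Fubini would force $V(\omega',q)\le E$ for $\beta\otimes\lambda^d$-a.e.\ $(\omega',q)$, contradicting $E<V_{\ess\max}$). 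Ergodicity of the $\vartheta$-action then ensures that, for $\beta$-a.e.\ $\omega$, the orbit $\{\vartheta_{k\ell_1}\omega\}_{k\in\bZ}$ returns to~$A$ with positive density, in particular infinitely often as $k\to+\infty$ and as $k\to-\infty$. Using $V_{\vartheta_\ell\omega}(q)=V_\omega(q+\ell)$, this yields points $q_n^\pm\to\pm\infty$ with $V_\omega(q_n^\pm)>E$, which act as impenetrable barriers via $\tfrac12 p^2=E-V_\omega(q)\ge0$ and confine the orbit through $x_0$ to a bounded interval.

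For $E>V_{\ess\max}$, \cref{rem:A} again gives $V_\omega(q)\le V_{\ess\max}<E$ everywhere, $\beta$-a.s., so $p_\omega^t(x_0)^2\ge2(E-V_{\ess\max})>0$ along the orbit and $\sign(p_\omega^t)$ is constant. I then assume $p_0>0$ (the case $p_0<0$ follows by reversibility and accounts for the $\sign(p_0)$ factor in the displayed $\tau_\omega$). Monotonicity of $q_\omega^t$ and separation of variables in $\dot q=\sqrt{2(E-V_\omega(q))}$ give the time $T_n$ needed to travel from $q_0$ to $q_0+n\ell_1$ as
\[T_n=\int_{q_0}^{q_0+n\ell_1}\frac{dq}{\sqrt{2(E-V_\omega(q))}}=\sum_{k=0}^{n-1}h(\vartheta_{k\ell_1}\omega),\]
where $h(\omega'):=\int_{q_0}^{q_0+\ell_1}dq/\sqrt{2(E-V_{\omega'}(q))}$ is bounded by $\ell_1/\sqrt{2(E-V_{\ess\max})}$ and hence lies in $L^\infty(\Omega,\beta)$. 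Birkhoff's theorem for the $\vartheta$-action delivers $T_n/n\to\bE_\beta(h)$ $\beta$-a.s., so $\ovv_\omega(x_0)=\lim n\ell_1/T_n=\ell_1/\bE_\beta(h)$.

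The delicate step I expect to dwell on is the identification of $\bE_\beta(h)$ with the $\cL$-periodic quantity $\bE_\beta(\tau)$ for $\tau(\omega):=\int_0^{\ell_1}dq/\sqrt{2(E-V_\omega(q))}$ appearing in the statement, in order to remove the dependence of the limit on the (non-lattice) starting point~$q_0$. This is a Fubini exchange: by $\vartheta$-invariance of~$\beta$, the function $q\mapsto\bE_\beta\bigl(1/\sqrt{2(E-V_\omega(q))}\bigr)$ is $\ell_1$-periodic, so its integral over any interval of length~$\ell_1$ equals its integral over $[0,\ell_1]$, giving $\bE_\beta(h)=\bE_\beta(\tau)$ as required.
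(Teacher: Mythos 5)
Your overall strategy is the same as the paper's: barriers on both sides for $E<V_{\ess\max}$, and for $E>V_{\ess\max}$ writing the crossing time of $n$ fundamental cells as a Birkhoff sum $T_n(\omega)=\sum_{k=0}^{n-1}g(\vartheta_{k\ell_1}\omega)$ of a bounded function and applying the ergodic theorem. Your final Fubini/periodicity step identifying $\bE_\beta\bigl(\int_{q_0}^{q_0+\ell_1}\cdots\bigr)$ with $\bE_\beta(\tau)=\bE_\beta\bigl(\int_0^{\ell_1}\cdots\bigr)$ is a point the paper leaves implicit, and it is correct.

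There is, however, a genuine gap in the quantifier structure. In your argument the exceptional $\beta$--null set depends on the initial condition: in the unbounded case the Birkhoff function $h$ depends on $q_0$ and on $E$, and in the bounded case the event $A$ depends on $E$. So what you prove is ``for every $x_0$, for $\beta$--a.e.\ $\omega$ \ldots'', with the null set varying with $x_0$, whereas the proposition asserts (and the paper proves) the stronger statement ``for $\beta$--a.e.\ $\omega$, for all $x_0$ with $E\gtrless V_{\ess\max}$ \ldots''. Since $x_0$ ranges over an uncountable set, you cannot simply intersect the null sets. The paper closes exactly this gap in its last step: having the formula for $\lambda^2$--a.e.\ $x_0$ (for a fixed good $\omega$), it uses that $\frac1T\int_0^Tp_\omega(t,p_0,q_0)\,dt$ is monotone in $p_0$ while the limiting expression $\ell_1/\bE_\beta(\tau)$ is continuous in $p_0$, to conclude the formula for \emph{all} $x_0$. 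Your proof needs an analogous upgrade, e.g.\ restricting first to a countable dense set of energies (and to $q_0=0$, using that every unbounded monotone orbit crosses $q=0$ and that $\ovv$ is constant along orbits, while the limit depends only on $E$ and $\sign(p_0)$) and then passing to general $E$ by monotonicity of the travel--time integrand in $E$; in the bounded case the fix is easier, since barriers constructed for a rational $E'\in(E,V_{\ess\max})$ also confine orbits of energy $E$. A minor further omission, shared with the paper's terseness, is the interpolation from the subsequence $T_n$ to continuous time $T\to\infty$ (harmless here since $q$ is monotone and $T_{n+1}-T_n$ is bounded).
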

\begin{remark}
  Here the asymptotic velocity depends on $(\omega,x_0)\in P$
  only via $H_\omega(x_0)$ and $\sign(p_0)$.
  \Cref{abb:periodVel2} 
  shows the shape of $\supp(\hat\nu)\subseteq\bR^2$.
\end{remark}
\begin{figure}[h]
\begin{center}
  \ifpdf
    \includegraphics[viewport=90 466 330 729,width=3cm,clip]{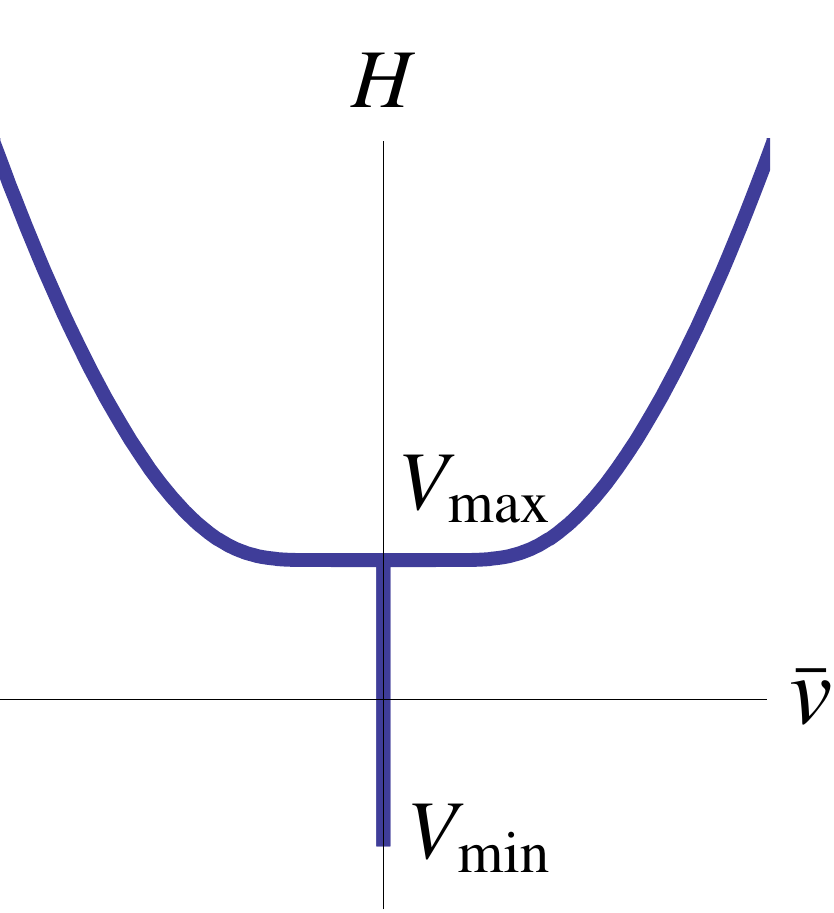}
  \else
    \includegraphics[viewport=90 466 330 729,width=3cm,clip]{periodVel2.ps}
  \fi
\end{center}
  \caption{Shape of $\supp(\hat\nu)\subseteq\bR^2$}
  \label{abb:periodVel2}
\end{figure}
\begin{proof}
\begin{itemize}[$\bullet$]
\item For $E<V_{\ess\max}$ the connected component of $x_0$ in
  $H_\omega^{-1}(E)$ is compact $\beta$--a.s.,
  since there exist $q^\pm\in\bR$ with $V_\omega(q^\pm)>E$ and $q^-<q_0<q^+$.
\item For $E>V_{\ess\max}$ the asymptotic velocity,
  if it exists for $x_0$, equals
  \[\ovv(\omega,x_0)=\lim_{T\to\infty} \frac{1}{T} \int_0^T p_\omega(t,x_0)\,dt
  =\lim_{n\to\infty}\frac{\int_0^{T_n(\omega)}p_\omega(t,x_0)\,dt}
  {T_n(\omega)}\]
  with $T_n(\omega)$ for $n\in\bN$ 
  determined uniquely by $q_\omega(T_n(\omega),x_0)-q_0 =n\ell_1$.
  Here we assume w.l.o.g.\ $\ell_1>0$ and $p_0>0$.
  The numerator equals
  \[\int_0^{T_n(\omega)}\dot{q}_\omega(t,x_0)\,dt = q_\omega(T_n(\omega),x_0)-q_0
  =n\ell_1\text,\]
  whereas
  \[T_n(\omega)=\sum_{k=0}^{n-1}\int_{q_0+k\ell_1}^{q_0+(k+1)\ell_1}\frac
  {1}{\sqrt{2(E-V_\omega(q))}}\,dq\]
  for the denominator.
  Setting 
  \[g\colon\Omega\to\bR \qtextq{,} 
  g(\omega) := \int_{q_0}^{q_0+\ell_1} \frac{1} {\sqrt{2(E-V_\omega(q))}}\,dq\text,\]
  we get $T_n(\omega)=\sum_{k=0}^{n-1}$ $g\bigl(\vartheta_{k\ell_1}(\omega)\bigr)$.
  By continuity of~$g$, we can
  apply Birkhoff's theorem and get by the ergodicity assumption on~$\beta$
  \[\lim_{n\to\infty}\frac{1}{n}T_n(\omega)=\lim_{n\to\infty}\frac{1}{n}
  \sum_{k=0}^{n-1}g\bigl(\vartheta_{k\ell_1}(\omega)\bigr)=\bE(g)
  \qquad \text{$\beta$--a.s.}\text.
  \]
  This proves the assertion for $\lambda^2$--a.e.\ $x_0\in \bR^2$ 
  with $H_\omega(x_0) > V_{\ess\max}$.
  The expression $\frac{1}{T}\int_0^Tp_\omega(t,p_0,q_0)\,dt$
  is monotonically increasing in $p_0$, and our formula for
  $\ovv_\omega(x_0)$
  is continuous in~$p_0$.
  Thus it must be valid for \emph{all}~$x_0$.
\end{itemize}
\end{proof}
For $d\ge2$ it is an interesting question whether for large energies the
asymptotic velocity distribution given by~$\hat\nu$ is zero.
As the example below shows,
this is not always the case for non--trivial random potentials.
\begin{example}[Random Potential With Non-Zero Asymptotic Velocity]\label{ex:la}
We use the cutoff function 
$F\in C_c^\infty(\bR^d,[0,1])$ from \cref{rem:never}
for the lattice $\cL:=\bZ^d$.
Given single site potentials $\tW_j$%
\nomenclature[AWhatj]{$\tW_j$}{spatially restricted single site potential}{}{}
with $\supp(\tW_j)\subseteq\l[\frac{1}{4},\frac{3}{4}\ri]^d$, we set
$W_j(q):=\tW_j(q)-\sum_{k=1}^d\cos(2\pi q_k)F(q)$.
Then the random potential equals
\[V_\omega(q)\; = \; {\textstyle \sum_{\ell\in\cL}}W_{\omega(\ell)}(q-\ell)
\; = \; \tV_\omega(q)-{\textstyle \sum_{k=1}^d}\cos(2\pi q_k)\]
with $\tV_\omega(q):=\sum_{\ell\in\cL} \tW_{\omega(\ell)}(q-\ell)$%
\nomenclature[AVomegahat]{$\tV_\omega$}{spatially restricted random potential}{}{}.

Thus the Hamiltonian function $H_\omega\colon\bR^d\times\bR^d\to\bR$
takes the form
\[H_\omega(p,q) = \tV_\omega(q)+ {\textstyle \sum_{k=1}^d} H_\omega^{(k)}(p_k,q_k)\]
with $H_\omega^{(k)}\colon\bR^d\to\bR,\ (p_k,q_k)\mapsto\frac{1}{2}p_k^2-\cos
(2\pi q_k)$%
\nomenclature[AHomegak]{$H_\omega^{(k)}$}{separated hamiltonian function}{}{}.
The phase space regions
\[P_k:=\bigl\{(p,q)\in\bR^d\times(\bR^d\setminus S)\mid
\forall\,m\in\{1,\ldots,d\}\setminus\{k\}:H_\omega^{(m)}(p_m,q_m)<0\bigl\}\]
\nomenclature[APk]{$P_k$}{phase space region}{}{}%
with $S:=
\cL+\l[\ev,\dv\ri]^d$ %
\nomenclature[AS]{$S$}{enlarged lattice}{}{}%
are invariant w.r.t.\ $\Phi_\omega^t$, since $\tV_\omega(q)=0$ for 
$q\in\bR^d\setminus S$, and thus the motion separates.

For all $x=(p,q)\in P_k$ the asymptotic velocity exists.
If in addition $E:=H_\omega^{(k)}(p_k,q_k)>1$,
then $\ovv^\pm(x)=(0,\ldots,0,\ovv^{(k)},0,\dotsc,0)$ with
$\ovv^{(k)}:=\frac{\sign(p_k)\pi\sqrt{E-1}}{\sqrt{2}\cK(2/(1-E))}$, $\cK$%
\nomenclature[AK]{$\cK$}{complete elliptic integral of the first kind}{}{}
being the complete elliptic integral of the first kind (see \cite{AK98}).

As $E\nearrow\infty$, the intersections $P_k\cap H_\omega^{-1}(E)$ have
density w.r.t.\ Liouville measure on $H_\omega^{-1}(E)$ scaling like
$E^{-(d-1)/2}$.

Thus for no total energy strictly above $2-d$
the distribution of asymptotic velocity is concentrated in zero.
\end{example}
%
\section{Poisson Potentials}\label{sec4}
%
Compared to the lattice case handled above, potentials based on marked
Poisson fields have some new features like unboundedness and invariance
w.r.t.\ $\bR^d$--translations.
So we discuss them in this section.
Many properties should generalise to other ergodic random potentials (like
gaussian potentials).

Again we start with single site potentials~$W_j$ indexed by~$j\in J$,
$\setsize J<\infty$,
but we assume for simplicity that for some~$\eta\ge2$
\[W_j\in C_c^\eta(\bR^d,\bR)\text.\]
\nomenclature[AWj]{$W_j$}{compactly supported single site potential}{}{}%
We now consider the marked Poisson process with space
\begin{align*}
  \tilde{\Omega}
    :=\bigl\{\omega\bigm|\omega&\text{ measure on
      $\bigl(\bR^d\times J,\ \cB(\bR^d\times J)\bigr)$ with}\\&
      \omega(K)\in\bN_0\text{ if $K\subseteq\bR^d\times J$ is compact}
    \bigr\}\text.
\end{align*}
\nomenclature[GOmegatilde]{$\tilde\Omega$}{counting measures on $\bR^d\times J$}{}{}%
$\tilde{\Omega}$ is the space of all counting measures on $\bR^d\times J$
and carries the vague topology generated by the basis consisting of the sets
\beq
\hspace*{-2mm}\cN_\omega(\psi_{0},\dotsc,\psi_k):=
      \bigl\{\omega'\in\tilde{\Omega}\bigm|\forall i\in\{0,\dotsc,k\}\colon
      \abs{\textstyle\int\psi_i\, d \omega'-\int\psi_i\, d \omega}<1\bigr\}
\Leq{eq:cN}
\nomenclature[ANomega]{$\cN_\omega$}{basis for vague topology}{}{}%
with $k\in\bN_{0}$, $\omega\in\tilde{\Omega}$ and $\psi_i\in C_c(\bR^d\times J,\bR)$
  ($i\in\{0,\dotsc,k\}$), see, \emph{e.g.}\ \cite{Rue87}.
  For $j\in J$ and compact $K\in\cB(\bR^d)$ the random variables
\[N_{K,j}\colon\tilde{\Omega}\to\bN_{0}\qtextq{,}
    \omega\mapsto\omega(K\times \{j\})\text,\]
\nomenclature[ANKj]{$N_{K,j}$}{particle number function}{}{}%
are called \emph{particle number functions}.
We fix intensities $\rho_j{}\ge0,\ j\in J$%
\nomenclature[Grhoj]{$\rho_j$}{intensity of Poisson process}{}{}.
Then $\beta$%
\nomenclature[Gbeta]{$\beta$}{Poisson measure}{}{}
is the unique probability measure on 
$\bigl(\tilde{\Omega},\cB(\tilde{\Omega})\bigr)$ with
\beq
    \beta\bigl(\{\omega\in\tilde{\Omega}\mid N_{K,j}(\omega)=m\}\bigr)
      =\frac{\bigl(\rho_j\lambda^d(K)\bigr)^m}{m!\exp\bigl(\rho_j\lambda^d(K)\bigr)} 
\Leq{beta:def}
  for all $m\in\bN_{0}$, 
  $j\in J$ and $K\in\cB(\bR^d)$ with $\lambda^d(K)<\infty$.

$\bigl(\bR^d\times J,\tilde{\Omega},\cB(\tilde{\Omega}),\beta\bigr)$ is called
\emph{marked Poisson process on $\bR^d$
with marks in~$J$ and intensities~$\rho_j$}, see \cite[Chap.~4.2]{SKM87}.
It induces the random potential
\[V\colon\tilde{M}:=\tilde{\Omega}\times\bR^d \longrightarrow\bR\qtextq{,}
  (\omega,q) \longmapsto \int_{\bR^d\times J} W_{j}(q-x)\, d\omega(x,j)\text.\]
\nomenclature[AV]{$V$}{Poisson potential}{}{}%
\begin{proposition}
  The potential $V\colon\tilde{\Omega}\to\bR$ is continuous, and
  \[V_\omega\in C^\eta(\bR^d,\bR)\qquad (\omega\in\tilde{\Omega})\text.\]
  There is a $\beta$--measure zero subset $N\subseteq\tilde{\Omega}$%
\nomenclature[AN]{$N$}{exceptional set of poisson configurations}{}{}
  such that for $\Omega:=\tilde{\Omega}\setminus N$ and extended phase space 
  $P:=\Omega\times\bR^d\times\bR^d$ the restriction 
  $H:=\tilde{H}\rstr_P$ of the Hamiltonian function
  \[\tilde{H}\colon\tilde{P}\to\bR \qtextq{,}
  (\omega,p,q)\mapsto\eh\norm p^2+V(\omega,q)\]
  induces a continuous Hamiltonian flow
  \[\Phi\colon\bR\times P\to P\text.\]
\end{proposition}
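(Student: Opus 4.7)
The plan is to handle the three claims in order. For $V_\omega\in C^\eta$, I would exploit the compact support of each $W_j$ together with local finiteness of counting measures: setting $R_W:=\max_{j\in J}\diam(\supp W_j)<\infty$, only Poisson points $(x,j)$ with $x\in\overline{B(q,R_W)}$ contribute to $V_\omega(q)$, and $\omega(\overline{B(q,R_W)}\times J)<\infty$ since $\omega$ is a counting measure. On any bounded open set this bound is locally uniform in $q$, so the restriction of $V_\omega$ is a finite sum of translates of the $C^\eta$ functions $W_j$, hence $C^\eta$.

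For joint continuity of $V$ at $(\omega_0,q_0)\in\tilde M$, I would combine the vague-topology basis \eqref{eq:cN} with uniform continuity of the $W_j$. Choose $\chi\in C_c(\bR^d,[0,1])$ with $\chi\equiv1$ on $\overline{B(q_0,R_W+1)}$ and $\supp\chi\subseteq B(q_0,R_W+2)$, and for each $j\in J$ let $\psi_j\in C_c(\bR^d\times J,\bR)$ be given by $\psi_j(x,j'):=\chi(x)$ if $j'=j$ and $0$ otherwise. Any $\omega\in\cN_{\omega_0}(\psi_j:j\in J)$ then satisfies $\omega(\overline{B(q_0,R_W+1)}\times J)\leq\omega_0(\supp\chi\times J)+\setsize{J}$, so this mass is uniformly bounded near $\omega_0$. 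Splitting
\[
V(\omega,q)-V(\omega_0,q_0)=\bigl[V(\omega,q)-V(\omega,q_0)\bigr]+\bigl[V(\omega,q_0)-V(\omega_0,q_0)\bigr],
\]
the first bracket is $O(\norm{q-q_0})$ by uniform continuity of the $W_j$ against the bounded mass, while the second tends to zero as $\omega\to\omega_0$ vaguely, applied to the compactly supported test function $(x,j)\mapsto W_j(q_0-x)\chi(x)$ (which equals the integrand $W_j(q_0-x)$ throughout the support of the latter, by the choice of $\chi$).

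For the flow, local existence and uniqueness of solutions to $\dot p=-\nabla V_\omega(q)$, $\dot q=p$ follow from Picard--Lindelöf using that $\nabla V_\omega$ is locally Lipschitz (since $\eta\geq2$). To obtain global existence off a $\beta$-null set $N$, I would show that for $\beta$-a.e.\ $\omega$ the counting function $R\mapsto\sup_{\norm{q}\leq R}\sum_{j\in J}N_{\overline{B(q,R_W)},j}(\omega)$ has polynomial growth of degree strictly less than~$2$ in~$R$: each $N_{\overline{B(q,R_W)},j}$ is Poisson with bounded mean and hence has a Chernoff tail, and a union bound along a grid in $B(0,R)$ combined with Borel--Cantelli along $R=2^k$ gives the null set. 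For $\omega\in\Omega:=\tilde\Omega\setminus N$ the pointwise estimate $|V_\omega(q)|\leq(\max_j\norm[\infty]{W_j})\sum_{j\in J}N_{\overline{B(q,R_W)},j}(\omega)$ yields sub-quadratic growth of $|V_\omega|$; energy conservation $|p(t)|^2=2(E-V_\omega(q(t)))$ together with $|\dot q|=|p|$ and a Gronwall estimate then rule out finite-time blow-up, so $\Phi$ is complete on $\Omega$. Joint continuity of $\Phi$ in $(t,\omega,p,q)$ follows from the continuity of $V$ together with the standard theorem on continuous dependence of ODE solutions on parameters.

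I expect the main obstacle to be the proof that $\beta(N)=0$, which requires discretising $B(0,R)$ on a scale compatible with the jump set of $q\mapsto\sum_j N_{\overline{B(q,R_W)},j}(\omega)$ (piecewise constant, jumping by $\pm1$ when $q$ crosses the $R_W$-sphere of a Poisson point), controlling the dependence between the Poisson counts at neighbouring grid nodes, and balancing the $O(R^d)$-size union bound against the Poisson tail. The joint-continuity step is comparatively routine once one recognises that a single judicious test function in the vague basis suffices to control the local mass uniformly in a vague neighbourhood.
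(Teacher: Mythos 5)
Your proposal is correct, but the continuity step follows a genuinely different route from the paper's. The paper constructs, for each $(\omega,q)$, test functions adapted to the actual configuration: one function detecting that a nearby $\omega'$ has \emph{no} points outside small balls around the finitely many Poisson points in $B_R(q)$, and one per point pinning down its multiplicity; continuity then comes from a single Lipschitz estimate comparing the two finite sums point by point. You instead split $V(\omega,q)-V(\omega_0,q_0)$ into a $q$-increment at fixed $\omega$, controlled by the Lipschitz constant of the $W_j$ times a local mass bound that a single cutoff test function per mark makes uniform over a vague neighbourhood, and an $\omega$-increment at frozen $q_0$, which is just vague convergence tested against the fixed compactly supported integrand $(x,j)\mapsto W_j(q_0-x)\chi(x)$. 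This is arguably cleaner, since it never needs the minimal inter-point distance $d_{\min}$ or configuration-adapted test functions; what the paper's finer construction buys is explicit control of the nearby configurations themselves, which you do not need for this statement. For the flow, your plan is essentially the paper's: the paper defines $N$ by the failure of the bound $V_\omega(q)\ge -c\LA q\RA^2$, shows $\beta(N)=0$ by covering $B_R(0)$ with $\cO(R^d)$ balls of fixed radius and using the Poisson tail (a value $V_\omega(q)\le -R^2$ forces at least $R^2/W_{\max}$ points in a fixed ball), and then gets completeness from $\norm{p}\le C\LA q\RA$ and an exponential growth estimate. You prove the slightly stronger statement that the local point counts grow subquadratically a.s., but the mechanism (grid of $\cO(R^d)$ points, Poisson/Chernoff tail, Borel--Cantelli along dyadic radii, then energy conservation plus Gronwall) is the same; note that the dependence between neighbouring grid balls you worry about is irrelevant, since a union bound needs no independence, and the discretisation is handled exactly as in the paper by observing that every ball $\ov{B}(q,R_W)$ with $\norm q\le R$ lies in $\ov{B}(Q,2R_W)$ for a grid point $Q$. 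Two trivial repairs: $R_W$ should be the maximal radius $\max_j\sup\{\norm y\mid y\in\supp W_j\}$ rather than the maximal diameter (supports need not be centred at the origin), and for continuity of the flow one should note that the same two-bracket argument gives joint continuity of $(\omega,q)\mapsto\nabla_qV(\omega,q)$ (using $\eta\ge2$), which is what the continuous-dependence theorem actually consumes -- a point the paper glosses over as well.
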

\begin{proof}
\begin{itemize}[$\bullet$]
\item To show continuity of~$V$, we construct for $\vep>0$ a neighbourhood
  \begin{equation*}
    U(\omega,q)
     \,:=\,\cN_\omega(\psi_1,\ldots,\psi_k)\times B_{\delta}(q)
     \,\subseteq\,\tilde M
     \,=\,\tilde\Omega\times\bR^d
  \end{equation*}
  \nomenclature[ABdelta]{$B_\delta$}{ball of radius~$\delta$}{}{}%
  of $(\omega,q){}\in\tilde\Omega\times\bR^d$, such that  $\abs{V(\omega',q')-V(\omega,q)}<\vep$
  for all $(\omega',q')\in U(\omega,q)$.
  For radius $R:=\sup\{\norm x\mid x\in\Union_{j\in J}\supp(W_j)\}+1$
  only the finitely many Poisson points in
  \begin{equation*}
    {\cS_{R,q}}:=\supp(\omega)\cap(B_R(q)\times J)
      \equiv\{(q_1,j_1),\ldots,(q_k,j_k)\}
  \end{equation*}
  \nomenclature[AS]{$\cS_{R,q}$}{localised support of poisson configuration}{}{}%
  can contribute to $V(\omega,q)$, and their minimal distance
  \begin{equation*}
    d_{\min}:=\inf \{\norm{q_m-q_n}\mid 1\leq m,n\leq k,q_m\ne q_n\}
  \end{equation*}
  is positive.
  We will need the loci~$\cS_{R,q}^j(\omega)$ of the support of~$\omega$
  with index~$j\in J$, implicitely given by
  \begin{equation*}
    {\cS_{R,q}} = \Union_{j\in J}\cS_{R,q}^j(\omega)\times\{j\}\text,
  \end{equation*}
  and a common Lipschitz constant~$L$ for the single site potentials
  $(W_j)_{j\in J}$.
  
  With
  $\delta:=\min\bigl\{\frac{\vep}{2L\,\omega(\cS_{R,q})},
                      \frac{d_{\min}}3,1\bigr\}>0$
  we choose $A_j:=B_R(q)\setminus B_\delta(\cS_{R,q}^j(\omega))$ and,
  employing the Kronecker delta $\delta_{\cdot,\cdot}$
  and the positive part $(\cdot)_+:=\max\{0,\cdot\}$,
  functions $\psi_i\in C_c(\bR^d\times J,\bR)$ with
  \begin{equation*}
    \psi_i(x,j):=\begin{cases}
      \bigl(1-\dist(x,A_j)/\delta\bigr)_+
	&(i=0)\\
      \bigl(1-\delta_{j,j_i}\dist(x,B_\delta(q_i))/\delta\bigr)_+
	&(i\in\{1,\ldots,k\})
    \end{cases}
  \end{equation*}

  Thus $\int_{{\bR^d}}\psi_i\,d\omega=(1-\delta_{0,l})\omega\{(q_i,j_i)\}$,
  and by definition~\eqref{eq:cN} of $\cN_\omega(\psi_0,\ldots,\psi_k)$,
  all $(\omega',q')\in U(\omega,q)$ fulfil the relations 
  \begin{equation*}
    \norm{q'-q}\leq\delta\qtextq,\omega'(A_j\times\{j\})=0
    \qtextq{and}
    \omega'(B_\delta(q_i)\times\{j_i\})=\omega\{(q_i,j_i)\}\text,
  \end{equation*}
  $j\in J$.
  With this preparation we deduce
  \begin{align*}&
    \abs{V(\omega',q')-V(\omega,q)}\\&
      \le\sum_{j\in J}\Bigabs{
	\Bigl(\sum_{\tilde q\in\cS_{R,q}^j(\omega')}
	  W_j(q'-\tilde q)\omega'(\tilde q,j)\Bigr)
	-\sum_{\hat q\in\cS_{R,q}^j(\omega)}W_j(q-\hat q)\omega(\hat q,j)
	}\\&
      \le\sum_{j\in J}\sum_{\hat q\in\cS_{R,q}^j(\omega)}
        \Bigabs{\Bigl(\sum_{\tilde q\in\cS_{\delta,\hat q}^j(\omega')}
	  W_j(q'-\tilde q)\omega'(\tilde q,j)\Bigr)
	  -W_j(q-\hat q)\omega(\hat q,j)
	}\\&
      \le\sum_{j\in J}\sum_{\hat q\in\cS_{R,q}^j(\omega)}
        \sum_{\tilde q\in\cS_{\delta,\hat q}^j(\omega')}
          \abs{W_j(q'-\tilde q)-W_j(q-\hat q)}\omega'(\tilde q,j)\\&
      \le\sum_{j\in J}\sum_{\hat q\in\cS_{R,q}^j(\omega)}
        2L\delta\,\omega(\hat q,j)
      =2L\delta\,\omega(\cS_{R,q})
      \le\ve\text,
  \end{align*}
  since $\norm{q'-\tilde q-(q-\hat q)}
          \le\norm{q'-q}+\norm{\tilde q-\hat q}
          \le2\delta$.
\item This implies that $\tilde{H}\colon\tilde{P}\to\bR$, too, is continuous
  on $\tilde{P}:=\tilde{\Omega}\times\bR^{2d}$ and for all $\omega\in\tilde
  {\Omega}$ the Hamiltonian $\tilde{H}_\omega\in C^\eta(\bR^{2d},\bR)$.
  However, this only guarantees \emph{local} unique existence of the flow.
  We now set 
  \beq
    N:=\bigl\{\omega\in\tilde\Omega\bigm|\liminf_{\norm q\to\infty}V_\omega(q)+
    c\LA q\RA^2=-\infty\ \text{for\ all}\ c>0\bigr\}
  \Leq{N}
  For $\omega\in\tilde\Omega\setminus N$ and $E\in\bR$ there exists a $C>0$ with
  \[\sqrt{2(E-V_\omega(q))_+}\leq C\LA q\RA \qquad (q\in\bR^d)\text.\]
  Thus the solution of the initial value problem with energy~$E$
  exists for all times.
  Namely as
  \[\LA q_\omega(t)\RA\frac{d}{dt}\LA q_\omega(t)\RA\leq\norm{q_\omega(t)}\,
  \norm{p_\omega(t)}\leq C\LA q_\omega(t)\RA^2\text,\]
  $t\mapsto\norm{q_\omega(t)}$ is at most of exponential growth.
\item
  We show that $\beta(N)=0$.
  We set $W_{\max}:=\max\{\abs{W_j(q)}\mid(q,j)\in\bR^d\times J\}>0$, 
  ${\rm diam}:=\max_j{\rm diam\,}({\rm supp\, }W_j)$
  \nomenclature[AWmax]{$W_{\max}$}{supremum of single site potentials}{}{}
  and $N_{B_r(q),J}:=\sum_{j\in J}N_{B_r(q),j}$.
  $\beta(N)=0$ follows from $\lim_{R\to\infty}\beta(N_R)=0$ with 
  \beq
    N_R:=\bigl\{\omega\in\tilde\Omega\bigm|\min_{\norm q\le R}V_\omega(q)\le
      -R^2
      \bigr\}\text.
  \Leq{NR}
  Now for $V_\omega(q)\le -R^2$ to occur for any point 
  $q\in B_{\rm diam}(Q)$
  we must have
  \beq
    N_{B_{2 {\rm diam}(Q)},J}(\omega)\ge R^2/W_{\max}.
  \Leq{qQ}
  As $B_R(0)$ is ${\rm diam}$--spanned (see Walters \cite{Wal82}) 
  by a set ${\cal Q}\subseteq B_R(0)$ 
  of points $Q$ with cardinality $\setsize{{\cal Q}}=\cO(R^d)$, 
  \eqref{NR} follows from \eqref{qQ} and \eqref{beta:def}.
%
\item  
  Finally, continuity of the flow~$\Phi$
  follows from the theorem on continuous dependence of solutions on parameters.
\end{itemize}
\end{proof}

\begin{remark}[Comparison With Quantum Mechanics]
  We have
  \beq
    \bE_\beta\l(\abs{V_\centerdot(0)}^{\frac{d+\vep}{2}}\ri)\leq 
    (W_{\max})^{\frac{d+\vep}{2}}\, \bE_\beta(N_{B_R(0),J})^{\frac{d+\vep}{2}}
  \Leq{est}
  Estimate \eqref{est} should be compared with a criterion for essential 
  self-adjointness of the corresponding random Schr\"odinger operator
  $-\Delta+V$ on $L^2(\bR^d)$.
  This is generally true for $\vartheta$--ergodic~$V$
  if $\bE_\beta\bigl(\abs{V(0)}^{\tilde{d}}\bigr)<\infty$ for
  $\tilde{d}:=2\lceil\frac{d+1}{4}\rceil$, see \cite{PF92}, Theorem 5.1.
\end{remark}

Unlike in the lattice case the Poisson potential~$V$ is invariant w.r.t.\ a
faithful 
$\bR^d$--action on~$P$:
For $\ell\in\bR^d$ set 
\begin{equation}\label{poissonaction}
  \vartheta_\ell\colon\tilde{\Omega}\to\tilde{\Omega}\qtextq{,} 
  \vartheta_\ell(\omega)(A):=\omega(\{(\ell+x,j)\mid(x,j)\in A\})\text.
\end{equation}
\nomenclature[Gthetaell]{$\vartheta_\ell$}{$\bR^d$--action on $\tilde\Omega$}{}{}%
Then~$N$, defined in \eqref{N}, is $\vartheta$--invariant so that we get
the $\bR^d$ action
\[\Theta\colon\bR^d\times P\to P\qtextq{,}
(\ell,\omega,p,q)\mapsto(\vartheta_\ell(\omega),p,q-\ell)\]
\nomenclature[GTheta]{$\Theta$}{$\bR^d$--action on poissonian~$P$}{}{}%
on extended phase space, with
\[H\circ\Theta_\ell=H \qtextq{,} \mu\circ\Theta_\ell=\mu \qtextq{and}
\Phi^t\circ\Theta_\ell=\Theta_\ell\circ\Phi^t\]
for all $t \in\bR^d$ and $\ell\in\bR^d$.

In order to control the existence of the asymptotic velocities, we select
an arbitrary regular lattice $\cL\subseteq\bR^d$, \emph{e.g.}\ $\cL:=\bZ^d$, and set
$\Theta^\cL:=\Theta\rstr_{\cL\times P}$%
\nomenclature[GThetaL]{$\Theta^\cL$}{$\cL^d$--action on poissonian~$P$}{}{}.

Then, as in \cref{sec2} above, we consider the covering projection
\[\hat{\pi}\colon P\to\hat{P}:=P/\Theta^\cL\]
to the factor space $\tP$.
Like in \eqref{homeo}, $\tP$ is homeomorphic to
$(\Omega\times\bR^d\times\cD)/{\sim}$.

On $\tP$ we consider the measures $\tmu$ and $\tmu_E\ (E\in\bR)$, defined
like in \eqref{mu:hat}.

Again, by $\Theta$--invariance similar to \eqref{hat:p} we can define momenta
on $\tP$ by
\[\tp\colon\tP\to\bR^d\qtextq{,} \tp\circ\tpi=p\text.\]

But unlike in \cref{sec2}, the potential $V_\omega$
is $\beta$--a.s.\ unbounded, and thus $\tH^{-1}((-\infty,E])$
is compact only if the $W_j$ are non--negative.
Therefore we need the following lemma:
\begin{lemma}\label{lem:L}
For all $E\in\bR$ the measure $\tmu_E$ on $\tP$ is finite, and
\begin{equation}\label{eq:I}
  \int_\tP\norm{\tp}\,d\tmu_E<\infty.
\end{equation}
\end{lemma}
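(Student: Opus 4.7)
The plan is to compute $\tmu_E(\tP)$ explicitly by unfolding the definitions \eqref{mu:E} and \eqref{mu:hat}, integrate out the momentum variable, and reduce the remaining integral to a single expectation in $\omega$ using the translation invariance of the Poisson measure.

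First I would write, using Tonelli,
\begin{equation*}
  \tmu_E(\tP)
    = \mu\bigl(\bigl(\Omega\times\bR^d_p\times\cD\bigr)\cap H^{-1}((-\infty,E])\bigr)
    = \int_\Omega\int_\cD \lambda^d\bigl(\{p\in\bR^d\mid\eh\norm p^2\leq E-V_\omega(q)\}\bigr)\,dq\,d\beta(\omega)\text,
\end{equation*}
and observe that the inner $p$--integral is the volume $c_d\bigl(2(E-V_\omega(q))_+\bigr)^{d/2}$ of a Euclidean ball, with $c_d$ the volume of the unit ball in~$\bR^d$. Exactly the same unfolding, combined with $\int_{\norm p\le r}\norm p\,dp=c'_d r^{d+1}$, reduces \eqref{eq:I} to
\begin{equation*}
  \int_\tP\norm{\tp}\,d\tmu_E
    = c'_d\int_\Omega\int_\cD \bigl(2(E-V_\omega(q))_+\bigr)^{(d+1)/2}\,dq\,d\beta(\omega)\text.
\end{equation*}

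Next I would use the $\vartheta$--invariance of~$\beta$ together with the identity $V_{\vartheta_q(\omega)}(0)=V_\omega(q)$ (a direct consequence of \eqref{poissonaction}) to see that for each fixed $s\ge0$ the integral $\int_\Omega(E-V_\omega(q))_+^s\,d\beta(\omega)$ is independent of $q\in\bR^d$. Both quantities of interest thus factor as a constant times $\lambda^d(\cD)$ times the single expectation
\begin{equation*}
  \bE_\beta\bigl[(E-V_\centerdot(0))_+^s\bigr] \qtextq{with} s\in\{d/2,(d+1)/2\}\text.
\end{equation*}

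Finally, I would bound this expectation. Pick $R>0$ with $\supp(W_j)\subseteq B_R(0)$ for every $j\in J$. Then the compact support of the single site potentials gives
\begin{equation*}
  \abs{V_\omega(0)}\,\leq\,W_{\max}\cdot N_{B_R(0),J}(\omega)\text,
\end{equation*}
and by \eqref{beta:def} the random variable $N_{B_R(0),J}=\sum_{j\in J}N_{B_R(0),j}$ is a sum of independent Poisson variables, hence itself Poisson with parameter $\lambda^d(B_R(0))\sum_{j\in J}\rho_j$. All moments of a Poisson distribution are finite, so using $(E-V_\omega(0))_+^s\leq(\abs E+W_{\max}N_{B_R(0),J}(\omega))^s$ the expectation is finite for every $s\ge0$. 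This simultaneously yields $\tmu_E(\tP)<\infty$ and \eqref{eq:I}.

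No step is genuinely hard; the only point requiring care is the translation invariance argument, which does rely on the fact that \eqref{N} removes only a $\beta$--null set, so that working on $\Omega=\tilde\Omega\setminus N$ does not interfere with the use of Tonelli and the $\vartheta$--action.
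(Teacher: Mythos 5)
Your argument is correct and is essentially the paper's own proof: unfold $\tmu_E$ via Tonelli, integrate out the momentum to obtain the ball-volume factor $\bigl(2(E-V_\omega(q))_+\bigr)^{d/2}$ (exponent $(d+1)/2$ for \eqref{eq:I}), and reduce by stationarity of the Poisson measure to $\lambda^d(\cD)$ times an expectation of $(E-V(0))_+^s$. You merely make explicit what the paper leaves implicit, namely that this expectation is finite because $\abs{V_\omega(0)}\leq W_{\max}N_{B_R(0),J}(\omega)$ and all moments of the Poisson counting variable are finite.
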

\begin{proof}
We estimate, setting $r(q):=\sqrt{2(E-V(q))_+}$,
\[  \tmu_E(\tP)
    =\bE_\beta\l(
      \int_{\cD}\int_{\bR^d}\idty_{B_{r(q)}}(p)\,dp\,dq
      \ri)
    =\tau_d\,\lambda^d(\cD)\,
    \bE_\beta \l( 2\bigl( E-V(0) \bigr)_+^{\ d/2}\ri)
    <\infty
\]
with $\tau_d:=\lambda^d\bigl(B_1(0)\bigr)=\frac{\pi^{d/2}}{\Gamma(1+d/2)}$,
\nomenclature[Gtaud]{$\tau_d$}{Lebesgue volume of unit ball in $\bR^d$}{}{}%
and similar (but with exponent $(d+1)/2$) for~\eqref{eq:I}.
\end{proof}

Thus we get the analog of \cref{prop:A}:
\begin{proposition}
The asymptotic velocities
\ $\ovv^\pm(\omega,x_0)\,:=\,
\lim_{T\to\pm\infty}\frac{q_\omega(T,x_0)}{T}$
exist and are equal $\beta$--a.s.\ on~$P$.
Furthermore 
\[\ovv\in L^\infty_{\mathrm{loc}}(P) \qtextq{and} \ovv\circ\Phi^t=
\ovv\circ\Theta_\ell=\ovv\text.\]
\end{proposition}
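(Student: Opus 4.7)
The plan is to mirror the proof of \cref{prop:A}, now with \cref{lem:L} replacing the a priori pointwise bound $\norm{p}^2\le 2(E-V_{\min})$ used there, which is unavailable here because $V_\omega$ is unbounded below.

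For $x_0=(p_0,q_0)\in\bR^{2d}$ and $\omega\in\Omega$ I would first use $\dot q=p$ and the identity $\tp\circ\tpi=p$ (from $\Theta^\cL$--invariance of $p$) to rewrite
\[
  \ovv^\pm(\omega,x_0)=\lim_{T\to\pm\infty}\frac{1}{T}\int_0^T\tp\bigl(\tPhi^t(\tpi(\omega,x_0))\bigr)\,dt,
\]
reducing the question to a Ces\`aro average of the observable $\tp$ on $\tP$. Since by \cref{lem:L} the measure $\tmu_E$ is finite and $\tPhi$--invariant, and $\tp\in L^1(\tP,\tmu_E)$, Birkhoff's ergodic theorem yields $\tmu_E$--a.e.\ existence and coincidence of the two one--sided limits, with the common limit in $L^1(\tmu_E)$ and $\tPhi$--invariant. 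Writing $\tP=\bigcup_{n\in\bN}\tH^{-1}((-\infty,n])$ together with $\tmu\rstr_{\tH^{-1}((-\infty,n])}=\tmu_n$ extends this to $\tmu$--a.e.\ convergence. Pulling back along $\tpi$ and applying Fubini to $\mu=\beta\otimes\lambda^{2d}$ then delivers the $\beta$--a.s.\ statement on $P$. The invariance $\ovv\circ\Phi^t=\ovv$ follows from the fact that shifting the integration interval by a bounded amount changes the average by $O(1/T)$, and $\ovv\circ\Theta_\ell=\ovv$ holds because $\Theta_\ell$ replaces $q(T)$ by $q(T)-\ell$ and relabels~$\omega$, an effect absorbed by dividing by~$T$ as $T\to\infty$.

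For $\ovv\in L^\infty_{\mathrm{loc}}(P)$, on any compact $K\subset P$ continuity of $H$ produces an energy cap $E_0$, so orbits through $K$ are trapped in the finite--measure set $\tH^{-1}((-\infty,E_0])$. On compact subsets of $\tP$ the potential $V$ is bounded by continuity, hence $\tp$ is locally bounded on $\tP$; combining this with flow invariance of the Ces\`aro limit on $\tpi(K)$ gives the claimed essential boundedness of $\ovv$ on $K$.

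The main obstacle relative to \cref{prop:A} is precisely the loss of the pointwise momentum bound caused by $V_\omega$ being unbounded below. \cref{lem:L} is the designed replacement, supplying both the finiteness of $\tmu_E$ and the $L^1$--integrability of $\tp$ that Birkhoff needs. The $L^\infty_{\mathrm{loc}}$ statement is the most delicate step of the plan, since Ces\`aro limits of merely integrable observables are not automatically locally bounded; what rescues it here is energy conservation on the compact set~$K$, which keeps the dynamics inside a uniformly bounded energy stratum.
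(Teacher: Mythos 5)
Your main line coincides with the paper's own proof: there, too, the asymptotic velocity is rewritten as the Ces\`aro average of $\tp$ along $\tPhi$ on the factor space and Birkhoff's theorem is invoked, with \cref{lem:L} supplying exactly the finiteness of $\tmu_E$ and the integrability $\int_{\tP}\norm{\tp}\,d\tmu_E<\infty$ that replace the pointwise momentum bound used in \cref{prop:A}; the exhaustion over energy shells, the pull-back along $\tpi$ with Fubini, and the two invariances are the same (largely implicit) steps.

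The one place where you go beyond the paper's very short proof is the justification of $\ovv\in L^\infty_{\mathrm{loc}}(P)$, and as written that argument does not work. Energy conservation confines an orbit starting in a compact $K\subseteq P$ to the sublevel set $H^{-1}\bigl((-\infty,E_0]\bigr)$, but this set is not compact in the momentum direction: since $V_\omega$ is $\beta$--a.s.\ unbounded below, $\norm{\tp}$ is unbounded on $\tH^{-1}\bigl((-\infty,E_0]\bigr)$, so the boundedness of $\tp$ on compact subsets of $\tP$ says nothing about the time averages, and the orbit does not stay over $\tpi(K)$. Flow invariance of the Birkhoff limit does not rescue this either: the Birkhoff limit of an observable that is merely $L^1$ on a finite measure space is in general only $L^1$, invariant or not. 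What your argument actually yields is $\ovv\in L^1$ on each energy stratum (and, since the Poisson particle numbers have moments of all orders, $\norm{\tp}\in L^p(\tmu_E)$ for every finite~$p$, hence $\ovv\in L^p_{\mathrm{loc}}$ for all $p<\infty$), not essential local boundedness. The paper's proof is silent on this point as well, but your specific ``energy cap'' step should either be replaced by a genuine argument (e.g.\ an almost sure bound on the time-averaged speed on an energy shell) or the conclusion weakened accordingly.
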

\begin{proof}
  We invoke Birkhoff's theorem for
  \begin{equation*}
    \ovv^\pm(\omega,x_0)
      =\lim_{T\to\pm\infty}\frac1T\int_0^T\tp\bigl(t,\tpi(\omega,x_0)\bigr)\,dt\text,
  \end{equation*}
  using \cref{lem:L}.
\end{proof}
\begin{remarks}
\begin{enumerate}[1.]
\item
  The $\bR^d$ action~\eqref{poissonaction} is mixing on $\tilde\Omega$,
  see \cite[p.~27]{PF92}.
  Thereby the action of the lattice~$\cL$ by translations
  $\vartheta_\ell\ (\ell\in\cL)$ on~$\tilde\Omega$ is mixing, too,
  and in particular $\beta$--ergodic.
  Thus, like in \cref{prop:B}, the energy--velocity distributions
  $\Gamma_\omega$ for the Poisson potentials are $\beta$--a.s.\ deterministic.
\item
  For $d=1$ one gets a result similar to \cref{prop:C}.
  Note, however that $\supp\bigl(V(\beta\otimes\lambda^d)\bigr)$
  equals the closure of $(V_{\ess\min},\infty)$,
  if there is a single site potential $W_j$ with $W_j(q)>0$ for some $q\in\bR$.
  In that case all motion is bounded $\beta$--a.s.\,.
\end{enumerate}
\end{remarks}
%
\section{Singular Values of the Hamiltonian}\label{sec4a}
%
Since Hamiltonian motion enjoys conservation of energy,
one has to decompose phase space into energy shells
in order to find ergodic motion.
This is possible for regular energy values,
so we first study the set of singular values of the Hamilton function.
Both for the lattice and the Poisson case we have the following result:

\begin{proposition}
The closure of the set of singular values of $V_\omega$ is 
$\beta$--almost surely deterministic.
\end{proposition}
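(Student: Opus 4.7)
The plan is to reduce the statement to a countable family of $0/1$-events indexed by rational open intervals, then exploit $\vartheta$-ergodicity of $\beta$. Writing $S(\omega)$ for the set of critical values of $V_\omega$, define for rationals $a<b$
\[A_{(a,b)} := \{\omega\in\Omega : V_\omega\text{ has a critical value in }(a,b)\}\text.\]
Since $(a,b)$ is open, $\overline{S(\omega)}\cap(a,b)\ne\es$ is equivalent to $\omega\in A_{(a,b)}$, and because any closed subset of $\bR$ is determined by the collection of open rational intervals it meets, it suffices to prove that each $A_{(a,b)}$ is Borel and $\vartheta$-invariant: ergodicity then forces $\beta(A_{(a,b)})\in\{0,1\}$, and off the countable union of exceptional null sets the collection of rational intervals meeting $\overline{S(\omega)}$ is deterministic, whence so is $\overline{S(\omega)}$ itself.

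$\vartheta$-invariance is immediate: shifting the summation index in the definition of $V$ yields $V_{\vartheta_\ell\omega}(q)=V_\omega(q+\ell)$ in the lattice case, and the change-of-variable formula applied to~\eqref{poissonaction} gives the same identity in the Poisson case. Thus $V_{\vartheta_\ell\omega}$ and $V_\omega$ have identical critical value sets.

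For measurability I would exploit that $V$ and $\nabla V$ are jointly continuous on $\Omega\times\bR^d$: in the lattice case this follows from the uniform bound~\eqref{as:sisi}, which gives uniform convergence on each compact $q$-set of the defining sum together with its derivatives up to order $\eta$; in the Poisson case the continuity argument presented above adapts verbatim to the first-order derivatives. For $n\in\bN$ and rationals $a<a'<b'<b$ the set
\[C_{n,a',b'} := \bigl\{(\omega,q)\in\Omega\times\overline{B_n(0)} : \nabla V_\omega(q)=0,\ V_\omega(q)\in[a',b']\bigr\}\]
is then closed in $\Omega\times\bR^d$, and since $\overline{B_n(0)}$ is compact the projection onto $\Omega$ of a closed subset of $\Omega\times\overline{B_n(0)}$ is closed, hence Borel. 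A countable union over $n$ and over $a',b'$ realises $A_{(a,b)}$ as a Borel set. In the Poisson case one concludes by restricting the mixing $\bR^d$-action~\eqref{poissonaction} to an ergodic $\cL$-action, after which the $0/1$-law applies exactly as in the lattice case. The only real technical hurdle is measurability, handled by the projection-along-a-compact-fibre observation; everything else is a standard ergodic-theoretic wrap-up.
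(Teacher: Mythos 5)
Your proof is correct and follows essentially the same route as the paper: the paper localises critical values into dyadic intervals $S_{k,m}$, notes that the corresponding events are $\cL$--invariant and hence of measure $0$ or $1$ by $\vartheta$--ergodicity, and intersects countably many full--measure events to pin down a deterministic closed set, exactly as you do with rational open intervals. Your explicit verification of measurability of the events (closedness of the critical--point set plus projection along the compact fibre $\overline{B_n(0)}$) is a welcome addition that the paper leaves implicit.
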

\begin{proof}
For $S_{k,m}:=[m 2^{-k},(m+1) 2^{-k}]$ with $k\in\bN$ and $m\in\bZ$
the set
\[\Omega_{k,m}:=\bigl\{\omega\in\Omega\bigm|\exists q \in \bR^d:
\nabla_q  V(\omega,q) =0\mbox{ and } 
 V(\omega,q)\in  S_{k,m}\bigr\} \]
is $\cL$-invariant. Thus by $\beta$--ergodicity it is of measure zero or one.
The sets $S_k:=\bigcup_{m\in\bZ:\, \beta(\Omega_{k,m})=1} S_{k,m}$ are closed, and 
$S_{k+1}\subseteq S_k$.
\[\Omega_{k}:=\{\omega\in\Omega\bigm|{\rm CVal}_\omega\subseteq S_k \}\]
is still invariant and of measure one.
The same is true for $\Omega_{\infty}:= \bigcap_{k\in\bN}\Omega_{k}$.
The potentials indexed by $\omega\in \Omega_{\infty}$ have their critical values
in the closed set $S_\infty:=\bigcap_{k\in\bN}S_k$.
It is $\beta$--almost surely the closure of the set of singular values of~$V_\omega$.
\end{proof}

\begin{example}[Denseness of Singular Values]
The set of singular values may be dense in 
$[V_{\ess\min},V_{\ess\max}]$.
This is the case
$\beta$--almost surely for the random $\bZ$--lattice potential on 
$\bR$ given by the single site potentials 
\[W_j(q):=\textstyle{ \sum_{\ell\in\bN}}\; j
\abs J^{-\ell}\chi\bigl(q-(-1)^\ell\lfloor\ell/2\rfloor\bigr)
\quad\bigl(q\in\bR,\;j\in J:=\{0,\ldots,\setsize J-1\}\bigr)\text,\] 
with $\chi\in C^\infty_c(\bR,[0,1])$, $\chi\rstr_{\l[-\ev,\ev\ri]}=1$
and ${\rm supp} (\chi)\subseteq[-\eh,\eh]$, if $\beta:=\otimes_{\bZ} \hat{\beta}$ 
is the product measure of $\hat{\beta}:=\textstyle{
\sum_{j\in J}}\delta_j/\setsize J$. 
Then for any $\omega\in \Omega=J^\bZ$ with a dense
$\vartheta$--orbit the set of singular values is dense in $[0,1]$\, 
and this is the case for $\beta$-a.e.\ $\omega\in\Omega$.\par 
Note that the fall-off of the
single site potentials is not only polynomial as assumed in 
\eqref{as:sisi} but exponential, with rate 
$W_j(q)=\cO\bigl(\setsize J^{-2|q|}\bigr)$.
\end{example}
The exponential decay in this example is below
the rate that ensures measure zero for the closure of the set 
${\rm CVal}_\omega =V_\omega ({\rm CSet}_\omega)$%
\nomenclature[ACVal]{$\mathrm{CVal}$}{set of singular values of~$V$}{}{}%
\nomenclature[ACSet]{$\mathrm{CSet}$}{set of singular points of~$V$}{}{}
of singular values of $V_\omega$:

\begin{proposition}
  Let for $d=1$ and some $\vep>0$
  the single site potentials $W_j\in C^2(\bR,\bR)$
  obey (for the lattice $\bZ$)
  the decay estimate $W_j'(q)=\cO\bigl(\setsize J^{-(4+\vep)\abs q}\bigr)$.
  Then
  \[\lambda^1\bigl(\overline{{\rm CVal}_\omega}\bigr)=0
  \qquad(\omega\in\Omega)\text.\]
\end{proposition}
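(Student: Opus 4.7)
The plan is a quantitative Sard--type argument that exploits the exponential decay of the single--site potentials together with the finiteness of the alphabet $J$.

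Fix $\omega\in\Omega$. For each $N\in\bN$ and lattice point $k\in\bZ$, I introduce the local truncation
\[V_\omega^{(N,k)}(q):=\sum_{|\ell-k|\le N}W_{\omega(\ell)}(q-\ell).\]
Integrating the decay assumption on $W_j'$ (and using $W_j(q)\to 0$ as $|q|\to\infty$) yields the same exponential bound for $W_j$ itself, and summing the resulting geometric tail produces the uniform $C^1$--estimate
\[\|V_\omega-V_\omega^{(N,k)}\|_{C^1([k,k+1])}\le C\,|J|^{-(4+\varepsilon)N}\]
with $C$ independent of $\omega$ and $k$. Hence every critical point $q_c\in[k,k+1]$ of $V_\omega$ is a $\delta_N$--approximate critical point of $V_\omega^{(N,k)}$ (with $\delta_N:=C|J|^{-(4+\varepsilon)N}$), and $V_\omega(q_c)$ lies within $\delta_N$ of $V_\omega^{(N,k)}(q_c)$.

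Next I apply the elementary one--dimensional Sard bound: for $f\in C^1([a,b])$,
\[\lambda^1\bigl(f(\{x\in[a,b]:|f'(x)|\le\delta\})\bigr)\le\delta(b-a),\]
a trivial consequence of $|f(x_1)-f(x_2)|\le\delta|x_1-x_2|$ on each maximal component of $\{|f'|\le\delta\}$. By translation invariance the restriction $V_\omega^{(N,k)}|_{[k,k+1]}$ depends only on the local configuration $\mathbf{j}(k):=(\omega(k-N),\ldots,\omega(k+N))\in J^{2N+1}$, so as $k$ ranges over $\bZ$ the corresponding $\delta_N$--critical value sets $S_{N,\mathbf{j}}\subset\bR$ form a family of at most $|J|^{2N+1}$ sets, each of Lebesgue measure $\le\delta_N$. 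Combining the two steps,
\[\overline{\mathrm{CVal}_\omega}\;\subseteq\;\bigcup_{\mathbf{j}\in J^{2N+1}}B_{\delta_N}(S_{N,\mathbf{j}}).\]
The measure of this cover is estimated by controlling the number of connected components of each $S_{N,\mathbf{j}}$, bounded via the total variation of $(V_\omega^{(N,k)})'$ on $[k,k+1]$ (uniformly in $N$ by the decay hypothesis). Letting $N\to\infty$ shows that the cover's measure vanishes, so $\lambda^1(\overline{\mathrm{CVal}_\omega})=0$.

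The main obstacle is the final balancing step: one must combine the combinatorial factor $|J|^{2N+1}$ from the local configurations, the per--configuration measure $\delta_N=\cO(|J|^{-(4+\varepsilon)N})$, and the component--count inflation produced by the $\delta_N$--enlargement. It is precisely the strength of the exponent $4+\varepsilon$ in the decay hypothesis, as opposed to a weaker exponent such as $2+\varepsilon$, that makes this balance work out: the exponent $2$ absorbs the configuration count, a further factor absorbs the inflation from connected--component counting, and the remaining $\varepsilon$--slack drives the cover's measure to zero as $N\to\infty$.
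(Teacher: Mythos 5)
Your overall architecture is the same as the paper's: reduce to a unit cell, truncate the potential to a window of radius $N$ so that only $\setsize J^{2N+1}$ local configurations occur, bound the near-critical values of each truncated potential by the elementary one-dimensional Sard estimate, and thicken by the truncation error. The gap is precisely in the step you yourself flag as ``the final balancing step'', and it is not a technicality. Your claimed bound on the number of connected components of $S_{N,\mathbf j}$ --- ``uniformly in $N$ via the total variation of $(V^{(N,k)}_\omega)'$'' --- is false: a bound on $\int\abs{f''}$ controls only the number of components of $\{\abs{f'}\le\delta\}$ that contain a zero of $f'$ (each such component has length at least $\delta/c^{(2)}$, so there are at most $\cO(1/\delta)$ of them); components on which $\abs{f'}$ merely dips to the level $\delta$ without vanishing contribute arbitrarily little variation, and a $C^2$--bounded $f'$ oscillating with tiny amplitude around the level $\delta$ produces on the order of $\delta^{-2}$ (or even infinitely many) such components. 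Since a critical point of the full $V_\omega$ only yields $\abs{(V^{(N,k)}_\omega)'}\le\delta_N$, not a zero of the truncated derivative, you must count all such components, or at best those containing a point where the truncated derivative is of the size of the truncation error --- and that count is $\cO(1/\delta_N)$, never $\cO(1)$.

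With the correct count $\cO(1/\delta_N)$, your single-scale choice $\delta_N:=C\setsize J^{-(4+\vep)N}$ (near-critical threshold $=$ enlargement radius $=$ truncation error) cannot close: the thickening contributes $2\delta_N\cdot\cO(1/\delta_N)=\cO(1)$ per local configuration, and summed over $\setsize J^{2N+1}$ configurations the cover's measure diverges instead of tending to zero. The paper avoids this by decoupling the two scales: the threshold is $\delta_R=\setsize J^{-(2+\vep/2)R}$, just small enough that $\setsize J^{2R+1}\delta_R\to0$, while the thickening radius is the truncation error $\cO\bigl(\setsize J^{-(4+\vep)R}\bigr)=\cO(\delta_R^2)$; then the $\cO(1/\delta_R)$ relevant intervals, each thickened by $\delta_R^2$, cost only $\cO(\delta_R)$ per configuration, and this is exactly where the exponent $4+\vep$ (rather than $2+\vep$) enters. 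Your proposal has all the ingredients, but as written the balance fails; repairing it essentially forces the paper's two-scale argument.
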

\begin{proof}
$\bullet$
By using the lattice translation $\vartheta$, 
it suffices to show that
\beq
\lambda^1\left(\overline{
{\textstyle \bigcup_{\omega\in\Omega}} 
V_\omega ({\rm CSet}_\omega \cap[0,1]) }\right)=0.
\Leq{Vla0}
For this we prove that for $R\in\bN$ large
and the middle sequence $\omega^{(R)}%
\nomenclature[GomegaR]{$\omega^{(R)}$}{middle sequence of $\omega$}{}{}
\in J^{\{-R,\ldots,R\}}$
of $\omega$ we have
\beq
V_\omega \bigl({\rm CSet}_\omega \cap[0,1] \bigr)\;\subseteq\;
M_{\omega^{(R)}}\qquad (\omega\in\Omega)\text,
\Leq{M:inclusion}
for suitable closed subsets $M_{\tau} \subseteq \bR$
of measures  
\beq
\lambda^1(M_{\tau})\le \setsize J^{-(2+\vep/4)R}\qquad 
\bigl(\tau\in J^{\{-R,\ldots,R\}}\bigr)\text.
\Leq{M:measure}
\nomenclature[Gtau]{$\tau$}{finite sequence in~$J$}{}{}%
As this implies 
$\lambda^1\Bigl(\overline{
{\textstyle \bigcup_{\omega\in\Omega} 
V_\omega ({\rm CSet}_\omega \cap}[0,1])}\Bigr)\le \setsize J^{1-\vep R/4}$,
\eqref{Vla0} then follows as $R\ar\infty$. \\
$\bullet$
$c^{(2)}:=\sup\bigl\{\sum_{j\in\bZ}\abs{W''_{\omega_j}(q-j)}\bigm|
\omega\in\Omega,q\in\bR\bigr\}<\infty$ is a uniform upper bound for the second derivative of
\[\tilde{V}_\tau(q) := 
{\textstyle \sum_{\ell=-R}^R} W_{\tau(\ell)}(q-\ell)\qquad
\bigl(q\in\bR,\,\tau\in J^{\{-R,\ldots,R\}}\bigr)\text.\]
Then for $\delta\le c^{(2)}$ the disjoint union 
$\{x\in [0,1]\mid\abs{\tilde{V}'_\tau(x)}\le \delta\}$ of closed intervals 
has at most $i_{\max}(\tau)\le c^{(2)}/\delta$ intervals $I_1(\tau),\dotsc,I_{i_{\max}(\tau)}(\tau)$
containing a point~$q$ with $\tilde{V}'_\tau(q)=0$,
since each such interval has at least length $\delta/c^{(2)}$.
\\[2mm]
We set 
$M_\tau:=\bigcup_{i=1}^{i_{\max}(\tau)} B_{\delta^2}\big(\tilde{V}_\tau(I_i(\tau))\big)$%
\nomenclature[AUdelta]{$B_\delta$}{open neighbourhood of size~$\delta$}{}{}.
This union of thickened intervals has measure 
\[\lambda^1(M_\tau)\le
\lambda^1\Bigl(\tilde{V}_\tau\big({\Union\nolimits}_{i=1}^{i_{\max} 
(\tau)}I_i(\tau)\big)\Bigr)
+ 2\delta^2\cdot i_{\max}(\tau)\le (1+2c^{(2)})\;\delta\text.\]
So for $\delta\equiv\delta_R:= \setsize J^{-(2+\vep/2)R}$
and $R$ large, \eqref{M:measure} is satisfied.\\
$\bullet$
By the decay assumption $|W_j'(q)|\le c \setsize J^{-(4+\vep)\abs q}$
on the derivatives of the single site potentials, for 
$q\in[0,1]\setminus \bigcup_{i=1}^{i_{\max}(\tau)} I_i(\tau)$ one has
\begin{equation*}
  \abs{V'_\omega(q)}
    \ge\abs{\tilde{V}'_\tau(q)}-c \sum_{\ell\in\bZ:|\ell|>R}\setsize J^{-(4+\vep)\abs {\ell}}
    \ge
\delta_R-c\,\delta_R^2 >0
\end{equation*}
for $R$ large enough. So 
 ${\rm CSet}_\omega \cap[0,1]\subseteq \bigcup_{i=1}^{i_{\max}(\tau)} I_i(\tau)$
in \eqref{M:inclusion}.\\
$\bullet$ 
The single site potentials themselves, and not only their derivatives, 
decay like $W_j(q)=\cO\bigl(\setsize J^{-(4+\vep)|q|}\bigr)$.
This means that 
\[\abs{V_\omega(q)- \tilde{V}_{\omega^{(R)}}(q)}\le
c \textstyle \sum_{\ell\in\bZ:|\ell|>R}\setsize J^{-(4+\vep)\abs {\ell}} \le\delta^2
\qquad\big(q\in[0,1]\big).\]
But $\delta^2$ is the parameter appearing in the definition of $M_\tau$.
So \eqref{M:inclusion} is satisfied, too.
\end{proof}
A similar statement, but with 
superexponential decay rate $\cO\bigl(\setsize J^{-(c_\cL+\vep)\|q\|^d}\bigr)$, 
should be sharp for $d\ge 1$,
with a constant $c_\cL$ depending on the lattice
$\cL\subseteq\bR^d$.

We denote the restriction of the flow $\Phi_\omega$ to the energy surface
$\Sigma_{E,\omega}:=H_\omega^{-1}(E)\subseteq\bR^{2d}$%
\nomenclature[GSigmaEomega]{$\Sigma_{E,\omega}$}{energy surface}{}{}
by $\Phi_{E,\omega}$%
\nomenclature[GPhiEomega]{$\Phi_{E,\omega}$}{flow on~$\Sigma_{E,\omega}$}{}{}.
For regular values~$E$ of $H_\omega$,\ $\Sigma_{E,\omega}$ carries a 
$\Phi_{E,\omega}$--invariant measure $\lambda_{E,\omega}$%
\nomenclature[GlambdaEomega]{$\lambda_{E,\omega}$}{Liouville measure on $\Sigma_{E,\omega}$}{}{}
derived from Lebesgue measure $\lambda^{2d}$ on phase space 
(see, \emph{e.g.}\ \cite{AM78}, Thm.~3.4.12).

This non--atomic \emph{Liouville measure} thus exists for all 
$E>V_{\max}$, using the equality of regular values of $H_\omega$ and $V_\omega$.

For $V_\omega\in C^{d}(\bR^d,\bR)$ 
by Sard's theorem (see, \emph{e.g.} \cite[Thm. 3.1.3]{Hir76}) it  
in fact exists for almost all energies $E>\inf V_\omega$.
\begin{remark}[Denseness of Singular Values, Poisson Case]
For (non--trivi\-al) Poisson potentials, 
$\beta$--almost surely,
the set $\overline{{\rm CVal}_\omega}$ equals $\bR$, $[0,\infty)$ or  
$(-\infty,0]$, depending on the signs occuring in
the union of ranges of the single site potentials.
This is true, since
\begin{enumerate}[(a)]
\item 
for any real number $r$ between zero and an extreme value of $W_j$
the sum $U_a(q):=W_j(q)+W_j(q+ae_1)$%
\nomenclature[AUa]{$U_a$}{sum of two Poisson single site potentials}{}{}
of~$W_j$ and its translate by a suitable $a={}a(r)\in\bR$ has~$r$
as a singular value.  For the case $\max\; W_j=W_j(q_0)>0$ we take 
\[ \mathrm{MM}(a) := 
\sup_{c\in C} \min_{t\in [0,1]} U_a\bigl(c(t)\bigr)\text,\] 
\nomenclature[AMM]{$\mathrm{MM}$}{min--max function}{}{}%
with $C := 
\bigl\{c\in C^\infty\bigl([0,1],\bR^d\bigr) \bigm|c(0)=q_0, c(1)=q_0+a e_1\bigr\}$.
Then $\mathrm{MM}(0)=2\max W_j$, $\mathrm{MM}(a)=0$ for $\abs a$~large, and 
$\mathrm{MM}$ is continuous.
Furthermore $\mathrm{MM}(a)$ is a singular value for~$U_a$.

The case $\min W_j=W_j(q_0)<0$ is treated by the reverse minmax problem.
\item 
Integer multiples of $W_j$ are Poisson potentials, too.
\end{enumerate}
\end{remark}
%
\section{Notions of Ergodicity of Time Evolution}\label{sec5}
%
In general $\lambda_{E,\omega}$ is non--finite.
Nevertheless, we may ask whether it is an \emph{ergodic} measure (in the 
sense of Aaronson \cite{Aar97}), that is,
whether the only $\Phi_\omega$--invariant measurable subsets
$A\subseteq\Sigma_{E,\omega}$ have measure zero or full measure,
i.e.\ $\lambda_{E,\omega}(\Sigma_{E,\omega}\setminus A)=0$.
Instead, we may also ask whether the restriction
\beq
\tPhi_E^t:=\tPhi^t{\rstr}_{\hS_E}\ (t\in\bR)
\Leq{tPhiE}
\nomenclature[GPhihatEt]{$\tPhi_E^t$}{flow on $\hS_E$}{}{}%
of the flow $\tPhi$ to the (extended) energy surfaces
$\hS_E:=\tH^{-1}(E)\subseteq\tP$%
\nomenclature[GSitmahatE]{$\hS_E$}{finite energy shell $\hS_E$}{}{}
is ergodic.
Here the Liouville measure $\tlambda_E$%
\nomenclature[GlambdahatE]{$\tlambda_E$}{Liouville measure on~$\hS_E$}{}{}
on $\hS_E$, derived from the measure $\tmu$ on $\tP$ by disintegration,
is finite.  Ergodicity of $\hat\Phi_E$ is much cheaper obtained
than ergodicity of $\Phi_{E,\omega}$,
since on finite measure spaces Poincar\'e{'s} recurrence theorem holds,
see \cref{sect10}.
In this section we clarify the relations
between the two notions of ergodicity
and give some immediate consequences.
The results pertain to both lattice and Poisson potentials
(using quotients by $\bZ ^d$ in the latter case).
\begin{proposition}\label{prop:erg:erg}
  If for energy $E\in\bR$, the flow $\Phi_{E,\omega}$ on $\Sigma_{E,\omega}$
  is $\lambda_{E,\omega}$--ergodic for $\beta$--a.e.\ $\omega\in\Omega$, then
  the flow $\tPhi_E$ on $\Sigma_E$ is $\tmu_E$--ergodic.
\end{proposition}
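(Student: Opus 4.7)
The plan is to lift a $\tPhi_E$--invariant set from $\hS_E$ up to extended phase space, use the hypothesis fibrewise to learn that for $\beta$--a.e.\ $\omega$ the fibre has zero or full Liouville measure, and finally apply the $\vartheta$--ergodicity of $\beta$ to rule out a mixed situation.

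First, I would take a measurable $\tPhi_E$--invariant set $\tilde A\subseteq\hS_E$ and form its lift $A:=\hat\pi^{-1}(\tilde A)\subseteq\Sigma_E\subseteq P$. Because $\hat\pi$ conjugates the flows via $\tPhi^t\circ\hat\pi=\hat\pi\circ\Phi^t$, the set $A$ is $\Phi$--invariant, and by construction it is also $\Theta$--invariant. For fixed $\omega\in\Omega$ write $A_\omega:=\{x\in\bR^{2d}\mid(\omega,x)\in A\}$; this is a $\Phi_\omega$--invariant measurable subset of $\Sigma_{E,\omega}$, so by hypothesis
\[
\lambda_{E,\omega}(A_\omega)=0\qtextq{or}\lambda_{E,\omega}(\Sigma_{E,\omega}\setminus A_\omega)=0
\qquad(\beta\text{--a.s.}).
\]

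Next I would examine the set $B:=\{\omega\in\Omega\mid\lambda_{E,\omega}(A_\omega)=0\}$ and show that it is $\vartheta$--invariant. The key computation is that $V_{\vartheta_\ell\omega}(q)=V_\omega(q+\ell)$, so the translation $(p,q)\mapsto(p,q+\ell)$ is a measure preserving diffeomorphism from $\Sigma_{E,\vartheta_\ell\omega}$ onto $\Sigma_{E,\omega}$ that, by the $\Theta$--invariance of~$A$, sends $A_{\vartheta_\ell\omega}$ onto $A_\omega$. Hence $\lambda_{E,\vartheta_\ell\omega}(A_{\vartheta_\ell\omega})=\lambda_{E,\omega}(A_\omega)$, and likewise for the complement, so $B$ (and its complement) is $\vartheta$--invariant. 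Invoking the assumed $\vartheta$--ergodicity of $\beta$ then yields $\beta(B)\in\{0,1\}$.

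The final step is to turn the dichotomy for $B$ into a dichotomy for $\tlambda_E(\tilde A)$. Using the identification $\hat P\cong(\Omega\times\bR^d\times\cD)/{\sim}$ from \eqref{homeo} together with the definition \eqref{mu:hat} of $\tmu$ and the standard disintegration of Liouville measure along energy shells, one has the Fubini type formula
\[
\tlambda_E(\tilde A)
=\int_\Omega\lambda_{E,\omega}\bigl(A_\omega\cap(\bR^d\times\cD)\bigr)\,d\beta(\omega),
\]
and similarly for $\hS_E\setminus\tilde A$. If $\beta(B)=1$ the integrand for $\tilde A$ vanishes a.s.\ and $\tlambda_E(\tilde A)=0$; if $\beta(B)=0$ the integrand for the complement vanishes a.s.\ and $\tlambda_E(\hS_E\setminus\tilde A)=0$, giving $\tmu_E$--ergodicity of $\tPhi_E$.

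The main obstacle I anticipate is bookkeeping rather than genuine difficulty: one must verify that the fibrewise decomposition of the finite Liouville measure on $\hS_E$ through the projection $\tpi$ really is the one written above (so that the negligibility of the fibres $A_\omega$ forces negligibility of $\tilde A$), and one must check that the translation equivariance $V_{\vartheta_\ell\omega}(q)=V_\omega(q+\ell)$ transfers not just the energy surfaces but the Liouville measures on them, including their non--finiteness. Both are straightforward once one writes out $\lambda_{E,\omega}$ explicitly, but they have to be done carefully because the unquotiented measures $\lambda_{E,\omega}$ are infinite while $\tlambda_E$ is finite.
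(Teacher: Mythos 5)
Your proposal is correct and takes essentially the same route as the paper: lift the invariant set through $\hat\pi$, fibre it over $\omega$, use the $\vartheta$--ergodicity of $\beta$ to obtain a zero--one law for $\omega\mapsto\lambda_{E,\omega}(A_\omega)$, and conclude via the disintegration of the quotient Liouville measure. The only difference is presentational: the paper argues by contraposition (an invariant set of intermediate measure upstairs forces non--ergodic fibres $\beta$--a.s.), whereas you argue directly and spell out the Fubini/disintegration identity that the paper leaves implicit.
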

\begin{proof}
  Assume that $\hat\Phi_E$ is not ergodic.
  Then there exists a measurable invariant
  ($\tlambda_E\bigl(\hat{A}\Delta\tPhi_E^t(\hat{A})\bigr)=0\ \text{for\ all}\ t\in\bR$)
  subset $\hat{A}\subseteq\hS_E$ with 
  $0<\tlambda_E(\hat{A})< \tlambda_E(\hS_E)$.
  Its pre-image $A:=\tpi^{-1}(\hA)\subseteq\Sigma_E=H^{-1}(E)$ is measurable
  and invariant, too, w.r.t.\ the Liouville measure $\lambda_E$ 
  on $\Sigma_E$.
  Finally, $\lambda_E(A)>0$ and 
  $\lambda_E(A^c)>0$, as $A^c=\Sigma_E\setminus A=
  \tpi^{-1}\bigl(\hS_E\setminus\hA\bigr)$.
  \par
  So $A_\omega:=A\cap\Sigma_{E,\omega}$ is $\Phi_{E,\omega}$--invariant
  $\beta$--a.s., and similar for $A_\omega^c:=A^c\cap\Sigma_{E,\omega}=
  \Sigma_{E,\omega}\setminus A_\omega$.
  \par
  By $\vartheta$--ergodicity of $\beta$ the $\beta$-a.s.\ $\vartheta$-invariant functions
  \begin{equation*}
    f_A,f_{A^c}\colon\Omega\to\{0,1\} \qtextq{,} f_A(\omega)=
    \begin{cases}
      1&\text,\ \lambda_{E,\omega}(A_\omega)>0\\
      0&\text{, otherwise}
    \end{cases}
  \end{equation*}
  (resp.\ for $A_\omega^c$) are both $\beta$--a.s.\ constant with value~$1$.
  Thus $\Phi_{E,\omega}$ is not $\lambda_{E,\omega}$--ergodic $\beta$--a.s.\,.
\end{proof}
Note in passing that the flow $\Phi_E$ on $\Sigma_E$ is \emph{not} 
$\lambda_E$--ergodic (given $\beta$ is not a Dirac measure on $\Omega$),
since $\Omega$ is unchanged by that flow.

Next we ask about the dynamical consequences of ergodicity.
\begin{proposition}\label{prop:P}
  Given $E\in\bR$ and $\omega\in\Omega$ such that $\Phi_{E,\omega}$ is $\lambda_{E,\omega}$--ergodic
  on the regular energy surface $\Sigma_{E,\omega}$,
  \begin{compactitem}
    \item then the asymptotic velocity $\ovv(x)=0$
      for a.e.\ $x\in{}\Sigma_{E,\omega}$,
    \item but the motion is unbounded a.e.\ on $\Sigma_{E,\omega}$
      unless $\Sigma_{E,\omega}$ is compact.
  \end{compactitem}
\end{proposition}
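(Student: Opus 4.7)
\emph{Plan of proof.} The two claims are independent applications of ergodicity; for the first I combine it with the reversal symmetry.

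For the vanishing of $\ovv$, recall from the end of \cref{sec2} that on the regular energy surface $\Sigma_{E,\omega}$ the asymptotic velocity $\ovv$ exists $\lambda_{E,\omega}$-a.e.\ and is $\Phi_{E,\omega}$-invariant. Ergodicity of $\lambda_{E,\omega}$ (in Aaronson's sense) forces the invariant measurable function $\ovv$ to be a.e.\ equal to some constant $v_0\in\bR^d$. The velocity-reversal map $R\colon(p,q)\mapsto(-p,q)$ preserves $H_\omega$ (hence $\Sigma_{E,\omega}$) and preserves Lebesgue measure (unit Jacobian), so it preserves $\lambda_{E,\omega}$; on the other hand, the reversibility identity computed in the preceding proposition gives $\ovv\circ R=-\ovv$ a.e. Substituting $\ovv=v_0$ yields $-v_0=v_0$ and hence $v_0=0$.

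For the unboundedness of motion I argue contrapositively. Introduce the $\Phi_{E,\omega}$-invariant set of bounded orbits
\[
B:=\bigl\{x\in\Sigma_{E,\omega}\bigm|\{q_\omega^t(x)\mid t\in\bR\}\text{ is bounded in }\bR^d\bigr\}
\]
together with its exhaustion $B=\bigcup_{n\in\bN}C_n$, where $C_n:=\{x\in\Sigma_{E,\omega}\mid\|q_\omega^t(x)\|\leq n\text{ for all }t\in\bR\}$. Each $C_n$ is $\Phi_{E,\omega}$-invariant and closed in $\Sigma_{E,\omega}$ by continuity of the flow in the initial condition. Assuming $\lambda_{E,\omega}(B)>0$, ergodicity upgrades this to $\lambda_{E,\omega}(B^c)=0$, and monotonicity of $\lambda_{E,\omega}(C_n)\nearrow\lambda_{E,\omega}(B)$ forces some $C_n$ to have positive measure; a second invocation of ergodicity yields $\lambda_{E,\omega}(\Sigma_{E,\omega}\setminus C_n)=0$. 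Since $E$ is a regular value, $\lambda_{E,\omega}$ has full topological support on the smooth manifold $\Sigma_{E,\omega}$ (it is locally a positive density $\|\nabla H_\omega\|^{-1}$ times surface measure), so the open set $\Sigma_{E,\omega}\setminus C_n$ must be empty. Consequently $\Sigma_{E,\omega}=C_n$ is bounded in the configuration direction; since $V_\omega$ is bounded below on the compact set $\{\|q\|\leq n\}$, also $\|p\|$ is bounded on $\Sigma_{E,\omega}$, so $\Sigma_{E,\omega}$ is bounded in $\bR^{2d}$, and being closed, compact.

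The one subtlety I expect to manage carefully is the passage from ``$\Sigma_{E,\omega}\setminus C_n$ is open of measure zero'' to ``$\Sigma_{E,\omega}\setminus C_n=\emptyset$''; this is precisely where the full topological support of the Liouville measure on a regular energy surface enters, and I would make the local surface-measure density argument explicit if the reader needs it.
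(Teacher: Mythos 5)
Your proof is correct and uses the same two ingredients as the paper's (ergodicity plus reversibility for the first claim, ergodicity plus boundedness and full support of Liouville measure for the second), but the packaging differs in both parts, so a comparison is worthwhile. For the vanishing of $\ovv$, the paper does not pass through a.e.\ constancy of the invariant function (which in the infinite-measure, Aaronson setting needs the standard level-set argument; it does work here, since each level set of each component of $\ovv$ is strictly invariant, hence null or conull, and $\ovv$ is bounded on $\Sigma_{E,\omega}$); instead it produces from a half space and the reversal $R\colon(p,q)\mapsto(-p,q)$ two disjoint invariant sets $B_{s},B_{-s}$ of equal positive measure, contradicting ergodicity directly. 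Both routes are fine; note that the identity $\ovv\circ R=-\ovv$ actually holds pointwise because of the convention $\ovv=0$ where the limits fail or disagree, so you need not import it ``a.e.'' from the preceding proposition, whose derivation assumed $\ovv^+=\ovv^-$ a.e., a property guaranteed only for $\beta$-a.e.\ $\omega$ and not for the arbitrary fixed $\omega$ of the present statement. For unboundedness, the paper argues directly rather than contrapositively: from a positive-measure set of \emph{forward} bounded points it builds the bounded, essentially invariant set $\tiB=\bigcup_{t\ge0}\Phi_{E,\omega}^t(B_{r,r'})$ whose complement has positive measure by non-compactness -- implicitly the same full-support property of $\lambda_{E,\omega}$ at a regular value that you make explicit; your version is cleaner in that the sets $C_n$ are genuinely (not merely essentially) invariant and closed. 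The one point you should add: your $B$ consists of orbits bounded in \emph{both} time directions, whereas the proposition, as the paper proves it, excludes a positive measure of forward bounded points; the bridge is the remark preceding the proposition that forward and backward boundedness agree a.e., which for $\lambda_{E,\omega}$ follows from flow invariance together with finiteness of $\lambda_{E,\omega}$ on $\{(p,q)\in\Sigma_{E,\omega}\mid\norm q\le r\}$ at a regular energy (so that $\Phi^t(A_r)\subseteq A_r$ with equal finite measure forces essential invariance of the forward bounded sets). With that one sentence your argument yields the full statement.
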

\begin{remark}
  Note that $\Sigma_{E,\omega}$ is non--compact for $E>V_{\max}$
  and $\beta$--a.s.\ non--compact for $E>V_{\ess\min}$, given ergodicity of~$\Phi_{E,\omega}$.
\end{remark}
\begin{proof}
Remember that $\ovv(x)=0$ by definition if $\ovv^\pm(x)$ do not exist or are
unequal.
\begin{itemize}[$\bullet$]
  \item Assume that $\lambda_{E,\omega}(\{x\in\Sigma_{E,\omega}\mid
    \ovv(x)\neq0\})>0$.
    Then there exists an open half space $H_s:=\{q\in\bR^d\mid\LA q,s\RA>0\}$%
    \nomenclature[AHs]{$H_s$}{half space}{}{}
    indexed by a unit vector $s\in S^{d-1}$%
    \nomenclature[ASd]{$S_d$}{$d$--dimensional sphere}{}{}
    so that
    \[\lambda_{E,\omega}(B_s)>0\qtextq{for}
      B_s:=\{x\in\Sigma_{E,\omega}\mid\ovv(x)\in H_s\}\text.
    \]
    By reversibility of the flow and $\ov v^-=\ov v^+$
    \[\lambda_{E,\omega}(B_{-s})=\lambda_{E,\omega}(B_s)\text.\]
    On the other hand, $\ovv\circ\Phi^t=\ovv\quad(t\in\bR)$, so that $B_s$ and 
    $B_{-s}$ are disjoint invariant subsets of positive measure.
    Thus $\Phi_{E,\omega}$ is not $\lambda_{E,\omega}$--ergodic.
  \item Assume that there exists a subset $B\subseteq \Sigma_{E,\omega}$ of 
    points leading to forward bounded motion with $\lambda_{E,\omega}(B)>0$.
    Then for some $r>0$ the subset 
    \[B_r:=\{(p,q)\in B\mid\norm q\leq r\}\] 
    is of positive measure, too.
    Furthermore, for some $r'>r$ the same applies to
    \[B_{r,r'}:=\{x\in B_r\mid\norm{q_t(\omega,x)}\leq r'
    \text{ for\ all}\ t\ge0\}\text.\]
    Finally the set
    \[\tiB:=\bigcup_{t\ge0}\Phi_{E,\omega}^t(B_{r,r'})\subseteq \Sigma_E\]
    is $\Phi_{E,\omega}$--invariant w.r.t.\ $\lambda_{E,\omega}$, and 
    $\lambda_{E,\omega}(\tiB)>0$.
    But by non--compactness of~$\Sigma_{E,\omega}$,
    its complement $\tiB^c=\Sigma_{E,\omega}\setminus\tiB$
    is of measure $\lambda_{E,\omega}(\tiB^c)>0$, since~$\tiB$
    is bounded: $\norm q\leq r'$ if $(p,q)\in\tiB$.
    Thus $\Phi_{E,\omega}$ is not $\lambda_{E,\omega}$--ergodic.
\end{itemize}
\end{proof}

In one--dimensional natural mechanical systems with Hamiltonian
$H\colon\bR^2\to\bR,\ H(p,q)=\frac{1}{2}p^2+V(q)$ ergodicity on regular energy 
surfaces $\Sigma_E=H^{-1}(E)$ is widespread and occurs for compact as well 
as non--compact $\Sigma_E$ (examples: $V(q)=q^2,\ V(q)=q$).

In our present context, however, ergodicity in one dimension is exceptional.
\begin{proposition}
For $d=1$ and $E>V_{\ess\min}$ the flow $\Phi_{t,\omega}$ on a
regular energy surface $\Sigma_{E,\omega}$ is $\beta$--a.s.\ not 
$\lambda_{E,\omega}$--ergodic.
\end{proposition}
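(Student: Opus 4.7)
My plan is to split the argument according to whether $E$ lies above or below $V_{\ess\max}$, and in each regime exhibit a non--trivial $\Phi_{E,\omega}$--invariant subset of $\Sigma_{E,\omega}$.

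If $E>V_{\ess\max}$, then by \cref{rem:A} and continuity of~$V_\omega$ one has $\sup V_\omega=V_{\ess\max}<E$ for $\beta$--a.e.~$\omega$, so that $p^2=2(E-V_\omega(q))>0$ throughout $\Sigma_{E,\omega}$. The two sheets $\Sigma_{E,\omega}^{\pm}:=\{(p,q)\in\Sigma_{E,\omega}\mid\pm p>0\}$ are disjoint, relatively open, $\Phi_{E,\omega}$--invariant, and each of positive Liouville measure (the density $dq/|p|$ integrates to $+\infty$ over either sheet). Regularity of~$E$ absorbs the boundary case $E=V_{\ess\max}$ into this regime, since any attained maximum of~$V_\omega$ at level~$E$ would be a critical point, which is forbidden.

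For $V_{\ess\min}<E<V_{\ess\max}$ I instead construct two disjoint bounded wells of~$V_\omega$ below the level~$E$. Writing $\cL=\bZ\ell_1$, consider the events
\[A^{\pm}:=\bigl\{\omega\in\Omega\bigm|\pm V_\omega(q)>\pm E\textq{ for some }q\in[0,\ell_1]\bigr\}\text.\]
\cref{rem:A} together with continuity of $V_\omega$ gives $\beta(A^+)>0$ and $\beta(A^-)>0$. Combining the covariance $V_\omega(\cdot+\ell)=V_{\vartheta_\ell(\omega)}(\cdot)$ with $\vartheta$--ergodicity of~$\beta$ and applying Birkhoff's theorem to $\idty_{A^\pm}$, one obtains $\beta$--a.s.\ lattice indices $n_1<m_1<n_2<m_2<n_3$ with $\vartheta_{n_k\ell_1}(\omega)\in A^+$ and $\vartheta_{m_k\ell_1}(\omega)\in A^-$. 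Picking barrier points $b_k\in[n_k\ell_1,(n_k+1)\ell_1]$ with $V_\omega(b_k)>E$ and subcritical points $q^{(k)}\in[m_k\ell_1,(m_k+1)\ell_1]$ with $V_\omega(q^{(k)})<E$, the connected components $I_1,I_2$ of $q^{(1)},q^{(2)}$ in $\{q\in\bR\mid V_\omega(q)<E\}$ are disjoint bounded intervals, trapped respectively within $(b_1,b_2)$ and $(b_2,b_3)$. Regularity of~$E$ forces $V'_\omega\ne0$ at the endpoints of each~$I_k$, so the lift $\{(p,q)\in\Sigma_{E,\omega}\mid q\in\ov{I_k}\}$ is a smooth compact periodic orbit of positive Liouville measure $\oint dq/|p|<\infty$. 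The two orbits are disjoint and $\Phi_{E,\omega}$--invariant, contradicting ergodicity.

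For Poisson potentials the same argument goes through verbatim upon restricting the $\bR$--action to a reference lattice $\bZ\subseteq\bR$, which remains $\beta$--ergodic by the mixing statement recorded in \cite[p.~27]{PF92}. The main obstacle I anticipate is the combinatorial bookkeeping of Case~B --- certifying that the two bounded wells are genuinely distinct rather than merged into a single well is precisely what the interposed barrier at $b_2$ guarantees. Everything else amounts to routine applications of Birkhoff's theorem and the standard one--dimensional phase portrait.
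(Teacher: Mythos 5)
Your proof is correct, but it follows a genuinely different route from the paper's. The paper deduces the statement from results it has already established: by \cref{prop:C}, in $d=1$ the motion is $\beta$--a.s.\ bounded for $E<V_{\ess\max}$ and has non--zero asymptotic velocity for $E>V_{\ess\max}$, and either conclusion contradicts \cref{prop:P} (ergodicity forces $\ovv=0$ a.e.\ and unbounded motion a.e.\ on a non--compact $\Sigma_{E,\omega}$); the borderline case $E=V_{\ess\max}$ is then excluded by a topological argument ($\Sigma_{E,\omega}$ connected, regular, non--compact would be diffeomorphic to $\bR$, incompatible with reversibility and $\ovv^+=\ovv^-$ a.e.). You instead exhibit explicit non--trivial invariant sets: above $V_{\ess\max}$ the two momentum sheets $\{\pm p>0\}$, with the case $E=V_{\ess\max}$ absorbed because regularity forbids $V_\omega$ from attaining the value~$E$ (this is a cleaner treatment of the borderline case than the paper's); below $V_{\ess\max}$ two disjoint compact well components, produced $\beta$--a.s.\ by the covariance $V_{\vartheta_\ell\omega}(\cdot)=V_\omega(\cdot+\ell)$, positivity of $\beta(A^\pm)$ from \cref{rem:A}, and Birkhoff for the single generator $\vartheta_{\ell_1}$ (which in $d=1$ is exactly the ergodicity hypothesis). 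What the paper's route buys is brevity given \cref{prop:C,prop:P} and the quantitative velocity formula; what yours buys is self--containedness and independence of the asymptotic--velocity machinery. Two trivial caveats: in the Poisson case the density $dq/\abs p$ need not integrate to $+\infty$ over a sheet, since $V_\omega$ is there a.s.\ unbounded below --- but only positivity of both sheet measures is needed, which holds; and if $V_{\ess\max}=\infty$ (Poisson with a positive single site potential) your Case~A is simply vacuous and Case~B covers all $E>V_{\ess\min}$, so nothing is lost.
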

\begin{proof}
  We assume that $\omega$ is chosen so that $\ovv^+(x)=\ovv^-(x)$ for
  $\lambda_{E,\omega}$--a.e.\ $x\in\bR$.
  This assumption is valid $\beta$--a.s.\,.
  For $E<V_{\ess\max}$ the motion is bounded $\beta$--a.s., whereas for
  $E>V_{\ess\max}$ asymptotic velocity is non--zero
  $\beta$--a.s.\ according to \cref{prop:C}.

  Both statements contradict \cref{prop:P} (note that
  $\Sigma_{E,\omega}$ is non--compact $\beta$--a.s.\ for $E>V_{\ess\min}$).

  But for the case left, $E=V_{\ess\max},\ \Sigma_{E,\omega}$ must be 
  connected, regular and non--compact.
  Thus it must be diffeomorphic to $\bR$.
  This can only occur if there is a $q_0$ so that $V(q)>E$ for all $q\in
  (q_0,\infty)$ and $V(q)<E$ for all $q\in(-\infty,q_0)$ or vice versa.
  This, however, would contradict our initial assumption $\ovv^+(x)=\ovv^-(x)$.
\end{proof}
\Cref{ex:la} together with \cref{prop:P} shows that for any
dimension~$d$ there are non--trivial random lattice potentials that lead to
non--ergodic motion on the regular energy surfaces $\Sigma_{E,\omega}$ for
\emph{any}~$E$ and $\omega\in\Omega$.

The statements of \cref{prop:P} can be proven under a different
assumption, which by \cref{prop:erg:erg} is weaker than
$\lambda_{E,\omega}$--ergodicity assumption in \cref{prop:P}
\emph{for $\beta$-a.e.}~$\omega$:
%
\begin{proposition}\label{prop:ConsequencesOfErgodicity}
  If $\tPhi_E$ is $\tmu_E$--ergodic on the regular energy surface $\hS_E$,
  \begin{compactitem}
    \item then the asymptotic velocity $\ovv(x)=0$ for a.e.\ $x\in{}\Sigma_{E,\omega}$,
    \item but the motion is unbounded a.e.\ on $\Sigma_{E,\omega}$
  \end{compactitem}
  for $\beta$-a.e.\ $\omega$.
\end{proposition}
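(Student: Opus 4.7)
My approach mirrors the two--bullet structure of \cref{prop:P} on the compact quotient $\hS_E$, then transfers the conclusions back to $\Sigma_{E,\omega}$ via Fubini applied to the disintegration $\tlambda_E=\int_\Omega\lambda_{E,\omega}^{\cD}\,d\beta(\omega)$, where $\lambda_{E,\omega}^{\cD}$ denotes the Liouville measure on $\Sigma_{E,\omega}\cap(\bR^d_p\times\cD)$. For the first bullet: the function $\ovv$ from \cref{prop:A} is $\Theta$--invariant and so descends to a $\tPhi_E$--invariant $\hat\ovv\colon\hS_E\to\bR^d$. The momentum reversal $\iota\colon(p,q)\mapsto(-p,q)$ commutes with $\Theta$ and preserves $\tlambda_E$, and reversibility yields $\hat\ovv\circ\hat\iota=-\hat\ovv$. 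For each $s\in S^{d-1}$ the $\tPhi_E$--invariant sets $\hat B_s:=\{\hat\ovv\in H_s\}$ and $\hat\iota(\hat B_s)=\hat B_{-s}$ are disjoint with equal $\tlambda_E$--measure, so ergodicity forces $\tlambda_E(\hat B_s)=0$; a countable dense $\{s_n\}\subset S^{d-1}$ then gives $\hat\ovv=0$ $\tlambda_E$--a.e., and Fubini yields $\ovv_\omega=0$ $\lambda_{E,\omega}$--a.e.\ for $\beta$--a.e.~$\omega$.

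\noindent\textbf{Unboundedness.} Let $B\subseteq\Sigma_E$ be the set of initial data with forward--bounded orbits. Since $\Theta_\ell$ merely shifts $q$ by $-\ell$, $B$ is $\Theta$-- and $\Phi$--invariant, so its $\hat\pi$--image $\hat B\subseteq\hS_E$ is well defined and $\tPhi_E$--invariant. Suppose for contradiction $\tlambda_E(\hat B)>0$; ergodicity then makes $\hat B$ of full $\tlambda_E$--measure. I derive a contradiction using the canonical $\omega$--projection $\Pi\colon\hat P\to\Omega$ (well defined off the $\tlambda_E$--null set $\{q\in\partial\cD\}$). Fix a countable dense family $\{g_n\}\subset C(\Omega)$ and apply Birkhoff to each $g_n\circ\Pi\in L^\infty(\hS_E,\tlambda_E)$; on the intersection $D$ of the resulting full--measure sets, the empirical occupation measures
\[
\mu_T^{\hat x}:=\frac{1}{T}\int_0^T\delta_{\Pi(\tPhi_E^t\hat x)}\,dt
\]
converge weakly to the probability measure $\varrho:=\Pi_*\tlambda_E/\tlambda_E(\hS_E)$ on $\Omega$. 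The disintegration exhibits $\varrho$ as absolutely continuous w.r.t.\ $\beta$, with density proportional to $\omega\mapsto\lambda_{E,\omega}^{\cD}\bigl(\Sigma_{E,\omega}\cap(\bR^d_p\times\cD)\bigr)$; since $\beta$ is non-atomic (by ergodicity and $\setsize J\ge2$), so is $\varrho$. On the other hand, for $\hat x\in\hat B$ the lift of the orbit visits only finitely many lattice cells $L(\hat x)\subset\cL$, hence $\Pi(\tPhi_E^t\hat x)$ stays in the finite closed set $F(\hat x):=\{\vartheta_\ell\omega_0(\hat x):\ell\in L(\hat x)\}\subset\Omega$ for all $t$; so $\mu_T^{\hat x}(F(\hat x))=1$, and the portmanteau inequality yields $\varrho(F(\hat x))\ge1$. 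This makes $\varrho$ atomic, contradicting the previous step for any $\hat x\in\hat B\cap D$. Hence $\tlambda_E(\hat B)=0$; Fubini delivers forward--unboundedness $\lambda_{E,\omega}$--a.e.\ for $\beta$--a.e.~$\omega$, and reversibility handles the backward direction.

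\noindent\textbf{Main difficulty.} The mechanism of \cref{prop:P}---``bounded invariant subset versus noncompact $\Sigma_{E,\omega}$''---collapses in the quotient, where every point of $\hat P$ admits a representative with $q\in\cD$ and so boundedness--in--$q$ is automatic. The substitute is the atomic--vs--nonatomic dichotomy for the occupation statistics of $\Pi$, which transports the non-atomicity of $\beta$ through the disintegration to $\varrho$ and rules out the finitely--supported time averages forced by bounded lifts. The delicate step is the passage from Birkhoff applied to each $g_n$ individually to a genuine weak limit of $\mu_T^{\hat x}$ on a single full--measure set, which is achieved via the countable dense family in $C(\Omega)$ together with the portmanteau theorem.
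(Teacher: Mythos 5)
Your first bullet is essentially the paper's own argument: descend $\ovv$ to a $\tPhi_E$--invariant $\hat v$ on $\hS_E$, use reversibility to pair the half--space sets $\hat B_s$ and $\hat B_{-s}$, let ergodicity kill them, and transfer back to $\Sigma_{E,\omega}$ by disintegration; the paper phrases this contrapositively but the content is identical.

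Your second bullet, however, takes a genuinely different route, and it works. The paper fixes $r>0$, partitions the set of $r$--bounded orbits according to the restriction $\pi_\Lambda(\omega)\in\Omega_\Lambda$ of the configuration to a finite window $\Lambda\subseteq\cL$, notes that each piece $B_\sigma$ is flow--invariant (the flow does not move $\omega$) and that non--atomicity of $\beta$ makes $\sup_{\sigma}\lambda_E(B_\sigma)$ arbitrarily small, so the zero--or--full dichotomy applied to the invariant sets $\hat\pi(B_\sigma)$ forces them all to be null. You instead equidistribute the environment seen from the particle: Birkhoff applied to a countable dense family of $C(\Omega)$ composed with the fundamental--domain projection $\Pi$ gives weak convergence of the occupation measures to the normalisation of $\Pi_*\tlambda_E$, which inherits non--atomicity from $\beta$ through the disintegration $\tlambda_E=\int_\Omega\lambda_{E,\omega}\rstr_{\bR^d_p\times\cD}\,d\beta(\omega)$, while a forward--bounded lift confines the occupation measures to a finite subset of $\Omega$, so portmanteau would make the limit atomic. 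Both proofs hinge on exactly the same input --- non--atomicity of $\beta$, coming from ergodicity of $\vartheta$ and $\setsize J\ge2$ --- but the paper's cylinder--set decomposition is more elementary (no occupation measures, no weak convergence; ergodicity enters only through the zero--or--full dichotomy), whereas your argument is conceptually crisper (bounded orbits see only finitely many environments, which is incompatible with equidistribution of the environment) and shows directly that already the forward--bounded set is null. Two small points to tidy, neither a real gap: choose the half--open cell $\bigl\{\sum_{k}x_k\ell_k\mid x_k\in[0,1)\bigr\}$ so that $\Pi$ is everywhere defined and Borel (with your a.e.--defined version you should remark that a.e.\ orbit spends zero time over $\pa\cD$, by Fubini and flow invariance of $\tlambda_E$), and note that the $0$--$1$ law for $\hat B$ is not actually needed, since $\tlambda_E(\hat B)>0$ already guarantees that $\hat B$ meets the full--measure Birkhoff set $D$; the measurability of $\hat\pi(B)$ is handled at the same level of detail as the paper's own $\hat B_\sigma=\hat\pi(B_\sigma)$.
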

\begin{proof}
  Still, $\ovv(x)=0$ by definition,
  where $\ovv^\pm(x)$ do not exist or are unequal.
  We define $\hat v_E\colon\hS_E\to\bR^d$ by
  $\hat v_E\circ\hat\pi=\ovv\rstr_{\Sigma_E}$.
%
\begin{itemize}[$\bullet$]
  \item Assume that there exists a measurable $\Omega'\subseteq\Omega$
    with $\beta(\Omega')>0$ such that
    $\lambda_{E,\omega}(\{x\in\Sigma_{E,\omega}\mid\ovv(x)\ne0\})>0$
    for all $\omega\in\Omega'$.
    This implies
    \begin{equation*}
      \tlambda_E(\{x\in\hS_E\mid\hat v_E(x)\ne0\})>0\text.
    \end{equation*}
    Again, there exists an open half space $H_s:= \{q\in\bR^d\mid\LA q,s\RA>0\}$
    indexed by a unit vector $s\in S^{d-1}$ so that
    \begin{equation*}
      \tlambda_E({\hat B}_s)>0\qtextq{for}
      {\hat B}_s:=\{x\in\hS_E\mid\hat v_E(x)\in H_s\}\text.
    \end{equation*}
    By reversibility of the flow and $\ov v^-=\ov v^+$
    \[\tlambda_{E,\omega}({\hat B}_{-s})=\tlambda_{E,\omega}({\hat B}_s)\text.\]
    On the other hand, $\hat v_E\circ\tPhi^t=\hat v_E\quad(t\in\bR)$,
    so that ${\hat B}_s$ and ${\hat B}_{-s}$
    are disjoint invariant subsets of positive measure.
    Thus $\tPhi_E$ is not $\tlambda_E$--ergodic.
  \item We have to show, that the bounded orbits in~$\Sigma_E$
    carry no measure.  Fix $r>0$ and let
    \begin{equation*}
      B{{}\equiv B_r}:=\{x\in\Sigma_E\mid\forall t\in\bR\colon\norm{q_t(x)}<r\}
    \end{equation*}
    be the set of orbits bounded by~$r$.
    With help of the projections
    \begin{equation*}
      \pi_{\Omega,E}\colon\Sigma_E\to\Omega
      \qtextq{and}
      \pi_\Lambda\colon\Omega\to\Omega_\Lambda:=J^\Lambda
    \end{equation*}
    \nomenclature[GpiOmegaE]{$\pi_{\Omega,E}$}{projection $\Sigma_E\to\Omega$}{}{}%
    \nomenclature[GpiLambda]{$\pi_\Lambda$}{projection $\Omega\to\Omega_\Lambda$}{}{}%
    for finite subsets $\Lambda\subseteq\cL$%
    \nomenclature[GLambda]{$\Lambda$}{finite subset in~$\cL$}{}{}
    we can partition $B=\Union_{\sigma\in\Omega_\Lambda}B_\sigma$ with
    \begin{equation*}
      B_\sigma:=B\isect(\pi_\Lambda\circ\pi_{\Omega,E})^{-1}\{\sigma\}
      \qquad(\sigma\in\Omega_\Lambda)\text.
    \end{equation*}
    Since the probability measure~$\beta$ is non--atomic, so is $\lambda_E$.
    We thereby conclude
    \begin{equation}\label{eq:Bsigmatozero}
      \sup_{\sigma\in\Omega_\Lambda}\lambda_E(B_\sigma)
        \xto{\Lambda\to\cL}0\text.
    \end{equation}
    On the other hand $B_\sigma$ is $\Phi_E$--invariant,
    and $\hat B_\sigma:=\hat\pi(B_\sigma)$ is preserved by $\tPhi_E$.
    By ergodicity of~$\tPhi_{E}$ and \eqref{eq:Bsigmatozero}
    we find a finite subset $\Lambda\subseteq\cL$ such that
    \begin{equation*}
      \sup_{\sigma\in\Omega_\Lambda}\tlambda_E(\hat B_\sigma)=0\text.
    \end{equation*}
    But $\hat\pi$ is non--singular, i.e.\ preserves sets of measure~$0$,
    and we conclude
    \begin{equation*}
      \lambda_E(B)
        =\sum_{\sigma\in\Omega_\Lambda}\lambda_E(B_\sigma)
	=0\text.
    \end{equation*}
\end{itemize}
\end{proof}

%
\section{No Ergodicity in the Poisson Case}\label{sec5a}
%
We now assume for the Poisson case that the single site potentials are smooth,
in addition to being compactly supported.
Then ergodicity of the dynamics is atypical:
\begin{theorem}
\label{thmPoisson}
Consider a random Poisson potential on $\bR^d$.
Then for any $E\in\bR$ the motion on the energy surface
$\Sigma_{E,\omega}$ is $\beta$--a.s.\ not ergodic.
\end{theorem}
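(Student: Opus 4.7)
The plan is to reduce the theorem via the contrapositive of \cref{prop:erg:erg} and a $0/1$-dichotomy to the statement that $\tPhi_E$ on $\hS_E$ is not $\tmu_E$-ergodic for every $E\in\bR$. The property ``$\Phi_{E,\omega}$ is $\lambda_{E,\omega}$-ergodic'' is $\vartheta$-invariant, since translating $\omega$ by $\ell$ merely conjugates the flow by the spatial translation $(p,q)\mapsto(p,q-\ell)$; by $\vartheta$-ergodicity of $\beta$ it therefore takes a deterministic value, so a positive $\beta$-measure counterexample already implies a $\beta$-a.s.\ one.

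To refute ergodicity of $\tPhi_E$ I would use the quotient asymptotic velocity $\hat v_E\colon\hS_E\to\bR^d$ given by $\hat v_E\circ\hat\pi=\ovv\rstr_{\Sigma_E}$, which is $\tPhi_E$-invariant by the Poisson counterpart of \cref{prop:A}. Non-constancy of $\hat v_E$ follows if $\Sigma_{E,\omega}$ contains, for $\beta$-a.e.\ $\omega$, two disjoint $\Phi_\omega$-invariant measurable sets of positive $\lambda_{E,\omega}$-measure, on the first of which $\ovv\equiv0$ and on the second $\ovv\ne0$.

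Both sets arise from the local rigidity of the marked Poisson process: for any compact $K\subseteq\bR^d$, any finite marked configuration $\xi$ on $K\times J$, and any enclosing shell $K'\setminus K$, the cylinder event ``$\omega|_K$ approximates $\xi$ and $\omega\bigl((K'\setminus K)\times J\bigr)=0$'' has strictly positive $\beta$-probability by independence and~\eqref{beta:def}. Mixing of the $\bR^d$-action $\vartheta$ on $(\tilde\Omega,\beta)$ combined with the spatial ergodic theorem then yields, $\beta$-a.s., infinitely many translated copies of any such pattern. I would apply this first to a pattern $\xi^*$ on $\ov{B_R(0)}$, surrounded by an empty buffer shell of width exceeding $\max_j\diam\supp W_j$, engineered so that the resulting local potential possesses at energy $E$ an open $\Phi$-invariant trap $T^*\subseteq\Sigma_E$ of positive Liouville measure supported in $\bR^d_p\times B_R(0)$ (all translated copies combined yield $\ovv\equiv0$ on an invariant set of positive measure); and second to arbitrarily large empty balls, where $V_\omega\equiv0$ and a positive-measure set of initial conditions runs nearly freely across a chain of such empty balls, hence carries $\ovv\ne0$.

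The main obstacle is the trap construction in the case where every $W_j$ stays below $E$, so that no topological potential barrier of height $>E$ can be built from clusters of $W_j$. Here one must instead exploit the $C^\eta$-smoothness ($\eta\ge2$) to engineer a non-degenerate local minimum of $V_\omega$ from Poisson clusters and apply a Moser-type invariant-torus theorem to extract open invariant sets of positive Liouville measure nearby. Securing \emph{positive} Liouville measure—rather than just topological non-triviality—is the quantitative subtlety, and is where the smoothness hypothesis is indispensable.
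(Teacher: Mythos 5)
Your overall architecture matches the paper's: build a finite local pattern of Poisson points (with an empty buffer shell so the outside configuration cannot interfere), show that such a pattern confines a set of orbits of positive Liouville measure at the given energy, note that the corresponding cylinder event has positive $\beta$-probability, and upgrade to $\beta$-a.s.\ non-ergodicity via the ergodic/mixing properties of the Poisson process (the paper uses Borel--Cantelli on $2R\bZ^d$-translates rather than your $0$--$1$ law plus the detour through \cref{prop:erg:erg}, but that is a stylistic difference). The genuine gap is that the heart of the theorem --- actually constructing the confining pattern for an \emph{arbitrary} nonzero compactly supported single-site potential $W$ and an \emph{arbitrary} energy $E$ --- is exactly what you leave as ``the main obstacle''. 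The paper devotes most of its effort to this: the approximation \cref{lem:dens} (scaled, translated copies of $W$ converging in $C^k$ to $I\cdot U$ with $I=\int W\,d\lambda^d$ as in \eqref{def:I}), the spatial/energy scalings to hit the prescribed energy, and a trichotomy by the sign of~$I$: for $I>0$ a barrier disconnecting the energy surface; for $I<0$ an elliptic circular orbit surrounded by KAM tori of positive measure, with an extra degeneracy-lifting construction for $d\ge3$ where rotational symmetry makes the transverse frequencies degenerate; and for $I=0$ an argument via injectivity of the Radon transform producing two curved ``mirror'' potentials between which orbits are trapped, again stabilised by KAM. Your sketch (``engineer a non-degenerate local minimum and apply a Moser-type theorem'') gestures at the $I<0$-type case but does not carry it out, does not address the zero-integral case at all, and your quoted regularity $\eta\ge2$ is insufficient for the KAM step (the paper assumes smooth single-site potentials and cites P\"oschel's $C^{3d}$ threshold). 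As it stands the proposal reduces the theorem to its hardest unproved ingredient.

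A secondary flaw: your second invariant set, carrying $\ovv\ne0$ by running ``across a chain of empty balls'', is not justified --- the a.s.\ existence of arbitrarily large empty balls does not produce a \emph{positive-measure} set of orbits that stay in empty regions for all time, and generic orbits leaving such a ball meet scatterers with no control on their asymptotics. Fortunately it is also unnecessary: once you have a bounded invariant trap of positive Liouville measure, non-ergodicity follows because the complement has positive measure on the non-compact surface $\Sigma_{E,\omega}$ (the trap events fail on a spatial set of positive density), which is exactly the mechanism of the second bullet of \cref{prop:P} and the way the paper concludes. So you should drop the $\ovv\ne0$ construction and concentrate the work where it is needed, namely on the confinement construction itself.
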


\noindent
{\bf Strategy of proof and first steps}\\
$\bullet$
The theorem is true for $d=1$ dimensions. If there is a single site 
potential, say $W_1$, and $q\in\bR$ with $W_1(q)>0$, then we have
bounded orbits $\beta$--a.s.\ for all~$E$. If instead all $W_j$ are
non--positive, $\Sigma_{E,\omega}$ has two connected components for 
$E>0$ and one has $\beta$--a.s.\ bounded orbits for $E\le 0$.\\
$\bullet$
So we assume $d\ge2$.
If the Poisson potential is zero, then we have free motion,
which is not ergodic in any dimension. 
Otherwise there is a non--zero
single site potential, say $W_1$. 
The proof method depends on the sign of 
\beq
  I:={\textstyle \int_{\bR^d}} W \,d\lambda^{d}
\Leq{def:I}
\nomenclature[AI]{$I$}{integral over $W_1$}{}{}%
(we temporarily omit the index of $W_1$).
Then, for given $I$ and energy~$E$, we construct a set~$\cL$%
\nomenclature[AL]{$\cL$}{set of translation points for poisson constructions}{}{}
and thereby a \emph{finite} sum $q\mapsto \sum_{\ell\in\cL} W(q-\ell)$  
of translated potentials which for the given
energy confines trajectories of positive measure.\\
This then suffices to show that $\beta$--a.s.\
the flow on $\Sigma_{E,\omega}$ is not ergodic:

\begin{enumerate}[1.]
\item
If the restriction of $\omega$ 
to a large ball $B_{R}(0)\subseteq\bR^d$ is near to 
$\sum_{\ell\in \cL}\delta_{(\ell,1)}$ w.r.t.\ the
topology from \eqref{eq:cN}, then the flow is shown to be
non--ergodic, too. This event 
in $\Omega$ has positive probability.
\item
By the nature of the Poisson process $\beta$
the Borel--Cantelli Lemma can be applied to the 
$\vartheta_\ell$--translates of that
event with $\ell\in 2R\bZ^d$, 
so that the flow is even $\beta$--a.s.\ non--ergodic.
\end{enumerate}

\nomenclature[ASOd]{$\mathrm{SO}(d)$}{special orthogonal group}{}{}%
${\rm SO}(d)$--invariant potentials
$U\in C^\infty_c\bigl(\bR^d,[0,\infty)\bigr)$%
\nomenclature[AU]{$U$}{$\mathrm{SO}(d)$--invariant function}{}{}
lead to integrable motion, and for suitable~$U$ to a positive measure 
of bounded orbits 
(see, {\em e.g.}, Arnol'd  \cite{Arn78}, Section 8). \\
We first approximate a given such
$U\in C^\infty_c\bigl(\bR^d,[0,\infty)\bigr)$, 
with $U(q)=0$ for~$q$ near zero,
by summing scaled translates of the single site function 
$W\in C^\infty_c(\bR^d,\bR)$.
We have $\int_{\bR^d} W_a \,d\lambda^{d}=I$ for
\nomenclature[AWa]{$W_a$}{$W$ scaled by~$a$}{}{}%
\begin{equation}\label{W-scaling}
  W_a(q) := a^{-d} W(q/a)\qquad (a>0)\text.
\end{equation}%

\begin{lemma}\label{lem:dens}
  For $I>0$ in \eqref{def:I} and $U$
  as above there exists a map $g\in C^{\infty}(\bR^d,\bR^d)$ 
  which, restricted to the interior of ${\rm supp}(U)$, 
  is a diffeomorphism onto its (bounded) image $\mathrm{Im}$%
  \nomenclature[AIm]{$\mathrm{Im}$}{image of~$U$}{}{}
  such that the distributions 
  \[D_\vep\;:= \;\vep^d \;
  {\textstyle \sum_{\ell\in\vep \bZ^d \cap \mathrm{Im}}} 
  \;\delta_{g^{-1}(\ell)} \qquad(\vep>0)\]
  \nomenclature[ADepsilon]{$D_\vep$}{distribution approximating $D_0$}{}{}%
  \nomenclature[Gdeltax]{$\delta_x$}{Dirac measure on~$x$}{}{}%
  converge vaguely to the distribution $D_0:=U\lambda^d$%
  \nomenclature[ADzero]{$D_0$}{distribution with density~$U$}{}{}
  with density~$U$ w.r.t.\ Lebesgue measure $\lambda^d$ for $\vep\searrow0$.
  Moreover, in uniform $C^k$ norm, with $c:=1/(2(d+k+1))$,
  \begin{equation}\label{faltung}
    I^{-1}\lim_{\vep\searrow 0}D_\vep * W_{\vep^c}\; =\; U.
  \end{equation}
\end{lemma}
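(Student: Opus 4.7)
The plan is to build $g$ as a transport map pushing $U\,\lambda^d$ forward to Lebesgue measure on a bounded image, and then to couple the resulting Riemann-sum interpretation of $D_\vep$ with a standard mollifier estimate.

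For the construction of $g$, the change of variables formula shows that the claim $D_\vep \to U\lambda^d$ forces $\abs{\det Dg(q)} = U(q)$ on $\interior(\supp U)$: for $\varphi \in C_c(\bR^d)$ the sum $\int\varphi\,dD_\vep = \vep^d\sum_{\ell\in\vep\bZ^d\cap\mathrm{Im}}\varphi(g^{-1}(\ell))$ is a Riemann sum of mesh $\vep$ for $\int_{\mathrm{Im}}\varphi(g^{-1}(y))\,dy = \int \varphi(q)\abs{\det Dg(q)}\,dq$. Assuming WLOG $U>0$ on $\interior(\supp U)$, such a $g$ is supplied explicitly by the Knothe--Rosenblatt rearrangement: iteratively let $g_i(q_1,\ldots,q_i)$ be the primitive of the $i$-th conditional marginal of~$U$ in the variable~$q_i$. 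Then $g:=(g_1,\ldots,g_d)$ is smooth with lower-triangular Jacobian of determinant $U$; a smooth extension to all of $\bR^d$ gives the required map, with bounded image since $\supp U$ is compact.

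Vague convergence $D_\vep\to D_0$ is then the standard convergence of a Riemann sum of the continuous, compactly supported function $\varphi\circ g^{-1}$ on $\mathrm{Im}$. For~\eqref{faltung} I split
\begin{equation*}
  D_\vep * W_{\vep^c} - I\,U
    = (D_\vep - D_0)*W_{\vep^c}
    + (D_0 * W_{\vep^c} - I\,U)\text.
\end{equation*}
The mollification error $D_0*W_{\vep^c} - I\,U$ has $C^k$ norm $\cO(\vep^c)$ by a Taylor expansion of $U$ inside $\int (U(q-x)-U(q))W_{\vep^c}(x)\,dx$, using $\int W_{\vep^c}=I$ and $\supp W_{\vep^c}\subseteq B_{\cO(\vep^c)}(0)$. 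The Riemann-sum error, applied to $\pa_q^\alpha W_{\vep^c}(q-g^{-1}(\cdot))$ on $\mathrm{Im}$ for $\abs\alpha \le k$, is bounded by $C_g\,\vep\,\norm[\infty]{\nabla \pa_q^\alpha W_{\vep^c}} = \cO(\vep^{\,1-c(d+k+1)})$, since $\norm[\infty]{\pa^{k+1} W_{\vep^c}} = \cO(\vep^{-c(d+k+1)})$. The choice $c = 1/(2(d+k+1))$ makes both terms $\cO(\vep^{1/2})$, which vanishes as $\vep\searrow 0$.

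The main obstacle is the simultaneous construction of a smooth diffeomorphism $g$ whose Jacobian matches~$U$ even as $U$ vanishes at the boundary of its support (handled by carrying out the rearrangement on the open locus $\{U>0\}$ and smoothly extending), and the delicate balance of rates in the $C^k$ estimate: the mollifier $W_{\vep^c}$ needs small enough support to act as an approximate identity for~$I^{-1}D_0$, yet large enough relative to the lattice spacing~$\vep$ so that its diverging $(k+1)$-st derivatives do not defeat the single factor of~$\vep$ gained from the Riemann-sum approximation. The exponent $c = 1/(2(d+k+1))$ is precisely what equalizes these two competing effects.
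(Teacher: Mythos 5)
Your overall architecture is the same as the paper's: a map $g$ with $\det Dg=U$ on $\interior(\supp U)$, vague convergence read off as a change--of--variables/Riemann--sum statement, and \eqref{faltung} split into a mollification error and a discretization error (your second term is exactly the paper's \eqref{U:app}; your first term is what the paper dismisses as a ``direct estimate''). Where you genuinely diverge is the construction of $g$: the paper uses the $\mathrm{SO}(d)$--invariance of $U$ (part of ``$U$ as above'') and sets $g(q)=\tilde g(\norm q)\,q/\norm q$ with $\tilde g(r)=\bigl(d\int_0^r x^{d-1}\tilde U(x)\,dx\bigr)^{1/d}$, which gives $\det Dg=U$ by a one--line computation and is automatically globally smooth and bounded because $U$ vanishes near $0$. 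Your Knothe--Rosenblatt map does the job on $\{U>0\}$, but the claimed ``smooth extension to all of $\bR^d$'' is not automatic: the components after the first are quotients by marginal densities which vanish at the boundary of the corresponding projections, so smoothness up to $\pa\supp(U)$ is exactly what may fail; this is the difficulty the paper's radial ansatz is designed to avoid (for the way the lemma is used only $g$ on $\interior(\supp U)$ matters, so this is repairable, but as written it is an unproved assertion).

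The genuine gap is in the discretization estimate. Bounding the Riemann--sum error of $x\mapsto\pa_q^\alpha W_{\vep^c}\bigl(q-g^{-1}(x)\bigr)$ on $\mathrm{Im}$ by $C_g\,\vep\,\norm[\infty]{\nabla\pa_q^\alpha W_{\vep^c}}$ presupposes a uniform Lipschitz constant for $g^{-1}$ on $\mathrm{Im}$. No such constant exists: $Dg$ is bounded while $\det Dg=U\to0$ at $\pa\supp(U)$, so the smallest singular value of $Dg$ is at most $U^{1/d}$ and $\norm{(Dg)^{-1}}\ge U^{-1/d}\to\infty$ near $\pa\,\mathrm{Im}$; this degeneration is intrinsic (the paper's $g$ has it too), so the displayed bound is unsupported, and it is precisely near $\pa\,\mathrm{Im}$ that the sample points $g^{-1}(\ell)$ spread out at a scale much coarser than $\vep$. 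What saves the lemma is a cancellation your estimate ignores: each sample carries the weight $\vep^d$, which is the $U$--mass of its cell, so the degenerate layer contributes little in total; equivalently $\norm{(Dg)^{-1}}\,U$ stays bounded. A correct ``direct estimate'' must exploit this, e.g.\ by splitting $\mathrm{Im}$ into $\{U\circ g^{-1}\ge\vep^{b}\}$, where your chain--rule argument applies with Lipschitz constant $\cO(\vep^{-b})$ and still yields $\cO(\vep^{1/2-b})$ for $b<\tfrac12$, and a thin layer $\{U\circ g^{-1}<\vep^{b}\}$ whose contribution to both the sum and the integral is controlled through the smallness of the $U$--mass there. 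Finally, a minor bookkeeping slip: with $c=1/(2(d+k+1))$ the mollification term is only $\cO(\vep^{c})$, not $\cO(\vep^{1/2})$, so $c$ does not equalize the two rates; it merely makes both vanish, which is all that \eqref{faltung} requires.
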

\begin{proof}
Write $U(q)=\tilde U(\norm q)$ and set 
$\tilde g(r) := \bigl(d\int_0^r x^{d-1}\tilde U(x)\,dx\bigr)^{1/d}$.

Then, by the assumptions on $U$, the map 
$g\colon\bR^d\ar\bR^d$, with $g(0):=0$ and $g(q):=\tilde g(\|q\|)
\frac{q}{\norm q}$ else, is zero in a neighbourhood of zero, 
$g\in C^k$ and bounded.

As $Dg(q)=\tilde g(\norm q)(\idty-P_q)/\norm q+\tilde g'(\norm q)P_q$
with the orthogonal projection~$P_q$%
\nomenclature[APq]{$P_q$}{orthogonal projection}{}{}
onto ${\rm span}(q)$%
\nomenclature[Aspan]{$\mathrm{span}$}{linear span}{}{},
$\det(Dg)(q)
  =\tilde g'(\norm q)\;\bigl(\tilde g(\norm q)/\norm q\bigr)^{d-1}
  =U(q)$.
This shows that $g\rstr_{{\rm int}({\rm supp}(U))}$
is a diffeomorphism onto ${\rm Im}$.

The image measure of $\idty_{{\rm Im}}\,\lambda^d$ under $g^{-1}$
is $\det(Dg)\,\lambda^d$, and $\vep^d \;
{\textstyle \sum_{\ell\in\vep \bZ^d}} 
\;\delta_{\ell} \to \lambda^d$.
So $\lim_{\vep\searrow 0} D_\vep=D_0$.

To prove the statement~\eqref{faltung} on the $C^k$~norm,
we notice that $\vep\mapsto I^{-1}W_{\vep^c}$
is an approximation to the delta distribution~$\delta_0$, and
\beq
I^{-1}{\textstyle \int_{\mathrm{Im}}}\; W_{\vep^c} \bigl(q-g^{-1}(x)\bigr)\,dx
\;=\; U(q)+\cO(\vep^c),
\Leq{U:app}
with corresponding formulae for the derivatives.
%
Then~\eqref{faltung} follows by a direct estimate.
%
\end{proof}

We need to scale natural Hamiltonian systems
\[H\colon P_{\rm ext}\ar\bR\quad , \quad H(t,p,q)=\eh\|p\|^2+V(q)\] 
\nomenclature[APext]{$P_{\mathrm{ext}}$}{extended phasse space}{}{}%
with flows
on extended phase space $P_{\rm ext}:=\bR_t\times T^*\bR^d_q$.
\nomenclature[ATstar]{$T^*$}{cotangential functor}{}{}%
We use three types of scalings which all transform the natural Hamiltonian
and its flow in a simple way:
\begin{enumerate}[1.]
\item \textbf{Motions}
  $M:\bR^d_q\to \bR^d_q$,
  \nomenclature[AM]{$M$}{motion in $\bR_q^d$}{}{}%
  $q\mapsto Oq+v$ with $O\in{\rm SO}(d)$, $v\in\bR^d$
  lift to symplectic maps $M^*$ on $T^*\bR^d_q$,
  \nomenclature[AMstar]{$M^*$}{motion on $T^*\bR_q^d$}{}{}%
  resp.\ $\cM:= {\rm Id}_t\times M^*$ on $P_{\mathrm{ext}}$,
  \nomenclature[AM]{$\cM$}{motion on $P_{\mathrm{ext}}$}{}{}%
  and
  \nomenclature[AMtilde]{$\tilde\cM$}{motion in $C^k(P_{\mathrm{ext}},\bR)$}{}{}%
  \[\tilde \cM: C^k(P_{\rm ext},\bR)\to C^k(P_{\rm ext},\bR)\qtextq,
    \tilde\cM\,  H:= H\circ \cM\text.
  \]
\item \textbf{Spatial scalings ($L^\infty$ dilations)}
  $S_c:P_{\rm ext}\ar P_{\rm ext}$,
  \nomenclature[ASc]{$S_c$}{spatial scaling}{}{}%
  $S_c(t,p,q):=(t/c,\,p,\,c\,q)$, and
  \nomenclature[AStildec]{$S_c$}{spatial scaling on $C^k(P_{\rm ext},\bR)$}{}{}%
  \[\tilde S_c\colon C^k(P_{\rm ext},\bR)\to C^k(P_{\rm ext},\bR)\qtextq,
    \tilde S_c\,H := H\circ S_c\qquad (c>0)\text.
  \] 
\item \textbf{Energy scalings}
  $\cE_e\colon P_{\rm ext}\ar P_{\rm ext}$,
  \nomenclature[AEe]{$\cE_e$}{energy scaling on $P_{\rm ext}$}{}{}%
  $\cE_e(t,p,q):=\bigl(\sqrt{e}\,t,\,p/\sqrt{e},\,q\bigr)$, and
  \nomenclature[AEetilde]{$\tilde\cE_e$}{energy scaling on $C^k(P_{\rm ext},\bR)$}{}{}%
  \[\tilde\cE_e:C^k(P_{\rm ext},\bR)\to C^k(P_{\rm ext},\bR)\quad\text,\quad
    \tilde\cE_e\, H := e\;H\circ \cE_e \qquad (e>0)\text.
  \]
  Notice that, regarding a single site potential~$W$
  as a function on~$P_{\rm ext}$,
  $W_a$ in \eqref{W-scaling}~is a combination of spatial and energy scalings of~$W$.
\end{enumerate}
\begin{proof}[The Case $I > 0$]
$\bullet$
Given $E\in\bR^+$, we choose $\tilde U\in C^k_c\bigl([0,\infty),\bR\bigr)$
with $\tilde U(r)=0$ for all~$r$ smaller than
$2\,{\rm diam}\bigl({\rm supp}(W)\bigr)$ and having one maximum,
with height $\max(\tilde U)\ge 2E$.  
$U=\tilde U\circ\norm{{}\cdot{}}$ then has the property that 
for all $e\in(0,E]$
the energy surface $\{(p,q)\in T^*\bR^d\mid \eh\|p\|^2+U(q)=e\}$ has 
two components.\\
$\bullet$
By \cref{lem:dens} for some $\vep\in(0,\min\{1,I^{((1-c)d)^{-1}\}}]$
we find an approximation of~$U$ of the form
$I^{-1} D_\vep * W_{\vep^c}$ with the same property.
Then, by the combination of spatial and energy scalings, the sum
\begin{equation*}\textstyle
  \sum_{\ell\in\vep \bZ^d \cap\mathrm{Im}}
    W\bigl({}\cdot{}-\tfrac{g^{-1}(\ell)}{\vep^c}\bigr)
    =\tilde\cE_{I\vep^{-(1-c)d}}\circ\tilde S_{\vep^c}
      \bigl(\vep^{-d}D_\vep * W_1\bigr)
\end{equation*}%
of translated single site potentials
has that property for all energies~$e\in(0,I\vep^{-(1-c)d}E]$, too.
Since $\vep^{(1-c)d}\le I$, $E\in(0,I\vep^{-(1-c)d}E]$.\\
$\bullet$
For any realization $\omega\in\Omega$ 
which inside a ball of large radius
is near enough to $\sum_{\ell\in\cL} \delta_{(\ell,1)}$, with $\cL$ as above,
 the energy surface
$\Sigma_{E,\omega}$ consists of at least two connected components.
Thus the motion on $\Sigma_{E,\omega}$ is not ergodic $\beta$--a.s..
\end{proof}

\begin{proof}[The Case $I<0$]
$\bullet$
We first treat the case $d=2$ and then indicate the modifications
needed for larger dimensions.

We use \cref{lem:dens} in order to approximate a 
{\em non--positive} centrally symmetric function~$U$ by
summing scaled translates of the single site function~$W$.
We first assume that the energy is positive, more
specifically $E=1$. We write again $U(q)=\tilde U(\norm q)$ and
choose the profile $\tilde U$ so that there is a circular
periodic orbit $t\mapsto q(t)=r\V{\cos(\omega t)\\\sin(\omega t)}$
of energy~$E$ with $\omega>0$%
\nomenclature[Gomega]{$\omega$}{angular frequency}{}{}
in the potential~$U$.

\begin{figure}
\begin{center}
  \ifpdf
    \includegraphics{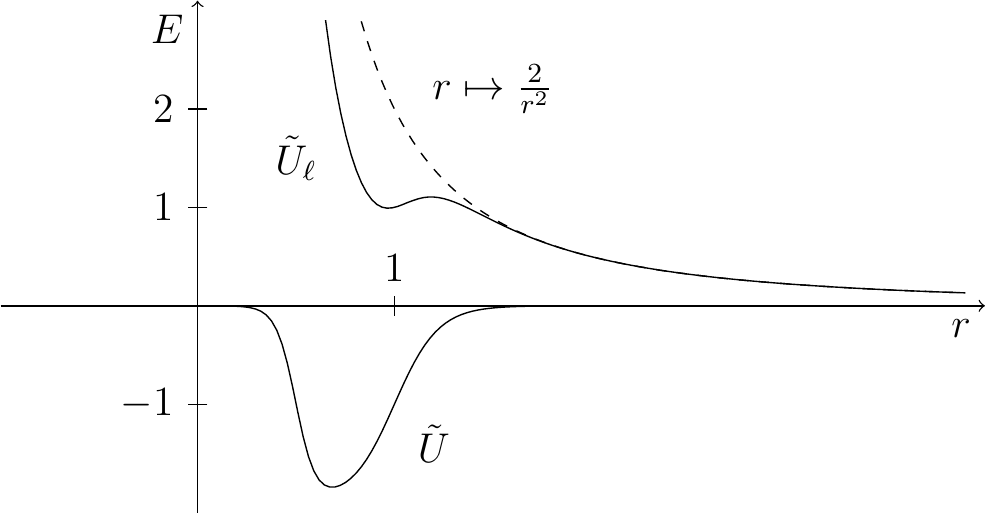}
  \else
    \includegraphics{ueff.ps}
  \fi
\end{center}
\caption{The effective radial potential $\tilde U_\ell$}
\label{fig:radial}
\end{figure}

We thus consider the effective potential 
$\tilde U_\ell(r) := \tilde U(r)+ \frac{\ell^2}{2r^2}$%
\nomenclature[AUtildel]{$\tilde U_\ell(r)$}{effective potential}{}{}
with angular momentum parameter $\ell\in\bR$%
\nomenclature[Al]{$\ell$}{angular momentum parmeter}{}{}.
The conditions for a circular periodic orbit of radius~$r$ are 
$E=\tilde U_\ell(r)$ and $\tilde U'_\ell(r)=0$.
\par
In order to control the stability of the orbit 
(make the linearized flow elliptic and let the frequency
vary with the perturbation), we demand $\tilde U''_\ell(r)>0$
and, say $\tilde U^{(4)}_\ell(r)\neq 0$.

All these conditions can be satisfied by, {\em e.g.}, 
first choosing $r:=1$, $\ell:=2$, and then finding an appropriate 
$\tilde U\in C^\infty_c\bigl([0,\infty),[-1,0]\bigr)$ with $\tilde U(r)=0$
in a neighborhood of zero, $\tilde U(1)=-1$, meeting the assumptions 
on the derivatives and angular frequency $\omega=2$.
Cf.~\cref{fig:radial}.\\
$\bullet$
Then by KAM theory the Hamiltonian flows with potentials $C^k$--near
to $U$ have, too, an elliptic orbit of energy $E=1$, surrounded 
by invariant tori of  positive measure.
See, {\em e.g.}, Arnol'd  \cite{Arn78}, Appendix~8.
According to P\"oschel \cite{Po82} the 
differentiability condition $W\in C^k(\bR^d,\bR)$ 
with $k=3d$ suffices.

Let $\vep\in(0,1)$ be small enough so that the potential 
$\hat{U}:= |I|^{-1} D_\vep * W_{\vep^c}$%
\nomenclature[AUhat]{$\hat U$}{approximation to~$U$}{}{}
(with $D_\vep$ from \cref{lem:dens}) meets that condition for KAM theory. 
Then with the scaling \eqref{W-scaling} of $\hat{U}$, 
$|I|\vep^{-(1-c)d}\hat{U}_{\vep^{-c}}$ is a sum of 
translated single site potentials meeting that condition for 
energy $e:=|I|\vep^{-(1-c)d}$%
\nomenclature[Ae]{$e$}{energy}{}{}. 
This proves the claim for large enough energies~$e$.\\
$\bullet$
In order to solve the problem for an arbitrary smaller energy $E<e$,
we first add to $U$ a function
$\Delta U\in C^\infty_c\bigl(\bR^d,(-\infty,0]\bigr)$,
\nomenclature[AUDelta]{$\Delta U$}{non--positive function}{}{}%
which has a constant value
$\Delta E:=e^{-1}E-1=\abs I^{-1}\vep^{(1-c)d}(E-e)<0$
on the support of~$U$.
Over $\mathrm{supp}(U)$, the flow at energy~$1$ with potential~$U$
equals the flow for energy $\abs I^{-1}\vep^{(1-c)d}E$
with potential $U+\Delta U$.\\
$\bullet$
The case of dimension $d\ge3$ cannot be treated in complete analogy,
since then motion in phase space generated by rotations
from $\mathrm{SO}(d)$ not only leaves the Hamiltonian invariant, but
also can transform a circular orbit to {\em different} circular
orbits. This shows that in the integrable problem periodic
orbits of constant radius have degenerate transverse frequencies.

KAM theory can be applied after lifting this degeneracy.
One way to do this is to construct a potential which 
near the circle 
\[\{q\in\bR^d\mid q_1^2+q_2^2=1, q_3=\ldots= q_d=0\}\]
is of the form $\tilde U(q_1^2+q_2^2)+ \hat U(q_3,\ldots ,q_d)$,
with 
$\hat U(q_3,\ldots,q_d)=\sum_{m=3}^d(\omega_mq_m^2+\tau_mq_m^4)$%
\nomenclature[AUhat]{$\hat U$}{perturbation to lift degeneracies}{}{}
integrable, non--resonant and non--degenerate.
Since such a~$U$ locally is the sum of functions of the
coordinates, $U$~can again be locally 
approximated in a manner similar to \cref{lem:dens}.
\end{proof}
\begin{proof}[The Case $I = 0$]
$\bullet$ Energies $E\le 0$ again lead to bounded motion.\\
$\bullet$ By injectivity of Radon transform (see, \emph{e.g.}, Natterer
\cite{Nat}, Theorem 2.1), there exists a hyperplane $\cH\subseteq\bR^d$%
\nomenclature[AH]{$\cH$}{hyperplane}{}{}
(perpendicular to some $e_1\in S^{d-1}$)
so that $I':=\int_\cH W \,d\lambda^{d-1}\neq 0$. 
By a translation of all the single site potentials $W_j$ 
(which does not change the class of Poisson Hamiltonians)
we can assume that $0\in\cH$.
By our assumption
$I=\int_{\bR^d} W \,d\lambda^{d}= 0$ we can even assume that 
$I'>0$%
\nomenclature[AJ]{$J$}{integral of~$W$ over~$\cH$}{}{}.

We supplement $e_1$ to an orthonormal basis $e_1,\ldots,e_d$ of~$\bR^d$%
\nomenclature[Ae1]{$e_1,\ldots,e_d$}{Basis of $\bR^d$}{}{}
and define the lattice $\cL:={\rm span}_\bZ(e_2,\ldots,e_d)$.
Then for $r>0$ large and the ball $B_r^d\subseteq \bR^d$ of radius~$r$
the linear combination
\beq
\widehat{W}_\vep(q):=
\vep^{d-1}\sum_{\ell\in\vep \cL\cap B_{2r}^d} W(q-\ell)\qquad(q\in\bR^d)
\Leq{Weps}
\nomenclature[AWtildeepsilon]{$\widetilde W_\vep$}{combination of single site potentials}{}{}%
converges, as $\vep\to0$, in $C^k$ sense to
$\widehat{W}_0\in C^\infty_c(\bR^d,\bR)$%
\nomenclature[AWtildezero]{$\widetilde W_0$}{limit of $\widehat W_\vep$}{}{}
with $\widehat{W}_0(q)=\int_{\cH+q_1e_1} W(x)\,dx$
for all~$q$ with $\norm{q-q_1e_1}\le r$ (setting $q_1:=\LA q, e_1\RA$).
So inside that cylinder $\cH\cap B_r\,+\,{\rm span}_\bR(e_1)$,
$\widehat{W}_\vep$ is nearly invariant under translation
perpendicular to~$e_1$, and $\widehat{W}_\vep(q)>0$ if $q_1=0$.

To orbits of regular energy $E\in(0,I')$ 
entering ${\rm supp}\,\widehat{W}_\vep$
with velocity nearly parallel to $e_1$ and position $q$ within the 
cylinder, the potential acts like nearly planar mirror. 

By translating two such potentials by $\pm R e_1$ with $R\gg r$, 
we get a system of two mirrors. Between these mirrors 
trajectories of appropriate energy bounce back and forward 
near the axis ${\rm span}_\bR(e_1)$
for a long time. 

To make the motion bounded for a set of initial
conditions of positive measure, we give the mirrors
inside curvatures stricly smaller than $1/R$. 
This can be done by changing the summand in \eqref{Weps} to
$W(q-\ell-Q(\ell)e_1)$ with an appropriate quadratc form $Q$%
\nomenclature[AQ]{$Q$}{quadratic form}{}{}.
Under this condition the orbit on the axes
becomes linearly elliptic for a potential with perfect 
axial symmetry. Then by KAM theory and appropriate scaling we 
get the result for all positive energies.
\end{proof}

%
\section{No Hyperbolicity for Bounded Potentials}\label{sec6}
%
By far the simplest ergodic flows or maps are the uniformly hyperbolic ones.
Examples include hyperbolic torus automorphisms
and  geodesic flows on
compact manifolds of negative sectional curvature.

Although is known that there exist severe \emph{topological} obstructions
against a flow on a manifold to be Anosov, these do, as shown in \cref{exa}
below, not apply to the motion
in a potential on configuration space $\bR^d$.
However, by \cref{theo} below, 
\emph{geometric} obstructions exist if the potential is bounded.

There exist examples of ergodic motion in smooth bounded potentials 
(which are not uniformly hyperbolic),
see \cite{DL91} by V.\ Donnay and C.\ Liverani.
So our theorem does not exclude ergodicity for concrete smooth potentials
on~$\bR^d$ and some energies.
But it shows that it would be more difficult to prove ergodicity,
and we would not expect ergodicity for open energy intervals.

For $d\ge 2$ we consider the flow $\Phi\colon\bR\times P\to P$ on phase space
$P:= \bR^d_p\times \bR^d_q$ generated by the natural Hamiltonian function
\[H\in C^2(P,\bR)\qtext,H(p,q)=\eh\norm p^2+V(q)\text,\]
where~$V$ and its first and second
derivatives are assumed to be bounded.
This in particular ensures $\Phi\in C^2(\bR\times P, P)$.

We call the flow $\Phi_E\colon\bR\times \Sigma_E\to\Sigma_E$ restricted 
to an energy surface $\Sigma_E:= H^{-1}(E)$  \emph{Anosov} if 
there is a $d\Phi_E$--invariant splitting
\nomenclature[Ad]{$d$}{exterior derivative}{}{}%
\[T_x\Sigma_E = {\rm span} \bigl(X_H(x)\bigr) \oplus E^u(x)\oplus E^s(x)\]
\nomenclature[AT]{$T$}{tangential functor}{}{}%
\nomenclature[AXH]{$X_H$}{hamiltonian vectorfield}{}{}%
\nomenclature[AEu]{$E^u$}{strong unstable bundle}{}{}%
\nomenclature[AEs]{$E^s$}{strong stable bundle}{}{}%
into a one-dimensional bundle spanned by the Hamiltonian vector field $X_H$,
and the strong (un)stable bundles $E^{u/s}$, along which $d\Phi_E$
is exponentially contracting in backward resp.\ forward time.
Even if $\Sigma_E$ is not compact, this is unambiguously defined 
by existence of $C\ge1$ and $\lambda>0$ with
\begin{equation*}
  \norm[\Phi_E(t,x)]{(d\Phi_E^t)_x (v)}
    \le C\exp(-\lambda t)\norm[x]v
      \qquad\bigl(x\in\Sigma_E,v\in E^s(x),t\in[0,\infty)\bigr)\text,
\end{equation*}
\nomenclature[Glambda]{$\lambda$}{contraction exponent}{}{}%
(and analogously for $E^u$) if we take translation--invariant 
norms $\norm[x]{{}\cdot{}}$
on the tangent spaces~$T_xP$ in the bundle $TP$.

\begin{example}\label{exa}
  For the potential $V(q)=-\eh\norm q^2$ ($q\in\bR^d$), the flow equals
  \begin{equation*}
    \V{p(t)\\q(t)}
      =\V{\idty\cosh(t)&\idty\sinh(t)\\
	  \idty\sinh(t)&\idty\cosh(t)}\V{p(0)\\q(0)}
    \qquad(t\in\bR)\text.
  \end{equation*}
  So we can take $\lambda=C=1$, and 
  $E^{u/s}(x)=\{\V{a\\\pm a}\mid a\in\bR^d\}$.
  We see that on~$P$ there is no \emph{topological}
  obstruction against the motion generated by~$H$
  to be Anosov (but we remark that here $\dim\bigl(E^{u/s}(x)\bigr)=d$).
\end{example}

The following statement follows from specializing a theorem in \cite{PP94}:
\begin{theorem}[G.\ P.\ and M.\ Paternain]
{\em For no value $E< \sup_q V(q)$ the flow $\Phi_E$ on $\Sigma_E$ is
non--wandering and Anosov.}
\end{theorem}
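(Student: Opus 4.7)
The plan is to argue by contradiction. Assume $\Phi_E$ is non-wandering and Anosov on $\Sigma_E$ with $E<\sup_q V(q)$, and exploit the presence of turning points to contradict the uniform hyperbolic dichotomy. The case $E<\inf V$ is vacuous ($\Sigma_E=\es$), so I take $\inf V\le E<\sup V$.

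First, I would locate a regular turning point. Since $E<\sup V$, Sard's theorem applied to $V\in C^2(\bR^d,\bR)$ yields regular values of $V$ arbitrarily close to $E$, and the intermediate value theorem along a segment joining a point in $\{V<E\}$ to one in $\{V>E\}$ produces $q^\dagger\in\bR^d$ with $V(q^\dagger)=E$ and $\nabla V(q^\dagger)\neq 0$. Then $x^\dagger:=(0,q^\dagger)\in\Sigma_E$ is a regular point of the flow, since $X_H(x^\dagger)=(-\nabla V(q^\dagger),0)\neq 0$, and the trajectory $\gamma^\dagger(t):=\Phi_E^t(x^\dagger)$ is a ``brake orbit'', symmetric under the time-reversal involution $\sigma\colon (p,q)\mapsto (-p,q)$, which fixes $x^\dagger$ and satisfies $\sigma\circ\Phi_E^t=\Phi_E^{-t}\circ\sigma$.

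The heart of the argument is to use this symmetry at $x^\dagger$ to obstruct the Anosov splitting. By uniqueness of the splitting on the non-wandering set (which is all of $\Sigma_E$ by hypothesis), $d\sigma_{x^\dagger}\bigl(E^s(x^\dagger)\bigr)=E^u(x^\dagger)$, and in particular $\dim E^s=\dim E^u=d-1$. The involution $d\sigma_{x^\dagger}=\mathrm{diag}(-I_p,I_q)$ decomposes $T_{x^\dagger}\Sigma_E$ into a $(-1)$-eigenspace of $\delta p$-vectors (dimension $d$, containing $X_H(x^\dagger)$) and a $(+1)$-eigenspace of $\delta q$-vectors tangent to $\{V=E\}$ at $q^\dagger$ (dimension $d-1$). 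I would then invoke the non-wandering hypothesis with the Anosov closing lemma to find periodic orbits $\gamma_n$ accumulating on $\gamma^\dagger$, and study the Jacobi equation $\ddot{\delta q}=-D^2V(q^\dagger(t))\,\delta q$ along $\gamma^\dagger$. Picking $\delta q_0$ in the $(+1)$-eigenspace tangent to $\{V=E\}$ and solving with $\delta q(0)=\delta q_0$, $\dot{\delta q}(0)=0$ gives --- by evenness of $t\mapsto D^2V(q^\dagger(t))$ inherited from the brake symmetry --- a solution that is even in~$t$ and whose putative $E^s$- and $E^u$-components are interchanged by $d\sigma$. Combining this $\sigma$-balanced behavior with the uniform exponential bounds along the approximating periodic orbits $\gamma_n$ should force this Jacobi field to vanish modulo $X_H$, while at the same time a dimension count (using $d\sigma(E^s)=E^u$ and the matching of $(\pm 1)$-eigenspaces of $d\sigma_{x^\dagger}$ against $E^s\oplus E^u$) shows that nontrivial such fields must exist --- the sought contradiction.

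The principal obstacle lies in this last step: promoting the symmetric algebraic matching into an \emph{honest} non-exponential Jacobi field. A reversible Anosov flow generically admits hyperbolic $\sigma$-symmetric periodic orbits, so $\sigma$-symmetry alone is not enough; the additional ingredient is that $x^\dagger$ sits in the zero-momentum section $\{p=0\}$, on which the flow direction $X_H=(-\nabla V,0)$ is itself $d\sigma$-anti-invariant. This rigidifies the relative position of $E^s$, $E^u$ and the $d\sigma$-eigenspaces in a way not available on the bulk of $\Sigma_E$, and is precisely where the boundedness of $V$ and its derivatives enters: it gives the uniform control of $D^2V$ along $\gamma^\dagger$ needed to pass from pointwise symmetry at $x^\dagger$ to global incompatibility with the Anosov constants, specializing the abstract Paternain--Paternain framework of \cite{PP94} to the present mechanical setting.
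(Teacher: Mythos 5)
Your identification of the turning point as the decisive geometric fact is exactly right---it is also the paper's final step---but the mechanism you propose for converting it into a contradiction has a genuine gap at its centre. The paper does not argue via brake--orbit symmetry at all: it quotes Theorem~3 of \cite{PP94} as a black box. Since $\Phi_E$ is assumed Anosov and non--wandering, $E$ must be a regular value (at a critical point of $V$ on the level $E$ the vector field $X_H$ would vanish on $\Sigma_E$; this also makes your detour through Sard superfluous---and note that as written Sard only gives regular values \emph{near} $E$, while the IVT point you produce need not be regular, so your construction of $q^\dagger$ is anyway better replaced by this remark). Then the centre--stable bundle $E(x)=\mathrm{span}\bigl(X_H(x)\bigr)\oplus E^s(x)$ is an invariant Lagrangian subbundle, and \cite{PP94}, Thm.~3, asserts that it intersects the vertical bundle $\mathrm{Vert}(x)=\ker(d\pi_x)$ trivially. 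If $E<\sup_q V(q)$ there is $x=(0,q)\in\Sigma_E$ with $V(q)=E$, and there $0\neq X_H(x)=\bigl(-\nabla V(q),0\bigr)\in \mathrm{Vert}(x)\cap E(x)$, a contradiction. All the dynamical work, and the whole use of the non--wandering hypothesis, sits inside \cite{PP94}.

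Your sketch in effect tries to reprove that transversality by hand, and the crucial step is asserted rather than proved: you say the $\sigma$--symmetric Jacobi field ``should'' be forced to vanish modulo $X_H$ while a dimension count yields nontrivial such fields. Neither claim holds as stated. The linear algebra at $x^\dagger$ is perfectly consistent with an Anosov splitting: $E^s\oplus E^u$ is $d\sigma$--invariant and decomposes into $(\pm1)$--eigenspaces of dimension $d-1$ each (the images of $v\mapsto v\pm d\sigma v$, $v\in E^s$), which together with $X_H$ lying in the $(-1)$--eigenspace exactly exhausts the eigenspace dimensions $d$ and $d-1$ of $d\sigma$ on $T_{x^\dagger}\Sigma_E$; so no contradiction can come from counting alone. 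Likewise, an even Jacobi field whose stable and unstable components are exchanged by $d\sigma$ is in no conflict with uniform hyperbolicity---it simply grows exponentially in both time directions---so the closing lemma and the uniform Anosov constants do not force it to vanish (and the closing lemma itself would need justification on the non--compact $\Sigma_E$). The true obstruction is global and Lagrangian: an invariant Lagrangian subbundle of a non--wandering flow cannot meet the vertical bundle, which is precisely the content of \cite{PP94}, Thm.~3, that your outline would have to re-establish. As it stands, the contradiction is never actually reached.
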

\begin{proof}
\begin{itemize}[$\bullet$]
\item
To be Anosov, $E$~must be a regular value of~$H$ or -- equivalently -- of~$V$,
since otherwise there are points on $\Sigma_E$,
where the Hamiltonian vector field $X_H$ vanishes.
So one assumes that~$E$ is a regular value, that $\Phi_E$ is
non--wandering and Anosov, and derives a contradiction.
\item
Regularity of~$E$ is one of the assumptions of Theorem~3
in \cite{PP94}.
The condition of existence of a
$\Phi_E$--invariant lagrangian subbundle~$E$ of $T\Sigma_E$ is met,
too, by the centre stable bundle, with 
$E(x):= {\rm span} \bigl(X_H(x)\bigr) \oplus E^s(x)\subseteq T_x\Sigma_E$.
\item
As a conclusion Theorem 3 in \cite{PP94} states that $E$ trivially intersects
the vertical bundle  ${\rm Vert}$, given for $x\in P\cong T^*\bR^d_q$ by
${\rm Vert}(x)={\rm ker}(d\pi_x)$, with the projection
$\pi\colon T^*\bR^d_q\to\bR^d_q$, $(p,q)\mapsto q$.

This implies that $E\ge \sup_q V(q)$.
For otherwise there is a point $x=(p,q)\in \Sigma_E$ with $V(q)=E$.
By regularity of the value~$E$ then $0\neq X_H(x)\in {\rm Vert}(x)$.
\end{itemize}
\end{proof}
\begin{remark}[The Non-Wandering Condition]
\begin{enumerate}[1.]
\item
If one wants to show ergodicity using the Anosov property,
then the non--wandering condition is somewhat natural:

The Liouville measure $\mu_E$ on $\Sigma_E$
is invariant under the flow $\Phi_E$.
For a regular value~$E$ of~$H$, $\mu_E$ is absolutely continuous
w.r.t.\ the riemannian measure.
The flow is called \emph{ergodic} (in the sense of Aaronson, see \cite{Aar97}) if every 
$\Phi_E$--invariant measurable subset~$A$ of $\Sigma_E$ is of
measure zero or the complement of a measure zero set.

Under the assumption of ergodicity, the non--wandering set of 
$\Phi_E$ equals~$\Sigma_E$.
For assume that $x\in \Sigma_E$ is wandering.
Then by definition there
is an open neighbourhood $U\subseteq \Sigma_E$%
\nomenclature[AU]{$U$}{neighbourhood of $\Sigma_E$}{}{}
of~$x$ and $T>0$, so that $\Phi_E(t, U) \cap U = \es$ if $\abs t\ge T$.
Since $\dim(\Sigma_E)>1$, there is a neighbourhood $W\subseteq U$%
\nomenclature[AW]{$W$}{neighbourhood of~$U$}{}{}
of~$x$ so that $\mu_E\bigl(\bigcup_{t\in [-T,T]} \Phi_E(t, W)\bigr)
  < \mu_E\bigl(U\bigr)$.
So both the $\Phi_E$--invariant set $\bigcup_{t\in \bR} \Phi_E(t,W)$
and its complement have positive measures, contradicting ergodicity.
\item
Besides that, there are many alternatives to the 
non--wandering condition in the above theorem. 

One choice is to assume that the boundary $V^{-1}(E)\subseteq \bR^d_q$ 
of Hill's 
region contains a compact component which is not diffeomorphic to
$S^{d-1}$, or more than one compact component.
Then there exists a closed ({\em brake}) orbit with positive Maslov class, 
contradicting the existence of a section of the lagrangian bundle
(see Theorem 2 of \cite{Kn90} and Section 6 of \cite{KK08}).

Another such alternative is the assumption that there exists an $e<E$
such that for every $r>0$ there exists a ball 
$B_r(Q)\subseteq \bR^d_q$ with $V\rstr_{B_r(Q)}\le e$. Then the 
proof of \cref{theo} below can be adapted. 
\end{enumerate}
\end{remark}
For large energies we do not need the non--wandering condition.
\begin{theorem}\label{theo}
For no $E> \sup_q V(q)$ the flow $\Phi_E$ on $\Sigma_E$ is Anosov.
\end{theorem}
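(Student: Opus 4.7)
The plan is to suppose for contradiction that $\Phi_E$ is Anosov on $\Sigma_E$ for some $E > \sup_q V(q)$, and to derive a geometric contradiction by passing to the Jacobi metric.

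First I would introduce the Jacobi metric $g_E := (E - V)\, g_\Euclid$ on $\bR^d$, as in~\eqref{eq:JacobiMetric}. Since $E - V$ is pinched between the positive constants $E - \sup V$ and $E - \inf V$, $g_E$ is a smooth, positive definite, complete (Hopf--Rinow, as already used in~\cref{sec3}) Riemannian metric that is \emph{bi--Lipschitz equivalent} to $g_\Euclid$. The Hamiltonian flow $\Phi_E$ is smoothly conjugate to the geodesic flow of $g_E$ on $T^1(\bR^d, g_E)$ via a positive time reparametrization by the factor $1/(E - V)$, which is bounded above and bounded away from zero. Since the Anosov property is stable under such reparametrizations, the $g_E$--geodesic flow would then be Anosov with the same uniform constants $(C,\lambda)$.

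The geometric consequences of the bi--Lipschitz bound are that $(\bR^d, g_E)$ is simply connected, complete, and has polynomial volume growth of order~$d$; in particular, its volume entropy $h_{\mathrm{vol}}(g_E) := \limsup_{R\to\infty} R^{-1} \log \mathrm{vol}_{g_E}\bigl(B^{g_E}_R(0)\bigr)$ vanishes. The boundedness of $V,\nabla V, D^2 V$ further forces the sectional curvature of the conformal metric $g_E$ to be uniformly bounded. The contradiction then comes from a Manning--type inequality $h_{\mathrm{top}}(\text{geodesic flow}) \leq h_{\mathrm{vol}}(g_E) = 0$, combined with the strict positivity of the topological entropy that any non--trivial uniform Anosov splitting carries.

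The main obstacle will be adapting Manning's inequality and the positivity of Anosov topological entropy to our non--compact setting, since $\Sigma_E$ carries only an infinite Liouville measure and may well have empty non--wandering set. I would bypass this by working directly with the uniform constants $(C,\lambda)$: on any $g_E$--ball $B^{g_E}_R(0)$, expansivity together with the local product structure of the stable/unstable laminations produces at least $\exp(h_0 T)$ many $(\epsilon, T)$--separated orbit segments of length~$T$, for some $h_0 > 0$ depending only on $\lambda$ and the dimensions of $E^{u/s}$, whose base projections stay inside $B^{g_E}_{R + C'T}(0)$. A volume--packing argument then bounds this count above by $\mathrm{const}\cdot(R + C'T)^d$; letting $T \to \infty$ yields $e^{h_0 T} \leq \mathrm{const}\cdot T^d$, which is absurd.
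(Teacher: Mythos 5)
Your reduction to the Jacobi metric, the bi--Lipschitz comparison with $g_\Euclid$, and the resulting polynomial volume growth are all fine (and indeed the paper also passes to the Jacobi--Maupertuis metric), but the final counting step contains a genuine gap that kills the contradiction. You bound the number of $(\epsilon,T)$--separated orbit segments whose projections stay in $B_{R+C'T}$ by $\const\cdot(R+C'T)^d$ ``by volume packing''. That bound is false: $(\epsilon,T)$--separation only requires the two segments to be $\epsilon$--apart at \emph{some} time in $[0,T]$, so a region of fixed (or polynomially growing) volume can perfectly well contain $e^{cT}$ such segments --- this is exactly what positive topological entropy of a flow on a \emph{compact} space means. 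What volume packing actually gives is $\mathrm{vol}/\min_x\mathrm{vol}\bigl(B_T(x,\epsilon/2)\bigr)$, and under the very hyperbolicity you are assuming the Bowen balls $B_T(x,\epsilon)$ are exponentially thin in the unstable directions, so the upper bound is $\const\cdot(R+C'T)^d\,e^{c T}$, which does not contradict your lower bound $e^{h_0T}$. (Your lower bound, by spacing points $e^{-\lambda T}\epsilon$ apart along a local unstable leaf, is essentially correct, though even there you need uniform--size invariant manifolds, which under the paper's mere boundedness of $D^2V$ is an additional technical point, as is the claim that the uniform Anosov property survives the time reparametrisation on a non--compact energy surface.)

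To obtain a polynomial upper bound one must inject global geometry, not just volume: e.g.\ first show that a uniformly Anosov geodesic flow has no conjugate points (a non--compact, uniform version of Klingenberg--Ma\~n\'e), conclude that $\exp_q$ is a diffeomorphism so that geodesic segments are determined by their endpoints, and then prove a uniform divergence statement converting endpoint--closeness into Bowen--closeness, in the spirit of the Freire--Ma\~n\'e count of geodesic arcs; none of this is available for free here, since the Jacobi metric is only a bounded conformal perturbation of the flat metric and is in general not of nonpositive curvature, so convexity of the distance between geodesics cannot be invoked. For the same reason your earlier appeal to a ``Manning--type inequality $h_{\mathrm{top}}\le h_{\mathrm{vol}}$'' is not justified: that direction of Manning's inequality needs no conjugate points (or nonpositive curvature) \emph{and} compactness, and on non--compact surfaces uniform hyperbolicity by itself forces no recurrence or entropy at all (cf.\ the inverted--oscillator \cref{exa}). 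This is precisely the difficulty the paper avoids by arguing instead through the invariant lagrangian subbundle of \cite{PP94} and the integrated trace of the Riccati equation \eqref{Ric} over $T_1B_r$, where the conformal flatness of the Jacobi metric yields the needed $\cO(r^{d-1})$ bounds directly. As it stands, your proposal is missing the key idea that would close the argument.
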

\begin{proof}
\begin{itemize}[$\bullet$]
\item
For $E>\sup_q V(q)$ we use riemannian geometry.
The metric~$g$ of a riemannian manifold~$(M^d,g)$ 
defines a connection and thus a canonical decomposition of~$T(TM)$
into a horizontal and a vertical subspace:
\[T_xTM=T_{x,h}TM\oplus T_{x,v}TM\qquad(x\in TM)\text.\]
Both $T_{x,h}TM$ and $T_{x,v}TM$ are canonically isomorphic to the
$d$-dimensional vector space $T_q M$ (for $x\in T_q M$).

For a lagrangian subspace $\lambda\subseteq T_x TM$ which 
is transversal to the vertical subspace, there exists a symmetric operator 
\begin{equation}\label{OpS}
  S\colon T_{x,h}TM\to T_{x,v}TM
\end{equation}
\nomenclature[AS]{$S$}{symmetric operator $T_{x,h}TM\to T_{x,v}TM$}{}{}%
such that the vertical and horizontal component of any vector
$w=w_h+w_v\in \lambda$ obey the relation (see, \emph{e.g.}, 
Klingenberg \cite[3.2.16~Proposition]{Kli95})
\begin{equation*}
  w_v=Sw_h\text.
\end{equation*}
The covariant derivative $\nabla Y(t)$
\nomenclature[GnablaY]{$\nabla$}{covariant derivative}{}{}%
of a stable Jacobi field~$Y(t)$ along
a geodesic trajectory equals $S(t)Y(t)$.
Hence the operator~$S$ satisfies the Riccati equation 
\begin{equation}\label{Ric}
  S^2=-\nabla S-R_X\text.
\end{equation}
\nomenclature[ARX]{$R_X$}{curvature operator along~$X$}{}{}%
along the geodesic.
In our case we use on $M:=\bR^d_q$ the Jacobi-Maupertuis metric~$g$,
with $g(q):=\bigl(E-V(q)\bigr)\cdot g_{\textrm{Euclid}}(q)$.
Up to a reparametrisation of time~$t$, 
the geodesics in this metric coincide with the projection
of the solutions $t\mapsto \Phi_E(t,x)$
of our Hamiltonian equation to configuration space~$M$.
Since $E-V(q)$ is bounded from below and above by positive constants,
the derivative of time reparametrisation is bounded below and above, too.
We denote the geodesic flow by
\[\Psi:\,\bR\times T_1M\to T_1M\text.\]
\nomenclature[GPsi]{$\Psi$}{geodesic flow}{}{}%
By Theorem~3 of \cite{PP94} we can write the lagrangian subbundle~$E$
as the graph of a symmetric operator valued function of the form~\eqref{OpS}.
 
We integrate the trace of~\eqref{Ric}
over the unit tangent bundle $T_1B_r$
of the ball $B_r=B_r(0)\subseteq \bR^d$ of radius~$r$.
\item
The integral of the covariant derivative is of order 
\begin{equation}
\int_{T_1 B_r} \operatorname{trace}(\nabla S)\, dm\, do = \cO(r^{d-1})\text,
\label{surface}
\end{equation} 
where we denote by $dm(q)=\sqrt{\det g(q)}dq_1\wedge\ldots\wedge dq_d$%
\nomenclature[Am]{$m$}{riemannian measure}{}{}
the measure on~$M$ and by $do$%
\nomenclature[Ao]{$o$}{measure on $S^{d-1}$}{}{}
the measure on the unit sphere ($\int_{S^{d-1}}do=\operatorname{vol}(S^{d-1})$).
We show \eqref{surface} by reducing it to a term scaling with
the volume of the boundary $\pa B_r$ of the ball.
To this end we decompose the region $T_1 B_r$ of the energy surface into
\[T_1 B_r\; =\;\cS\;\dot{\cup}\;\cB\;\dot{\cup}\;\cT\text,\]
using the maximal time interval $I(x)$%
\nomenclature[AI]{$I$}{time interval}{}{}
containing~$0$ for which the geodesic flow line through~$x$
stays within $T_1B_r$:
\begin{itemize}[-]
  \item The \emph{scattering set}\,\footnote{The names should not be taken too 
    serious, since, \emph{e.g.}, the intersection of a $\Psi$--orbit with $\cS$
    can consist of several components.}~%
    $\cS:=\bigl\{x\in T_1 B_r\mid I(x)=[T^-(x),T^+(x)]\;\bigr\}$,
    \nomenclature[AS]{$\cS$}{scattering set}{}{}%
  \item the \emph{bounded set}~%
    $\cB:=\bigl\{x\in T_1B_r\mid I(x)=\bR\bigr\}$%
    \nomenclature[AB]{$\cB$}{bounded set}{}{}
    and
  \item the \emph{trapped set}~%
    $\cT:=T_1 B_r\setminus(\cB\,\cup\,\cS)$%
    \nomenclature[AT]{$\cT$}{trapped set}{}{}.
\end{itemize}
All three sets are measurable.
\begin{itemize}[-]
  \item
    The trapped set $\cT$ consists of wandering points
    and thus is of measure zero.
  \item
    The bounded set $\cB$ is $\Psi$--invariant.
    So we can use the relation
  \beq
    \int_0^T\operatorname{trace}\bigl(\nabla S\circ\Psi(t,x)\bigr)\,dt = 
    \operatorname{trace}\bigl(S\circ\Psi(T,x)-S(x)\bigr)
  \Leq{Hauptsatz}
  to show 
  \begin{align*}
    T\int_{\cB}
    \operatorname{trace}(\nabla S)\,dm\,do&
      =\int_{\cB}\int_0^T
	\operatorname{trace}\bigl(\nabla S\circ\Psi(t,x)\bigr)\,dt\,dm\,do\\& 
      =\int_{\cB}\operatorname{trace}\bigl(S\circ\Psi(T,x)-S(x)\bigr)\,dm\,do
      =0\text.
  \end{align*}
  \item
  So the only contribution to~\eqref{surface}
  comes from the scattering set~$\cS$.
  Every $x\in\cS$ can be uniquely written as
  $x=\Psi(t,y)$ with $t\in[0,T^+(y)]$ and $T^-(y)=0$.
  Conversely, for the points in 
  $\cV:=\{y\in \cS\mid T^-(y)=0\}$ all
  $\Psi(t,y)$ with $t\in[0,T^+(y)]$ are in $\cS$.
  So we rewrite the integral:
  \[\int_\cS\operatorname{trace}(\nabla S)\,dm\,do=\int_\cV\int_0^{T^+(y)} 
  \hspace*{-4mm}\operatorname{trace}\bigl(\nabla S\circ\Psi(t,y)\bigr) J(y)
  \,dt\, dy\text.\]
  Since the Jacobian $J\colon\cV\to\bR^+$ is bounded above by~$1$,
  and $\operatorname{trace}(S)$ is bounded on~$T_1M$,
  we obtain \eqref{surface}, reusing \eqref{Hauptsatz}.
\end{itemize}
\item
For the second term on the right hand side of \eqref{Ric},
\[\int_{T_1 B_r} \operatorname{trace}(R_X)\, dm\, do = 
\frac{\operatorname{vol} (S^{d-1})}{d}\int_{B_r} {\cal R}(q)\, dm\text,\]
\nomenclature[AR]{$\mathcal R$}{scalar curvature}{}{}%
where ${\cal R}(q)$ denotes the scalar curvature.
If the motion takes place on a two-dimensional plane $M=\bR^2_q$, then
$\int_{B_r}{\cal R}(q)\,dm=\cO(r)$ as a consequence of the Gauss-Bonnet formula.
For dimension $d\ge3$, that equality is wrong in general.
But in our case the Jacobi metric is conformally flat.
Defining the positive function $u\colon M\to\bR^+$ by
$u(q):=\bigl(E-V(q)\bigr)^{(d-2)/4}$, the measure $dm$ on~$M$
equals
$dm = u^{\frac{2d}{d-2}}dq_1\wedge\ldots\wedge dq_d$.
The scalar curvature equals
\[{\cal R}=\frac{1-d}{(E-V)^3} \l[ (V-E)\Delta V + \frac{d-6}{4}
(\nabla V)^2 \ri] = 4\frac{1-d}{d-2}u^{-\frac{d+2}{d-2}}\Delta u\]
(with the euclidean Laplacian $\Delta=\sum_{k=1}^d \frac{\pa^2}{\pa q_k^2}$).
\nomenclature[GDelta]{$\Delta$}{euclidian Laplacian}{}{}%
Therefore
\begin{align}
  \lefteqn{\int_{B_r}{\cal R}\,dm
    =-4\frac{d-1}{d-2}\int_{B_r}u^{-\frac{d+2}{d-2}}
  (\Delta u) u^{\frac{2d}{d-2}}dq_1\wedge\ldots\wedge dq_d}\nonumber\\&
    =-4\frac{d-1}{d-2}\int_{B_r} u(\Delta u) 
			       dq_1\wedge\ldots \wedge dq_d\nonumber\\&
    =+4\frac{d-1}{d-2}\int_{B_r} (\nabla u)(\nabla u) 
			       dq_1\wedge\ldots \wedge dq_d + f(r) 
    \ge  f(r)\text.\label{larger}
\end{align}
The surface integral is of order $f(r)=\cO(r^{d-1})$.
Concerning the left hand side of \eqref{Ric},
the integral of $\operatorname{trace}(S^2)$ is positive.
For uniform hyperbolicity, one would need
$\limsup_{r\to\infty}r^{-d}\int_{B_r}\operatorname{trace}(S^2)(q)\,dm>0$, 
which is impossible, since the corresponding $\limsup$ of the right hand side is
nonpositive.
So \eqref{larger} is compatible with \eqref{Ric} only if
the flow is not Anosov.
\end{itemize}
\end{proof}
%
\section{Random Coulombic Potentials}\label{sec9}
%
As it does not seem to be so simple to find smooth random potentials that 
lead to ergodic motion, we now study the example of coulombic potentials,
see \cref{fig:random} for a numerical realization.
\begin{figure}
\begin{center}
  \ifpdf
    \includegraphics[width=5cm,clip]{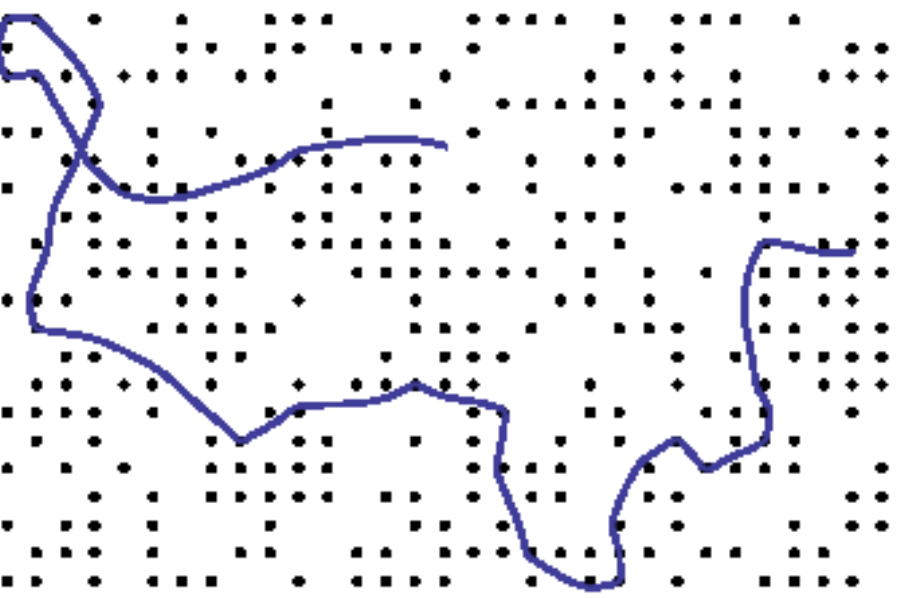}
  \else
    \includegraphics[width=5cm,clip]{simulation3.ps}
  \fi
\end{center}
\caption{Motion in the configuration space of a random coulombic potential}
\label{fig:random}
\end{figure}
We restrict ourselves to dimension $d=2$.
Here for $j\in J$ the single site potentials
\[\tilde W_j\in C^\eta(\bR^2\setminus\{s_j\},\bR)\]
\nomenclature[AWtildej]{$\tilde W_j$}{single site potential with singularity}{}{}%
(with $\eta\in\bN\cup\{\infty\},\eta\ge2$)
diverge at the position $s_j\in \cD$ in the fundamental domain
$\cD:=\{x_1\ell_1+x_2\ell_2\mid 0\leq x_i<1\}$
of the lattice $\cL=\Span_\bZ(\ell_1,\ell_2)$.

Our assumptions are:
\begin{compactenum}
  \item\label{shortrange}
    \emph{The decay at infinity} is short range,
    that is for $\alpha\in\bN_0^2,\ \abs\alpha\leq\eta$
    \[\pa^\alpha\tilde W_j(q)=\cO(\norm q^{-2-\vep})\qquad(\norm q\rightarrow\infty)\text.\]
  \item\label{coulomb}
    The \emph{local singularity} at $s_j\in\bR^2\cong\bC$ is controlled by
    \[f_j\colon\bC^*:=\bC\setminus\{0\}\to\bR \qtextq{,}
    f_j(z):=\abs z^2\tilde W_j(z^2+s_j)\]
    and we assume that for all multi--indices $\alpha\in\bN_0^2$,
    $\abs\alpha\leq\eta$, $\pa^\alpha f_j$
    can be continuously extended to zero, with $f_j(0)<0$.
    We allow for an additional single site potential $\tilde W_0=0$.
\end{compactenum}
\begin{example}
  \label{ex:Yukawa}
  The \emph{Yukawa Potential} with parameters $c_j,\mu_j>0$
  is defined via
  \[\tilde W_j(q)=-c_j\frac{\exp(-\mu_j\norm{q-s_j})}{\norm{q-s_j}}\text.\]
\end{example}
\begin{example}
  \label{ex:finiteRange}
  \emph{Finite range potentials}
  are given by
  \[\tilde W_j(q):=-\frac{g_j(\norm{q-s_j})}{\norm{q-s_j}}\text,\]
  with $g_j\in C_c^\eta(\bR,\bR),\ g_j(0)>0$.
  \nomenclature[Agj]{$g_j$}{radial shape of finite range potential}{}{}%
\end{example}
The random potential is determined by the probability space
$(\Omega,\cB(\Omega),\beta)$ with $\Omega=\cL^J$.
The probability measure~$\beta$ is assumed to be $\vartheta$--invariant,
see equation~\eqref{eq:vartheta}, 
and to give probability $\beta\{\omega_0\}=0$
to the configuration $\omega_0\in\Omega$ with $\omega_0(\ell)=0$, $\ell\in\cL$.
No $\cL$--ergodicity of $\beta$ is assumed here.

For what follows we fix $\omega\in\Omega$.
The \emph{punctured configuration space}
$\tilde M_\omega:=\bC\setminus\cS_\omega$
\nomenclature[AMtildeomega]{$\tilde M_\omega$}{punctured configuration space}{}{}%
now depends on the \emph{singularity set}
$\cS_\omega:=\{s_{\omega(\ell)}+\ell\mid\ell\in\cL,\omega(\ell)\ne0\}$.
\nomenclature[ASomega]{$\cS_\omega$}{singularity set}{}{}%
The former supports the random potential
\begin{equation*}
  \tilde V_\omega\colon\tilde M_\omega\to\bR\qtextq,
  \tilde V_\omega(q):=\sum_{\ell\in\cL}\tilde W_{\omega(\ell)}(q-\ell)\text.
\end{equation*}
\nomenclature[AVtildeomega]{$\tilde V_\omega$}{coulombic potential}{}{}%
Assumption~\ref{shortrange}.\ guarantees the convergence of $\tilde V_\omega$
and its derivatives, 
and Assumption~\ref{coulomb}.\ implies that the Coulombic singularities
are attractive and
\begin{equation*}
  \tilde V_{\omega,\max}:=\sup\tilde V_\omega(\tilde M_\omega)<\infty\text.
\end{equation*}
\nomenclature[AVtildeomegamax]{$\tilde V_{\omega,\max}$}{supremum of $\tV_\omega$}{}{}%
The Hamiltonian flow $\tilde\Phi_\omega\colon\tilde U_\omega\to\tilde P_\omega$
\nomenclature[GPhitildeomega]{$\tilde Phi_\omega$}{incomplete hamiltonian flow}{}{}%
on $\tilde P_\omega:=T^*\tilde M_\omega=\bR^2\times\tilde M_\omega$,
generated by the Hamiltonian function
$\tilde H_\omega\colon\tilde P\to\bR$, $(p,q)=\frac12\norm p^2+\tilde V_\omega(q)$,
\nomenclature[GHtildeomega]{$\tilde H_\omega$}{coulombic hamiltonian function}{}{}%
is now, due to the Coulombic singularities, incomplete and only defined
on a maximal open subset $\tilde U_\omega\subseteq\bR\times\tilde P_\omega$.
\nomenclature[AUtildeomega]{$\tilde U_\omega$}{maximal domain of $\tilde Phi$}{}{}%

It is known that the flow can be continuously
regularised by reflecting collision orbits at their singularity.
This is possible, see Prop.\ 2.3 of \cite{KK92} or
Thm.~11.23 of \cite{Kn11},
by smoothly extending the incomplete Hamiltonian system 
\[\big(\tilde P_\omega,dq\wedge dp,\tilde H_\omega\big)
  \quad\text{ to a Hamiltonian system }\quad 
  \big(P_\omega, \sigma_\omega, H_\omega\big)\text,\]
\nomenclature[APomegasigmaomega]{$(P_\omega,\sigma_\omega)$}{symplectic manifold}{}{}%
\nomenclature[Gsigmaomega]{$\sigma_\omega$}{symplectic form}{}{}%
with $H_\omega\colon P_\omega\to \bR$
generating a complete smooth Hamiltonian flow
on the symplectic manifold $(P_\omega, \sigma_\omega)$.
In fact, similarly to the construction in \cref{sec2},
these data extend continuously to a triple $(P,\sigma,H)$, with
extended phase space 
$P:= \bigcup_{\omega\in\Omega}\ \{\omega\}\times P_\omega$.

For $d=2$ and energies $E>V_{\max}$ regularization can also be performed, 
as in \cite{Kna87,KK92}, with the help of the twofold covering
\begin{equation}\label{eq:Momega}
  \pi_\omega\colon\Mo:= \big\{(q,Q)\in\bC^2\mid f_\omega(q)=Q^2\big\}
   \; \longrightarrow \;\bC
\end{equation}
\nomenclature[Gpiomega]{$\pi_\omega$}{twofold covering $\Mo\to\bC$}{}{}%
\nomenclature[AMomega]{$\Mo$}{branched covering surface}{}{}%
with branch points in the singularity set
$\cS_\omega$, where $f_\omega\colon\bC\to\bC$
\nomenclature[AMomega]{$f_\omega$}{entire function}{}{}%
is an entire function with simple zeroes in and only in~$\cS_\omega$:
$f_\omega^{-1}\{0\}=\cS_\omega$.
By a Weierstrass product construction
we can choose $(\omega,q)\mapsto f_\omega(q)$
as a continuous function.
The lift $\mathbf{\tilde g}_{\omega,E}:=(\pi_\omega)^*\tilde g_{\omega,E}$
\nomenclature[AgtildeomegaE]{$\mathbf{\hat g}_{\omega,E}$}{Jacobi metric on $\Mo$ without branch points}{}{}%
of the Jacobi-Maupertius metric
\begin{equation*}
  \tilde g_{\omega,E}(q):=\bigl(1-E^{-1}\tilde V_\omega(q)\bigr)g_{\Euclid}(q)
  \qquad(q\in\tM_\omega)
\end{equation*}
can be continued to all of $\Mo$ by taking limits:
\begin{equation*}
  \mathbf g_{\omega,E}(q,0):=
    \lim\limits_{\Mo\ni(q',Q)\to(q,0)}\mathbf{\hat g}_{\omega,E}(q',Q)\qquad
      (\pi_\omega(q)\in\cS_\omega)\text,
\end{equation*}
\nomenclature[AgomegaE]{$\mathbf g_{\omega,E}$}{Jacobi metric on $\Mo$}{}{}%
see \cite{Kna87,KK92}.
This gives a smooth and complete riemannian metric on~$\Mo$,
and its geodesics are, up to a reparametrisation of time
and modulo~$\pi_\omega$, trajectories of the Hamiltonian flow.

This geometric regularisation allows to take full advantage of riemannian geometry.
For two-dimensional compact riemannian manifolds of negative curvature
ergodicity of the geodesic flow was established by \cite{Hop41}.
The gaussian curvature of $(\tM_\omega,\hat g_{\omega,E})$
at $q\in\tM_\omega$ is given by
\begin{equation}\label{eq:curvature}
  \begin{split}
    \tilde K_{\omega,E}(q)&
      =\frac{\bigl(E-\tilde V_\omega(q)\bigr)\Delta\tilde V_\omega(q)
	      +\bigl(\nabla\tilde V_\omega(q)\bigr)^2}
	    {2\bigl(E-\tilde V_\omega(q)\bigr)^3}
    \\&
      =\frac{\bigl(1-E^{-1}\tilde V_\omega(q)\bigr)E^{-1}
	      \Delta\tilde V_\omega(q)
	      +\bigl(E^{-1}\nabla\tilde V_\omega(q)\bigr)^2}
	    {2E\cdot\bigl(1-E^{-1}\tilde V_\omega(q)\bigr)^3}
  \text.
  \end{split}
\end{equation}
\nomenclature[AKtildeomegaE]{$\tilde K_{\omega,E}$}{gaussian curvature}{}{}%
Since $\pi_\omega^{-1}(\tilde M_\omega)$ is a local isometry,
the curvature of all non--branch points of~$\Mo$
is determined by \cref{eq:curvature}.
In \cref{ex:Yukawa} 
it turns out that for high enough energy
$E>E_{\text{th}}>\tilde V_{\omega,\max}$ the curvature is non--positive.
This motivates the following definition.
\begin{defi}
  The pair $(\tilde V_\omega,E)$ is of
  \emph{non--positive (strictly negative) curvature},
  if $(\Mo,\mathbf g_{\omega,E})$
  has non--positive (strictly negative) curvature.
\end{defi}
A direct consequence of equation~\eqref{eq:curvature} is, that,
given a pair $(\tilde V_\omega,E)$ of non--positive curvature,
all $(\tilde V_\omega,E')$ with $E'\ge E$
are of non--positive curvature, too.
As $E\nearrow\infty$ the curvature concentrates in the singularities.
In \cref{ex:finiteRange} 
non--positive curvature can be achieved
for all energies above a threshold with suitable choices for $g_j$, \emph{e.g.}
\begin{equation*}
  g_j(r):=
  \begin{cases}
    -c_j\cos^{\eta+1}(\lambda_jr)
      &r<\frac\pi{2\lambda}\\
    0 &r\ge\frac\pi{2\lambda}
  \end{cases}
\end{equation*}
with $c_j,\lambda_j>0$, see \cite{Kna87}.  From now on we will assume non--positive curvature.
This assumption and the following lemma explain why we allow only $W_0=0$
as smooth single site potential.
\begin{lemma}
  If a single site potential $\tilde W\colon\bC\to\bR$
  without singularity and an energy~$E$
  are of nonpositive curvature, then $\tilde W=\tilde W_0\equiv0$.
\end{lemma}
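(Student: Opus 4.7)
The plan is to use the nonpositive-curvature hypothesis to force $\tilde W$ to be superharmonic on $\bR^2\cong\bC$, and then invoke a two-dimensional Liouville-type theorem together with the short-range decay to conclude $\tilde W\equiv 0$. Since $\tilde W$ has no singularity, the singularity set is empty and $\Mo$ is just a pair of disjoint copies of $\bC$, each locally isometric via $\pi_\omega$ to $(\bC,\tilde g_{\omega,E})$, so the curvature formula \eqref{eq:curvature} applies pointwise on all of $\bC$ with $\tilde V_\omega$ replaced by $\tilde W$. For the Jacobi metric $(1-E^{-1}\tilde W)g_{\Euclid}$ to be Riemannian one needs $E>\sup_{\bC}\tilde W$, so the denominator $2(E-\tilde W(q))^3$ is uniformly positive. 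The nonpositivity of the numerator then reads
\begin{equation*}
  (E-\tilde W(q))\,\Delta\tilde W(q)+\norm{\nabla\tilde W(q)}^2\le 0\qquad (q\in\bC),
\end{equation*}
and since $\norm{\nabla\tilde W}^2\ge 0$ and $E-\tilde W>0$, this forces $\Delta\tilde W(q)\le 0$ everywhere; that is, $\tilde W$ is superharmonic on the plane.

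Next I would use that, by Assumption~\ref{shortrange} applied to~$\tilde W$, the function $\tilde W$ is continuous and tends to zero at infinity, hence bounded. The key step is then the classical two-dimensional Liouville theorem: every superharmonic function on $\bR^2$ that is bounded below is constant (equivalently, every subharmonic function on the plane that is bounded above is constant; this is a standard consequence of the parabolicity of $\bR^2$, and is proved for instance by comparison with harmonic extensions on balls of radius $R\to\infty$, using that on $\partial B_R$ the boundary values tend to zero uniformly). Applied to $\tilde W$, this forces $\tilde W\equiv c$ for some constant~$c$, and the decay at infinity then gives $c=0$, so $\tilde W\equiv\tilde W_0$.

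The main obstacle is not in the argument itself---which is rather short once the curvature formula is in hand---but in recognising that the conclusion is genuinely two-dimensional and in being careful about the direction of the Liouville principle. In dimensions $d\ge 3$, bounded superharmonic functions on $\bR^d$ need not be constant (the Newton kernel is a standard example), so nothing like this lemma could hold, and the whole approach of regularising via the branched cover $\Mo$ of $\bC$ would not obstruct nontrivial smooth attractive single-site potentials in the same clean way. It is the recurrence (parabolicity) of Brownian motion on the plane that is doing the real work in the Liouville step, and one should check that the bounded-below, not the bounded-above, version is the one needed for a superharmonic function.
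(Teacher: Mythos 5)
Your proposal is correct, but it follows a genuinely different route from the paper's. The paper argues via total curvature: by the short-range assumption the Gaussian curvature of the Jacobi metric is integrable over $\bC$, a Gauss--Bonnet (Cohn--Vossen type) argument for this asymptotically Euclidean metric shows that the integral of the curvature over the plane vanishes, so pointwise nonpositivity forces the curvature to vanish identically, and hence $\tilde W\equiv0$. You instead extract from the sign of the numerator in \eqref{eq:curvature} the pointwise inequality $\Delta\tilde W\le0$, and then use the parabolicity of $\bR^2$ (every superharmonic function bounded below is constant) together with the decay at infinity to get $\tilde W\equiv0$; both steps are sound, and your caveat that this Liouville property is specific to two dimensions is exactly right. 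Comparing the two: your argument is more elementary (no Gauss--Bonnet on a noncompact surface) and needs slightly weaker input, namely only that $\tilde W$ is bounded and tends to zero at infinity, whereas integrability of the curvature also uses the decay of $\nabla\tilde W$ and $\Delta\tilde W$ from Assumption~1. The paper's argument, in return, carries the conceptual message of the section -- any smooth short-range potential contributes zero total curvature, which is why attractive Coulombic singularities are needed to produce negative curvature -- and it is worth noting that the paper's final step (curvature $\equiv0$ implies $\tilde W\equiv0$) itself requires a small Liouville-type argument of precisely the kind you spell out, so your proof in effect makes explicit what the paper leaves implicit.
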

\begin{proof}
  From \eqref{as:sisi} it follows that the curvature of one single site potential is integrable.
  By the theorem of Gauss--Bonnet
  the integral of the curvature over~$\bC$ vanishes.
  Since the curvature is nonpositive, it has to vanish, too,
  and so does the single site potential.
\end{proof}

Our goal is
\begin{theorem}\label{thm:toptrans}
  If $\,(\tilde V_\omega,E)$ is of non--positive curvature, then 
  the geodesic flow
  $\Phi_\omega\colon\bR\times T_1\Mo\to T_1\Mo$
  \nomenclature[AT1]{$T_1$}{unit tangent bundle functor}{}{}%
  is \emph{topologically transitive} $\beta$--almost surely.
\end{theorem}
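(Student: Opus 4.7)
The plan is to view $\Phi_\omega$ as the geodesic flow on the complete nonpositively curved Riemannian surface $(\Mo,\mathbf g_{\omega,E})$, and to produce a dense orbit by a shadowing argument using hyperbolic closed geodesics around pairs of singularities as building blocks. The hyperbolic structure comes from the strictly negative curvature concentrated near the branch points, while the abundance of closed orbits comes from the $\vartheta$-invariant placement of singularities.

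First I would quantify the curvature: from~\eqref{eq:curvature} combined with the Coulombic condition $f_{\omega(\ell)}(0)<0$ at each singularity, there should exist $r_0,\kappa>0$ so that on a ball of radius $r_0$ around every branch point of $\Mo$ the Gaussian curvature is bounded above by $-\kappa$; away from branch points the curvature is merely nonpositive by assumption. Combined with $\vartheta$-invariance of $\beta$ and the hypothesis $\beta\{\omega_0\}=0$, a Borel--Cantelli style argument then shows that $\beta$-a.s.\ the singularity set $\cS_\omega$ is infinite and every finite singularity pattern of positive $\beta$-probability reoccurs, up to translation, in arbitrarily large annuli of $\bC$. In particular, the fundamental group of $\Mo$ is $\beta$-a.s.\ an infinitely generated free group.

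Next, since $(\Mo,\mathbf g_{\omega,E})$ is complete and nonpositively curved, Cartan--Hadamard gives a universal cover diffeomorphic to $\bR^2$ and a unique geodesic representative in each nontrivial free homotopy class of loops. I would single out the classes winding once around a pair of neighboring singularities: the corresponding closed geodesics are hyperbolic, because perpendicular Jacobi fields receive strict expansion at each of the two singularity traversals per period. Moreover, heteroclinic geodesics between any two such hyperbolic closed orbits exist by a minimization (or stable/unstable manifold intersection) argument in the Hadamard cover. Fixing a countable basis $(B_n)_{n\in\bN}$ of $T_1\Mo$, one then concatenates heteroclinic arcs into a pseudo-orbit that visits every $B_n$, and applies shadowing to convert it into a single genuine orbit dense in $T_1\Mo$, proving topological transitivity.

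The main obstacle will be making the shadowing rigorous in the absence of uniform hyperbolicity: the metric is flat in the complement of small disks around the singularities, so long free-flight segments supply no contraction to absorb pseudo-orbit errors. The remedy should be a variable-scale argument in which the definite angular focussing provided at each singularity encounter is exploited to give exponential decay of perturbations on a ``singularity-count'' clock rather than physical time; this is well-defined $\beta$-a.s.\ thanks to the translation-invariance and positive density of $\cS_\omega$ guaranteed by the random structure.
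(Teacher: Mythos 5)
Your proposal takes a genuinely different route (hyperbolic closed geodesics, heteroclinic connections, shadowing), but it has a real gap: it needs quantitative hyperbolicity that the hypothesis does not supply. The theorem assumes only that $(\tilde V_\omega,E)$ is of non--positive curvature — strict negativity enters only later, in \cref{thm:erg:Coul} — and your opening claim that \cref{eq:curvature} together with $f_j(0)<0$ yields curvature $\le-\kappa$ on uniform balls around every branch point is not a consequence of the stated hypotheses: near a branch point the lifted Jacobi metric is flat to leading order (approximately $4\abs{f_j(0)}\,\abs{dQ}^2$ in the covering coordinate), and the size of the curvature there depends on $E$ and on subleading terms; nonpositivity alone gives no uniform $\kappa$. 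With that, the hyperbolicity of your closed geodesics, the existence of heteroclinic connections (asserted via ``minimization or stable/unstable intersection'', not proved), and above all the shadowing step are unsupported. You name the obstacle yourself: away from the singularities the metric is essentially flat, free flights between singularity encounters have unbounded length, and perturbations grow linearly along them; a ``singularity-count clock'' is not an argument, because finitely much contraction per encounter cannot absorb errors accumulated over arbitrarily long flat excursions, so the pseudo-orbit never closes into a true orbit. Two smaller problems: you invoke recurrence of every positive-probability pattern and positive density of $\cS_\omega$, but no ergodicity of $\beta$ is assumed in this section (only $\vartheta$-invariance and $\beta\{\omega_0\}=0$); and topological transitivity only requires, for each pair of nonempty open sets $U,V\subseteq T_1\Mo$, one orbit meeting both — constructing a dense orbit is more than needed and exactly what forces you into the unavailable shadowing machinery.

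For comparison, the paper's proof never uses hyperbolicity. It shows (\cref{lemma:nofreecones}) that $\beta$-a.s.\ every Euclidean cone in $\cL$ contains a singularity, using only invariance and $\beta\{\omega_0\}=0$; converts this by comparison in nonpositive curvature into the statement that every open cone of initial directions contains a geodesic hitting a branch point (\cref{lemma:singsappear}); deduces density of periodic orbits, realized as geodesic segments between two branch points (\cref{prop:periodicdense}); and then extracts the needed ``amount'' of negative curvature not pointwise but through Gauss--Bonnet integrals, proving that the universal cover is a visibility manifold (\cref{prop:visibility}) and that lifts of periodic geodesics end in zero points of the ideal boundary (\cref{lemma:zeropoint}). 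Two periodic orbits through $U$ and $V$ are then joined by a geodesic asymptotic (with distance tending to zero) to both, and its projection meets $U$ and $V$. If you want to salvage your approach, you would in effect have to prove these soft statements anyway; alternatively, your scheme could only work under an additional, stronger hypothesis of uniformly strictly negative curvature near the branch points together with a genuine non-uniformly-hyperbolic shadowing theorem, neither of which is available here.
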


The strategy will be as follows.
In \cref{prop:periodicdense} we show that
the periodic orbits of~$\Phi_\omega$ are dense in~$T_1\Mo$.
Note, that the same is true for Anosov diffeomorphisms on compact manifolds,
cf.~\cite[3.8]{Bow75}.
Then, in the proof of \cref{thm:toptrans} from page~\pageref{proof-toptrans} on,
in order to connect two open subsets of $T_1\Mo$,
we will connect two periodic orbits with an intertwining orbit.

\newcommand{\vphi}{\varphi}%
\newcommand{\cone}[3]{\vartriangle_{#1}^{(#2)}\!(#3)}%
\newcommand{\sector}[3]{\blacktriangle_{#1}^{(#2)}(#3)}%
We start with a useful lemma, which singles out the set of full measure
on which we establish topological transitivity.
We denote the Euclidean cone intersected with the lattice~$\cL$
\begin{equation*}
  \cone q\vphi x:=\{\ell\in\cL\mid\sphericalangle(x,\ell-q)<\vphi\}\text,
\end{equation*}
\nomenclature[Gdelta]{$\cone q\vphi x$}{euclidian cone}{}{}%
$q\in\bC$, $\vphi\in[0,\pi]$, $x\in T_{1,q}\bC=S^1$.

\begin{lemma}\label{lemma:nofreecones}
  The $\vartheta$--invariance of $\beta$ and $\beta\{\omega_0\}=0$
  imply that there are $\beta$-almost surely no Euclidean cones
  without nonvanishing single site potential.
  More precisely:
  \begin{equation*}
    \beta\bigl\{\omega\in\Omega\bigm|
      \exists(q,\vphi,x)\in\bR^2\times(0,\pi]\times
      S^{1}\colon
      \omega(\cone q\vphi x)=\{0\}\bigr\}=0\text.
  \end{equation*}
\end{lemma}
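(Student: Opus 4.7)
The plan is to show that
\[
A := \bigl\{\omega \in \Omega : \exists (q,\varphi,x) \in \bR^2 \times (0,\pi] \times S^1,\ \omega(\cone{q}{\varphi}{x}) = \{0\}\bigr\}
\]
is $\beta$--null, by first reducing $A$ to a countable union of ``fixed cone'' events and then showing each such event has measure zero using $\vartheta$--invariance and the hypothesis $\beta\{\omega_0\}=0$.

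\emph{Step~1 (countable reduction).} Fix a countable dense subset $\cX \subseteq S^1$ and, for each triple $(\ell,\varphi,x)$ in the countable index set $\mathcal Q := \cL \times \bigl(\bQ \cap (0,\pi)\bigr) \times \cX$, set $A_{\ell,\varphi,x} := \{\omega : \omega(\cone{\ell}{\varphi}{x}) = \{0\}\}$. I claim $A \subseteq \bigcup_{(\ell,\varphi,x) \in \mathcal Q} A_{\ell,\varphi,x}$. Given $\omega$ with $\omega(\cone{q_0}{\varphi_0}{x_0}) = \{0\}$, I first replace $\varphi_0$ by $\min(\varphi_0,\pi/4)$, noting that $\omega$ still vanishes on the resulting sub-cone. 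Then I choose $x_1 \in \cX$ with $\sphericalangle(x_1,x_0) < \varphi_0/4$, a rational $\varphi_1 \in (0,\varphi_0/4)$, and a lattice apex $\ell^* \in \cone{q_0}{\varphi_0/4}{x_0}$; the last choice is available because any cone of positive opening contains infinitely many lattice points. Since $\varphi_0/2 < \pi/2$, the angular sector around $x_0$ of half-angle $\varphi_0/2$ is a convex cone closed under vector addition, and the decomposition $\ell - q_0 = (\ell - \ell^*) + (\ell^* - q_0)$ with both summands in this sector yields $\cone{\ell^*}{\varphi_1}{x_1} \subseteq \cone{q_0}{\varphi_0}{x_0}$. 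Hence $\omega \in A_{\ell^*,\varphi_1,x_1}$.

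\emph{Step~2 (each $A_{\ell^*,\varphi,x}$ is $\beta$--null).} Fix $(\ell^*,\varphi,x) \in \mathcal Q$. Pick lattice points $\ell_n \in \cone{\ell^*}{\varphi/2}{x}$ with $\|\ell_n - \ell^*\| \ge n/\sin(\varphi/2)$; such $\ell_n$ exist since $\cone{\ell^*}{\varphi/2}{x}$ contains lattice points going to infinity. The Euclidean distance from $\ell_n$ to the boundary of $\cone{\ell^*}{\varphi}{x}$ is then at least $n$, so $B_n(\ell_n) \cap \cL \subseteq \cone{\ell^*}{\varphi}{x}$, giving
\[
A_{\ell^*,\varphi,x} \subseteq D_n := \bigl\{\omega : \omega|_{B_n(\ell_n) \cap \cL} \equiv 0\bigr\}.
\]
Using the translation identity $B_n(\ell_n) \cap \cL = \ell_n + (B_n(0) \cap \cL)$ and the $\vartheta$--invariance of $\beta$,
\[
\beta(D_n) = \beta(E_n), \qquad E_n := \bigl\{\omega : \omega|_{B_n(0) \cap \cL} \equiv 0\bigr\}.
\]
The events $(E_n)$ are decreasing with $\bigcap_n E_n = \{\omega_0\}$, so continuity of measure from above together with $\beta\{\omega_0\}=0$ yields $\beta(E_n) \searrow 0$, and hence $\beta(A_{\ell^*,\varphi,x}) = 0$. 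Countable subadditivity over $\mathcal Q$ then gives $\beta(A)=0$.

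The main obstacle lies in Step~1: constructing explicit cone containments that replace arbitrary real parameters by rational ones without losing the empty-cone property. The convex-cone argument is the essential geometric input, valid once the half-angle has been reduced by a preliminary sub-cone. Once this reduction is in place, Step~2 is a standard application of $\vartheta$--invariance and continuity of measure combined with the non-atomicity hypothesis $\beta\{\omega_0\}=0$.
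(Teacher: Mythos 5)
Your proof is correct, and its skeleton coincides with the paper's: reduce the uncountable family of cones to a countable one, then kill each fixed-cone event using $\vartheta$--invariance, monotone continuity of $\beta$, and $\beta\{\omega_0\}=0$. The implementation differs in two small but genuine ways. For the countable reduction the paper uses cones with rational apex and axis pointing in a lattice direction $\ell/\norm{\ell}$, while you use lattice apexes together with an arbitrary countable dense set of directions, and you actually spell out the convex--sector containment argument that the paper only asserts when it writes the event as a denumerable union. For the fixed-cone step the paper translates the apex backwards along the (lattice-direction) axis: by invariance the translated events all have the measure of the original one, they decrease as the cones grow, and their intersection is $\{\omega_0\}$, so the common measure is $\beta\{\omega_0\}=0$. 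You instead inscribe arbitrarily large balls centred at far-away lattice points inside the fixed cone, translate them to the origin by $\vartheta$, and let the radius grow, so that the events ``$\omega$ vanishes on $B_n(0)\cap\cL$'' decrease to $\{\omega_0\}$. Your variant does not need the axis to be a lattice direction (so no density of lattice directions in $S^1$ enters), at the price of the small quantitative estimate on the distance from an interior lattice point to the cone's boundary; both routes are equally elementary and rest on exactly the same three hypotheses.
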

\begin{proof}
  We study the sets
  \begin{equation*}
    A_{q,\vphi,x}:=\bigl\{\omega\in\Omega\bigm|
      \omega\bigl(\cone q\vphi x\bigr) = \{0\} \bigr\}\text,
  \end{equation*}
  $(q,\vphi,x)\in\bR^{2}\times(0,\pi]\times
  S^{1}$, first.
  For $(q,\vphi,\ell)\in\bR^2\times(0,\pi]\times\cL$ we have
  \begin{equation*}
    \beta\bigl(A_{q,\vphi,\frac\ell{\norm\ell}}\bigr)
      =\lim_{n\to\infty}\beta\bigl(A_{q-n\ell,\vphi,\frac\ell{\norm\ell}}\bigr)
      =\beta\Bigl(\Isect\nolimits_{n\in\bN}A_{q-n\ell,\vphi,\frac\ell{\norm\ell}}\Bigr)
      =\beta\{\omega_0\}
      =0\text.
  \end{equation*}
  Note now that the set in question can be written as
  the denumerable union
  \begin{equation*}
    \Union\nolimits_{(q,\vphi,\ell)\in\bQ^2\times\bQ_+\times\cL}
      A_{q,\vphi,\frac\ell{\norm\ell}}\text.
  \end{equation*}
\end{proof}

We denote the tangent bundle of $\Mo$ with $\tau_{\Mo}\colon T\Mo\to\Mo$
\nomenclature[GtauMomega]{$\tau_{\Mo}$}{tangential projection}{}{}%
and the natural length metric on $(\Mo,\mathbf g_{\omega,E})$ with $d_{\Mo}$.
\nomenclature[AdMomega]{$d_{\Mo}$}{natural length metric on $\Mo$}{}{}%
\begin{lemma}\label{lemma:singsappear}
  For $\beta$--almost all $\omega\in\Omega$ the following holds.
  Given a point $q\in\Mo$ and a nonempty open subset
  $U\subseteq T_{1,q}\Mo:=T_1\Mo\isect T_q\Mo$ of the sphere bundle,
  there exists an initial direction $v\in U$
  such that the corresponding geodesic hits a branch point at $t>0$, i.e.
  \begin{equation*}
    \exp_{\Mo}(tv)\in\pi_\omega^{-1}(\cS_\omega)\text.
  \end{equation*}
\end{lemma}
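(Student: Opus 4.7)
The plan is to work on the $\beta$--measure--one event where \cref{lemma:nofreecones} holds, and on this event to produce, for any prescribed $q\in\Mo$ and nonempty open $U\subseteq T_{1,q}\Mo$, a direction $v\in U$ whose geodesic meets a branch point of~$\pi_\omega$. The strategy is to show that the image of $U$--rays, projected to~$\bC$, covers a full Euclidean cone, and then to place a singularity of~$\tilde V_\omega$ inside it via \cref{lemma:nofreecones}.

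First I would invoke Cartan--Hadamard: since $(\Mo,\mathbf g_{\omega,E})$ is complete with non--positive curvature, its universal cover $\widetilde\Mo$ is diffeomorphic to~$\bR^2$, and for any lift~$\widetilde q$ of~$q$ the exponential map $\exp_{\widetilde q}\colon T_{\widetilde q}\widetilde\Mo\to\widetilde\Mo$ is a diffeomorphism. Lifting~$U$ to $\widetilde U\subseteq T_{1,\widetilde q}\widetilde\Mo$, the open sector $\widetilde C_U:=\{tv\mid v\in\widetilde U,\;t>0\}$ has open image in~$\widetilde\Mo$. With $p\colon\widetilde\Mo\to\Mo$ the covering projection, put
\[A\;:=\;\pi_\omega\bigl(p(\exp_{\widetilde q}(\widetilde C_U))\bigr)\;\subseteq\;\bC\text.\]
It suffices to exhibit $\cS_\omega\cap A\ne\emptyset$, since then $v\in\widetilde U$ and $t>0$ exist with $p(\exp_{\widetilde q}(tv))\in\pi_\omega^{-1}(\cS_\omega)$, i.e.\ a branch point lies on the geodesic.

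The central step is to establish that $A$ contains a Euclidean cone $\{z\in\bC\mid\sphericalangle(x',z-q_0)<\varphi'\}$ for some $q_0\in\bC$, $\varphi'>0$ and $x'\in S^1$. By Assumption~\ref{shortrange}, $\tilde V_\omega\to 0$ at infinity, so the Jacobi metric $(1-E^{-1}\tilde V_\omega)g_{\Euclid}$ is asymptotically Euclidean. For each $v\in\widetilde U$ the projected geodesic $t\mapsto\pi_\omega(p(\exp_{\widetilde q}(tv)))$ either hits a branch point at some finite $t>0$ (and we are done) or extends to $t=\infty$ and converges to a straight Euclidean ray with asymptotic direction $\alpha(v)\in S^1$. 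Non--positive curvature on the Hadamard manifold $\widetilde\Mo$ makes $\alpha$ a continuous injection onto the sphere at infinity, which the asymptotic Euclidean structure identifies with~$S^1$; hence $\alpha(\widetilde U)$ is an open arc of~$S^1$. Taking $x'\in\alpha(\widetilde U)$, $\varphi'>0$ small and $q_0$ far enough along~$x'$, one obtains the desired Euclidean cone inside~$A$.

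Finally, shrinking the opening angle to $\varphi''<\varphi'$ leaves room for the bounded offset $s_{\omega(\ell)}\in\cD$, so \cref{lemma:nofreecones} applied to $\cone{q_0}{\varphi''}{x'}$ produces $\ell\in\cL$ with $\omega(\ell)\ne 0$ such that $s:=s_{\omega(\ell)}+\ell\in\cS_\omega$ lies in~$A$; unpacking the definition of~$A$ gives $v\in\widetilde U$ and $t>0$ with $p(\exp_{\widetilde q}(tv))\in\pi_\omega^{-1}(s)$, finishing the argument. The main obstacle I anticipate is the asymptotic step: converting non--positive curvature plus short--range decay into the concrete statement that every point of a genuine planar Euclidean cone is swept out by a projected geodesic with initial direction in~$\widetilde U$. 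This needs the Hadamard boundary--at--infinity identification with~$S^1$ via careful comparison between the Jacobi and Euclidean metrics in the scattering tail, together with quantitative control on the ``intercept'' of each asymptote so that the union of asymptotic Euclidean rays really fills a two--dimensional cone rather than just an arc of directions at infinity.
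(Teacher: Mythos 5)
Your overall reduction is the same as the paper's: assume (toward a contradiction) that no geodesic with initial direction in the given cone of directions hits a branch point, show that the projection of the swept region to~$\bC$ contains a Euclidean cone free of singularities, and conclude with \cref{lemma:nofreecones}. The gap is in your central step. From the mere absence of branch points you infer that every geodesic of the family escapes to infinity, that its projection converges to a straight Euclidean ray, and that the asymptotic directions $\alpha(\widetilde U)$ form an open arc whose rays sweep out a solid Euclidean cone. None of this follows from what you have at that point: a geodesic avoiding the branch points can stay in a bounded region of~$\bC$ forever (closed geodesics winding around pairs of singularities, and orbits asymptotic to them, avoid the branch points), and geodesics with initial direction close to $\partial\widetilde U$ may pass arbitrarily close to singularities lying just outside the cone and never straighten, so $\alpha$ need not even be defined on all of~$\widetilde U$, let alone map it onto an open arc; and even granted an arc of asymptotic directions, an open connected set reaching infinity in every direction of an arc need not contain a solid Euclidean cone. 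You flag this as the ``main obstacle'', but it is precisely the mathematical content of the lemma and is left unresolved.

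The paper closes this gap with a more local, quantitative argument that avoids any per--geodesic dichotomy: fix a single direction $v$ in the interior of~$U$ and use nonpositive curvature (comparison in the universal cover) to obtain the linear lower bound \eqref{eq:bigdistance} for the $d_{\Mo}$--distance from $\exp_{\Mo}(tv)$ to the complement of the cone~$\blacktriangle$; the conformal comparison between $d_{\Mo}$ and the Euclidean metric then shows that Euclidean balls of linearly growing radius around the projected axis lie inside $\pi_\omega(\blacktriangle)$, hence contain no singularities; by the short--range assumption the force felt along the axis therefore decays, so this one projected geodesic does acquire a limiting direction, and the union of the growing balls already contains a Euclidean cone whose lattice points all carry the mark~$0$ --- a $\beta$--null event by \cref{lemma:nofreecones}. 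Note that the operative hypothesis there is branch--freeness of the whole cone, not any asymptotic property of individual geodesics; that is exactly what makes the argument go through where your dichotomy fails. (Your final remark about shrinking the opening angle to accommodate the offset $s_{\omega(\ell)}\in\cD$ is a fixable detail, but it only becomes relevant once the cone in the image has actually been produced.)
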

\begin{proof}
  W.l.o.g.\ we assume $U\ne T_{1,q}\Mo$.
  For every $\omega\in\Omega$ such that there is an open subset
  $U\subseteq T_{1,q}\Mo$ with no branch points in
  \begin{equation*}
    \blacktriangle:=\exp_{\Mo}(\bR_+U)\;\subseteq\;\Mo
  \end{equation*}
  \nomenclature[Gdelta]{$\blacktriangle$}{Jacobian cone}{}{}%
  we will establish the existence of a Euclidean cone
  $\vartriangle\subseteq\cL$ with $\omega(\vartriangle)=\{0\}$.
  Then \cref{lemma:nofreecones} applies and gives the desired result.
  \par
  Fix $v\in U$.
  Since the curvature of $(\Mo,\mathbf g_{\omega,E})$ is nonpositive,
  we find a constant $\rho_0>0$ such that
  \begin{equation}\label{eq:bigdistance}
    d_{\Mo}\bigl(\exp_{\Mo}(tv),\,\Mo\setminus\blacktriangle\bigr)
      \;\ge\;2\rho_0t
  \end{equation}
  for all $t\ge0$, see \cite{BBI01}.
  Note that, due to the absence of branch points in~$\blacktriangle$,
  $\pi_\omega|_{\blacktriangle}$ is a homeomorphism onto its image,
  and since the conformal factor is bounded by
  $h:=1-E^{-1}\tilde V_{\omega,\max}$, we know for all $q,q'\in\Mo$
  \begin{equation*}
    d_{\Mo}(q,q')
      \;\le\; h \cdot\abs{\pi_\omega(q)-\pi_\omega(q')}\text.
  \end{equation*}
  This implies
  \begin{equation*}
    \blacktriangle'
      \;:=\;\Union_{t\ge0}B_{h^{-1}\rho_0t}
	\bigl(\pi_\omega\circ\exp_{\Mo}(tv)\bigr)
      \;\subseteq\;\pi_\omega(\blacktriangle)
      \;\subseteq\;\bC\text,
  \end{equation*}
  and in $\blacktriangle'$ we search our Euclidean cone~$\vartriangle$.
  All we have to show to this end is that the ``axis'' of $\blacktriangle'$
  \begin{equation*}
    \bR_+\ni t\;\longmapsto\;\bigl(q_\omega(t),p_\omega(t)\bigr)
      :=T\pi_\omega\circ\Phi_\omega(t,v)
  \end{equation*}
  converges to a definite direction:
   $ \lim_{t_0\to\infty}
      \sup_{t\ge t_0} \abs{p_\omega(t+t_0,v)-p_\omega(t_0,v)}=0$.
  But this is clear from Assumption~1,
  \emph{i.e.}\ that all single site potentials are short range,
  and equation~\eqref{eq:bigdistance}, which together imply
  that the curvature along the axis vanishes.
\end{proof}

With this tools at hand it is easy to prove
\begin{proposition}\label{prop:periodicdense}
  The set of periodic orbits of $\Phi_\omega\colon\bR\times T_1\Mo\to T_1\Mo$
  is dense in $T_1\Mo$ for $\beta$--almost every $\omega\in\Omega$.
\end{proposition}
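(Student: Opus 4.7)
The plan is, given $v \in T_1\Mo$ and $\epsilon>0$, to exhibit a closed geodesic of the flow $\Phi_\omega$ within $\epsilon$ of $v$. The strategy is to build a geodesic arc $\gamma$ joining two branch points $b^-, b^+ \in \pi_\omega^{-1}(\cS_\omega)$ whose interior passes $\epsilon$-close to $v$, and then to close it up via the deck involution $\iota\colon\Mo\to\Mo$, $(q,Q)\mapsto(q,-Q)$, of the branched cover \eqref{eq:Momega}.

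First I apply \cref{lemma:singsappear} at $\tau_{\Mo}(v)$ to obtain a direction $v_1$ within $\epsilon/2$ of $v$ whose forward geodesic hits a branch point $b^+$ at time $t^+ > 0$; set $w := -\Phi_\omega(t^+,v_1) \in T_{1,b^+}\Mo$, the reversed arrival velocity. The unperturbed forward orbit from $w$ retraces $\gamma_{v_1}$ backward and therefore stays at positive distance from $\pi_\omega^{-1}(\cS_\omega)$ on $(0,t^+]$. Hence a second application of \cref{lemma:singsappear}, this time at the foot point $b^+$ in a neighbourhood of $w$ shrunk enough that geodesic-flow propagation over the compact interval $[0,t^+]$ produces errors $<\epsilon/2$, yields $w'$ close to $w$ whose forward geodesic hits another branch point $b^-$ at some time $t^- > t^+$. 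The reversed arc $\gamma(s) := \gamma_{w'}(t^--s)$, $s \in [0,t^-]$, is then a geodesic from $b^-$ to $b^+$ whose tangent vector at $s = t^- - t^+$ lies within $\epsilon$ of $v$ by construction.

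The involution $\iota$ is an isometry of $\mathbf g_{\omega,E}$ (as $\pi_\omega\circ\iota=\pi_\omega$) with fixed-point set exactly the branch points; in the local chart $q = Q^2 + s_j$ around $b^\pm$ it acts by $Q\mapsto -Q$, so its differential $(T\iota)_{b^\pm}$ equals $-\mathrm{Id}$ on $T_{b^\pm}\Mo$. Therefore $\iota\circ\gamma$ is a second geodesic from $b^-$ to $b^+$ on the opposite sheet of the cover, and the concatenation
\[
  c \;:=\; \gamma \ast (\iota\circ\gamma)^{-1}
\]
is $C^1$ at both branch-point corners: if $\gamma$ arrives at $b^+$ with velocity $u$, then $\iota\circ\gamma$ arrives with velocity $-u$, whose reversal is again $u$ and matches the outgoing velocity of $\gamma$ (and analogously at $b^-$). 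A $C^1$ concatenation of geodesic arcs is itself a geodesic, so $c$ is a closed geodesic in $(\Mo,\mathbf g_{\omega,E})$; since the trajectory of $c$ contains $\gamma$, it passes within $\epsilon$ of $v$ in $T_1\Mo$.

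The crux of the argument is this automatic $C^1$-matching of the two arcs at the corners, which obviates any global CAT(0) or axial-isometry discussion that a more general homotopy-class argument would require. The only delicate step is the careful two-stage deployment of \cref{lemma:singsappear}: first forward from $v$ to secure $b^+$, and then forward from $b^+$ in the direction opposite to the arrival so that, upon reversal, the resulting geodesic still witnesses the desired proximity to $v$.
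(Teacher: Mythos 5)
Your proposal is correct and takes essentially the same route as the paper: two successive applications of \cref{lemma:singsappear} --- first near the given vector, then at the branch point in the (reversed) arrival direction, with the flow's continuity controlling the perturbation --- yield a geodesic arc between two branch points passing close to $v$, which then closes up to a periodic orbit. The only difference is that you make the closing-up step explicit via the deck involution $(q,Q)\mapsto(q,-Q)$, whose differential is $-\mathrm{Id}$ at the branch points so that the concatenation is $C^1$ and hence a closed geodesic, whereas the paper simply asserts periodicity of the arc between two branch points, relying implicitly on the same reflection/regularisation mechanism (cf.\ the deck transformation $\mathbf G_\omega$ used in \cref{sect10}).
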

\begin{proof}
  To any nonempty open set $U\subseteq T_1\Mo$
  \nomenclature[AU]{$U$}{open subset of $T_1\Mo$}{}{}%
  we construct a periodic orbit whose trajectory intersects~$U$.
  \Cref{lemma:singsappear} guarantees that we hit
  a branch point at some time~$t>0$.
  We choose a branch point
  \begin{equation*}
    x\;\in\;\exp_{\Mo}(tU)
      \isect\pi_\omega^{-1}(\cS_\omega)\;\subseteq\;\Mo\text.
  \end{equation*}
  Now $\Phi_\omega$ is continuous, which shows that
  \begin{equation*}
    U':=-\Phi_\omega(t_1,U)\isect T_{1,x}\Mo
  \end{equation*}
  is open in $T_{1,x}\Mo$.
  Again by \cref{lemma:singsappear} we find $v\in U'$ and $t'>0$
  such that $\Phi_\omega(t',v)$ hits again a branch point.
  By construction the trajectory of~$v$ intersects~$U$ and is periodic.
\end{proof}

To gain more overview we introduce the universal covering
$\mathbf\pi_\omega^*\colon\Mo^*\to\Mo$ of $\Mo$
\nomenclature[Gpistaromega]{$\mathbf\pi_\omega^*$}{universal covering of~$\Mo$}{}{}%
\nomenclature[AMstaromega]{$\Mo^*$}{universal covering surface of~$\Mo$}{}{}%
and equipp it with the riemannian metric
$\mathbf g_{\omega,E}^*:=(\mathbf\pi_\omega^*)^*\mathbf g_{\omega,E}$.
\nomenclature[AgstaromegaE]{$\mathbf g_{\omega,E}^*$}{Jacobi metric on~$\Mo^*$}{}{}%
This makes $\Mo^*$~a Hadamard manifold.
We denote the natural length metric with $d_{\Mo^*}$.
\nomenclature[Adstaromega]{$d_{\Mo^*}$}{natural length on~$\Mo^*$}{}{}%
The fact that there is a certain amount of negative curvature in~$\Mo^*$
is expressed in the following.
\begin{proposition}\label{prop:visibility}
  For $\beta$--almost all $\omega\in\Omega$
  the riemannian surface $(\Mo^*,\mathbf g_\omega^*)$
  is a \emph{visibility manifold}, i.e.\ for all $p\in\Mo^*$
  and $\ve>0$ there exists $r=r(p,\ve)>0$ such that all geodesic segments
  \nomenclature[Gsigma]{$\sigma$}{geodesic segment}{}{}%
  $\sigma\colon[a,b]\to\Mo^*$ with distance at least~$r$ from~$p$
  are seen from~$p$ under an angle less then~$\ve$ (cf.\ \cite{EO73}):
  \begin{equation*}
    \forall p\in\Mo^*,\ve>0\exists r>0\colon\quad
    d_{\Mo^*}(p,\sigma[a,b])\ge r
    \quad\implies\quad
    \sphericalangle_p\bigl(\sigma(a),\sigma(b)\bigr)<\ve\text.
  \end{equation*}
  \nomenclature[Gangelp]{$\sphericalangle_p$}{angel in $p$}{}{}%
\end{proposition}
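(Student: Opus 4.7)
The plan is to argue by contradiction using the Eberlein--O'Neill characterisation of visibility for Hadamard surfaces \cite{EO73}: a $2$-dimensional Hadamard manifold is a visibility manifold if and only if it does \emph{not} contain an isometrically embedded flat Euclidean half-plane. Hence the goal reduces to showing that, for $\omega$ in the intersection of the full $\beta$-measure sets provided by \cref{lemma:nofreecones} and \cref{lemma:singsappear}, the manifold $(\Mo^*, \mathbf g_{\omega,E}^*)$ contains no such half-plane.

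Suppose for contradiction that there is an isometric embedding of a flat half-plane $H$ into $\Mo^*$, and pick an interior point $p \in H$. Since $H$ is Euclidean, the set $U' \subseteq T_{1,p}\Mo^*$ consisting of unit vectors pointing into $H$ is an open hemisphere of directions whose forward geodesic rays remain in $H$ for all times $t \ge 0$. Using that $\mathbf\pi_\omega^*$ is a local isometry, set $q := \mathbf\pi_\omega^*(p)$ and $U := d_p\mathbf\pi_\omega^*(U') \subseteq T_{1,q}\Mo$; then $U$ is open, and every geodesic $t \mapsto \exp_\Mo(tv)$ starting from $v \in U$ is the $\mathbf\pi_\omega^*$-image of a ray lying entirely in the flat region $H$.

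Now \cref{lemma:singsappear} produces some $v \in U$ whose geodesic meets a branch point of $\Mo$ at positive time. But the branch points of $(\Mo, \mathbf g_{\omega,E})$ carry strictly negative Gaussian curvature, as follows from the regularisation \eqref{eq:Momega} together with the hypothesis $f_j(0) < 0$, which produces a smooth and definitely negatively curved metric in a neighbourhood of each branch point. Hence no geodesic lying in a flat region of $\Mo$ can ever reach such a point --- a contradiction.

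The main technical point I anticipate is the verification that the regularised metric $\mathbf g_{\omega,E}$ really has strictly negative (rather than merely non-positive) curvature at each branch point, since this is what rules out the possibility that a flat half-plane could contain one. This amounts to a local calculation in the chart $z \mapsto z^2 + s_j$ from \eqref{eq:Momega}: the leading Coulombic behaviour $\tilde W_j(q) \sim f_j(0)/\abs{q - s_j}$ with $f_j(0)<0$ gives, after regularisation, a definite negative curvature value at $z = 0$, uniformly across branch points of any fixed type.
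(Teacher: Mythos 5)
Your reduction rests on the claim that a two--dimensional Hadamard manifold is a visibility manifold \emph{if and only if} it contains no isometrically embedded flat half-plane. Only one direction of this is true in general; the direction you need (no flat half-plane $\Rightarrow$ visibility) is a theorem of Eberlein only under additional hypotheses on the isometry group (a duality/cocompactness condition), which are not available here: $\Mo^*$ is the universal cover of a non-compact surface attached to a non-periodic random configuration, so no cocompact group acts. Without such hypotheses the implication is false. A concrete counterexample: a rotationally symmetric Hadamard surface $dr^2+\phi(r)^2d\theta^2$ with $\phi$ strictly convex but $\phi'\to1$ (strictly negative curvature everywhere, hence no flat half-plane, but asymptotically Euclidean) is not a visibility manifold, since long geodesic segments tangent to far-away circles $r=R$ are seen from the pole under an angle close to $\pi$ for every $R$. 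This is exactly the danger in the present setting, where the curvature of $\mathbf g_{\omega,E}$ tends to zero away from the singularities; what saves the day must be quantitative and probabilistic, not a soft dichotomy. The paper's proof supplies precisely this: if visibility failed at $p$ with angle $\ge\ve$, one obtains a geodesically convex cone $\blacktriangle$ swept out by connecting geodesics whose total curvature is bounded below by $-\pi$ (Gauss--Bonnet); such a cone can contain at most one singularity (two singularities would force it to contain the periodic geodesic joining them, violating the curvature bound), hence it contains a subcone free of branch points, and then \cref{lemma:singsappear} (ultimately \cref{lemma:nofreecones}, i.e.\ the randomness) shows this event has $\beta$--measure zero. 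Your argument never uses the measure $\beta$ beyond the two lemmas, but it also never produces the quantitative curvature input that replaces the false half-plane criterion.

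A secondary gap: your contradiction needs strictly negative curvature \emph{at} the branch points, and you assert this follows from the regularisation \eqref{eq:Momega} together with $f_j(0)<0$. It does not: pulling back the Jacobi metric by $z\mapsto z^2+s_j$ gives conformal factor $\bigl(E-\tilde V_\omega(z^2+s_j)\bigr)4\abs z^2\approx 4\abs{f_j(0)}$ near $z=0$, i.e.\ the metric is flat to leading order at the branch point; whether the curvature there is strictly negative depends on $E$ and on subleading terms of $\tilde W_j$, and the standing hypothesis of the proposition is only \emph{non-positive} curvature, so this step is unsupported even granting your main reduction.
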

\begin{proof}
  We need to show that the set of all $\omega\in\Omega$
  which allow a positive~$\ve$ such that we find for all $r>0$
  a geodesic segment $\sigma_r\colon[a_r,b_r]\to\Mo^*$ with
  \begin{inparaenum}
    \item $d_{\Mo^*}(p,\sigma_r)>r$ and
    \item $\sphericalangle_p\bigl(\sigma_r(a_r),\sigma_r(b_r)\bigr)\ge\ve$
  \end{inparaenum}
  is of $\beta$--measure~$0$.
  To do this, we find a cone without singularities
  and invoke \cref{lemma:singsappear}.
  \par
  By compactness of $T_{1,p}\Mo^*$
  we get $x,y\in T_{1,p}\Mo^*$
  with $\sphericalangle_p(x,y)>\ve$ and
  \begin{equation*}
    d_{\Mo^*}(p,\gamma_t)\xto{t\to\infty}\infty\text,
  \end{equation*}
  where $\gamma_t\colon[a_t,b_t]\to\Mo^*$
  is the unique geodesic segment connecting
  $\gamma_t(a_t)=\exp_{\Mo^*}(tx)$ and
  $\gamma_t(b_t)=\exp_{\Mo^*}(ty)$.\\
  The theorem of Gauss--Bonnet implies that the curvature
  integrated over the cone
  \begin{equation*}
    \blacktriangle:=
      \Union_{t>0}\gamma_t[a_t,b_t]
  \end{equation*}
  between~$x$ and~$y$ is bounded by~$-\pi$ from below.
  This means that $\blacktriangle$ cannot cover two singularities,
  since then $\blacktriangle$, beeing geodesically convex,
  would cover the periodic orbit connecting these two,
  and this would contradict Gauss--Bonnet.
  Thereby $\blacktriangle$ contains a cone without singularity
  and \cref{lemma:singsappear} applies.
\end{proof}

\begin{remark}\label{rem:visibility}
  Two geodesic rays $\gamma,\gamma'\colon\bR_+\to\Mo^*$
  are called \emph{asymptotic}, if their distance is bounded.
  Being asymptotic is an equivalence relation,
  and the set of equivalence classes
  (equipped with a suitable topology, see \cite{EO73})
  is the \emph{ideal boundary} $\partial\Mo^*\equiv\Mo^*(\infty)$
  \nomenclature[AMhatomegastar]{$\partial\Mo^*\equiv\Mo^*(\infty)$}{ideal boundary of $\Mo^*$}{}{}%
  of $\Mo^*$.
  A point $\gamma(\infty):=[\gamma]\in\Mo^*(\infty)$
  \nomenclature[Ggammainfty]{$\gamma(\infty)$}{end point of~$\gamma$}{}{}%
  is a \emph{zero point}, if for all $\gamma'\in\gamma(\infty)$
  the distance between $\gamma$ and $\gamma'$ vanishes.
  \cite{EO73} explains that a visibility manifold
  satisfies \emph{Axiom~1}, i.e.\ any two distinct boundary points
  can be connected with a (not necessarily unique) geodesic.
\end{remark}

\begin{lemma}\label{lemma:zeropoint}
  For all $\omega\in\Omega$ holds the following.
  Every lift of a periodic geodesic in $\Mo$ to $\Mo^*$
  connects two zero points of $\Mo^*(\infty)$.
\end{lemma}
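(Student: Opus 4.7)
The plan is to verify that both ideal endpoints $\xi_\pm:=\tilde\gamma(\pm\infty)\in\Mo^*(\infty)$ of a lift $\tilde\gamma\colon\bR\to\Mo^*$ of a periodic geodesic $\gamma\subseteq\Mo$ are zero points; by symmetry, I would treat only $\xi_+$. Periodicity is essential here: if $T>0$ is the period of $\gamma$, then there is a nontrivial deck transformation $\phi\in\pi_1(\Mo)$ with $\phi\circ\tilde\gamma(t)=\tilde\gamma(t+T)$, so $\phi$ is a hyperbolic isometry of $\Mo^*$ whose axis is $\tilde\gamma$ and whose two fixed points in $\Mo^*(\infty)$ are exactly $\xi_\pm$. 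It then suffices to show that every geodesic ray $\gamma'\colon\bR_+\to\Mo^*$ with $\gamma'(\infty)=\xi_+$ satisfies $h(t):=d_{\Mo^*}(\gamma'(t),\tilde\gamma)\xto{t\to\infty}0$.

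Since $\Mo^*$ is a Hadamard manifold, $h$ is convex, and asymptoticity of $\gamma'$ to $\tilde\gamma$ makes $h$ bounded; hence $c:=\lim_{t\to\infty}h(t)\ge 0$ exists. I would assume $c>0$ for contradiction and extract a bi-infinite parallel geodesic. Let $p(t)\in\tilde\gamma$ denote the foot of the perpendicular from $\gamma'(t)$; choose $n_k\in\bN$ and $t_k\to\infty$ so that $\phi^{-n_k}p(t_k)$ lies in a fixed fundamental segment of $\phi$ on $\tilde\gamma$ of length~$T$. Because $\phi$ is an isometry preserving $\tilde\gamma$, the reparametrised geodesics $\sigma_k:=\phi^{-n_k}\circ\gamma'$ all remain at distance $h$ from $\tilde\gamma$ with footpoint in a compactum, and as $n_k,t_k\to\infty$ the parameter intervals on which $\sigma_k$ stays near any given compactum of $\tilde\gamma$ extend to all of $\bR$. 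An Arzelà--Ascoli argument then yields a subsequential limit $\bar\gamma\colon\bR\to\Mo^*$, a complete bi-infinite geodesic everywhere at distance exactly $c>0$ from $\tilde\gamma$.

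The flat strip theorem for CAT(0) spaces then produces an isometrically embedded flat strip of width $c$ bounded by $\tilde\gamma$ and $\bar\gamma$, forcing in particular $\bar\gamma(\pm\infty)=\xi_\pm$. This is incompatible, $\beta$-almost surely, with \cref{prop:visibility}: in a visibility manifold, any two distinct ideal boundary points are joined by at most one geodesic (a standard consequence of the visibility axiom, cf.~\cite{EO73}), whereas $\tilde\gamma\ne\bar\gamma$ are two such geodesics joining $\xi_-$ to $\xi_+$. The resulting contradiction forces $c=0$, and the argument for $\xi_-$ is identical.

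The hard part will be the extraction of $\bar\gamma$ in the second step: the family $\phi^{-n}\circ\gamma'$ consists a priori of rays rather than bi-infinite geodesics, and one must verify that after reparametrisation to keep the foot of the perpendicular bounded, the segments $\sigma_k$ converge on compacta to a full bi-infinite geodesic rather than losing mass at infinity. The co-compact translation $\phi$ is exactly what delivers this control along $\tilde\gamma$, while the isometry property $d(\phi q,\phi q')=d(q,q')$ ensures that applying $\phi^{-n_k}$ does not distort the distance to $\tilde\gamma$. Once $\bar\gamma$ is produced, the flat strip theorem and the uniqueness of connecting geodesics in visibility manifolds --- precisely the sense in which visibility strengthens nonpositive curvature --- close the argument cleanly.
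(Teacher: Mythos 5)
Your argument runs smoothly up to the construction of the parallel geodesic: the convexity of $t\mapsto d_{\Mo^*}(\gamma'(t),\tilde\gamma)$, the extraction of $\bar\gamma$ via the deck transformation $\phi$ and Arzel\`a--Ascoli, and the flat strip theorem are all sound. The gap is the final step. It is \emph{not} true that in a visibility manifold two distinct ideal points are joined by at most one geodesic; equivalently, visibility does not exclude flat strips. The paper itself warns you of this in \cref{rem:visibility}, where Axiom~1 is stated with the parenthesis ``(not necessarily unique)''. A concrete counterexample: glue a Euclidean strip $\bR\times[0,c]$ between two hyperbolic half-planes along their boundary geodesics. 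The resulting Hadamard surface is Gromov hyperbolic (it is quasi-isometric to the hyperbolic plane), and a $\delta$-hyperbolic CAT(0) space satisfies even \emph{uniform} visibility: if $d(p,[a,b])\ge r$, the Gromov product $(a\,|\,b)_p$ is at least $r-O(\delta)$, the geodesics $[p,a]$, $[p,b]$ fellow-travel within $O(\delta)$ up to that parameter, and CAT(0) comparison then forces $\sphericalangle_p(a,b)=O(\delta/r)$. Yet this surface contains a one-parameter family of geodesics joining the two ends of the strip, i.e.\ exactly the flat strip you produced. So the flat strip you obtain from $c>0$ does not contradict \cref{prop:visibility}, and your contradiction never materialises.

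What is missing is precisely the ingredient the paper uses to close the argument: the \emph{periodicity} of $\gamma$ must be used a second time, not only to extract $\bar\gamma$. The paper's proof integrates the curvature over the region bounded by $\gamma^*(\bR_+)$, $\gamma'(\bR_+)$ and the segment joining $\gamma^*(0)$ to $\gamma'(0)$; Gauss--Bonnet bounds this integral, whereas if the two curves stayed a positive distance apart, the region would contain infinitely many $\phi$-translates of a fixed cell carrying a definite amount of negative curvature along the closed geodesic, so the integral would diverge. In your language: the flat strip bordered by $\tilde\gamma$ would descend to a flat annular neighbourhood of the closed geodesic in $\Mo$, and it is this that must be excluded by a curvature/Gauss--Bonnet argument (or by an appeal to strict negativity of curvature near the branch points the periodic geodesics pass through), not by visibility. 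A secondary mismatch: by invoking \cref{prop:visibility} your argument could at best give the statement for $\beta$-almost every $\omega$, while the lemma is asserted for all $\omega\in\Omega$, and the paper's Gauss--Bonnet proof uses no probabilistic input.
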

\begin{proof}
  Given a periodic geodesic $\gamma\colon S^1\to\Mo$,
  its lift $\gamma^*\colon\bR\to\Mo^*$, and another geodesic
  $\gamma'\colon\bR\to\Mo^*$ with $\gamma^*(\infty)=\gamma'(\infty)$,
  we integrate curvature over the area bounded by $\gamma^*(\bR_+)$,
  $\gamma'(\bR_+)$ and the geodesic segment connecting $\gamma^*(0)$
  and $\gamma'(0)$.
  Gauss--Bonnet assures us, that this quantity is bounded,
  and this is only possible if $\gamma^*$ and $\gamma'$ have distance~$0$.
  Otherwise, since $\gamma^*$ covers a periodic orbit, the curvature
  integral is unbounded.
\end{proof}


\begin{proof}[of \cref{thm:toptrans}]\label{proof-toptrans}
  \Cref{prop:periodicdense} tells us that
  for almost all $\omega\in\Omega$ periodic orbits of $\Phi_\omega$
  are dense in $T_1\Mo$.
  Therefore for any two open and nonempty sets
  $U,V\subseteq T_1\Mo$ we find
  two periodic geodesics $\gamma_U,\gamma_V\colon\bR\to\Mo$
  which intersect~$U$ and~$V$, respectively,
  i.e.\ $\dot\gamma_U(\bR)\isect U\ne\es$ and $\dot\gamma_U(\bR)\isect V\ne\es$.
  \par
  There are lifts $\gamma_U^*,\gamma_V^*\colon\bR\to\Mo^*$
  of $\gamma_U$ and $\gamma_V$ to the universal covering
  $\mathbf\pi_\omega^*\colon\Mo^*\to M_\omega$ of $\Mo$.
  By \cref{prop:visibility,rem:visibility}
  the endpoints $\gamma_U^*(-\infty)$ and $\gamma_V^*(\infty)$
  can be joined with a geodesic $\gamma^*\colon\bR\to\Mo^*$,
  i.e.\ $\gamma^*(-\infty)=\gamma_U^*(-\infty)$ and
  $\gamma^*(\infty)=\gamma_V^*(\infty)$.
  \Cref{lemma:zeropoint} makes sure that the distance between
  $\gamma^*(t)$ and $\gamma_U^*\bigl((0,\infty)\bigr)$
  respectively $\gamma_V^*\bigl((-\infty,0)\bigr)$ vanishes,
  as $t\to\pm\infty$, respectively:
  \begin{equation*}
    \lim_{t\to\infty}d_{\Mo^*}
      \bigl(\gamma^*(t),\gamma_U^*\bigl((0,\infty)\bigr)\bigr)
    =\lim_{t\to-\infty}d_{\Mo^*}
      \bigl(\gamma^*(t),\gamma_V^*\bigl((-\infty,0)\bigr)\bigr)
    =0\text.
  \end{equation*}
  This implies that $T\mathbf\pi_\omega^*\circ\dot\gamma^*(\bR)$
  intersects~$U$ and~$V$.
\end{proof}

%
\section{Ergodicity of the finite factor}\label{sect10}
%
Similar to the lattice and the poissonian case,
the Coulombic system does have a finite factor.
Assuming $(\vartheta,\beta)$ to be ergodic again
and the single site potentials to be $\eta\ge3$
times continuously differentiable, we
show ergodicity of this factor.
This is analogous to a result in \cite{Len03} in the setting of billiards.
Thanks to the smoothness of our system,
our proof can rely directly on \cite{Hop41}
without the need for technical generalisations
to cope with singularities like \emph{e.g.} in \cite{LW95}.

The following construction of the finite factor
is carried out in detail in \cite{Sch04,Sch10}.
As above the geodesic motion on the energy surface
is regularised with the twofold covering~\eqref{eq:Momega}.
There is one nontrivial deck transformation
\begin{equation*}
  \mathbf G_\omega\colon\mathbf M_\omega\to M_\omega\textq,
    G(q,Q):=(q,-Q)\text.
\end{equation*}
\nomenclature[AGomega]{$\mathbf G_\omega$}{deck transformation}{}{}%
For each $\omega\in\Omega$ and every $\ell\in\cL$
there exists an isometry
\begin{equation*}
  \mathbf\phi_{\omega,\ell}\colon
    \mathbf M_\omega\to\mathbf M_{\vartheta_\ell\omega}
\end{equation*}
\nomenclature[Gphiomegal]{$\mathbf\phi_{\omega,\ell}$}{isometry $\mathbf M_\omega\to\mathbf M_{\vartheta_\ell\omega}$}{}{}%
that shifts $q$ to $q-\ell$:
\begin{equation*}
  \pi_{\vartheta_\ell\omega}\circ\mathbf\phi_{\omega,\ell}(q,Q)
    \;=\;q-\ell\qquad\bigl((q,Q)\in\mathbf M_\omega\bigr)\text.
\end{equation*}
The deck transformation gives another such map
$\mathbf\phi_{\omega,\ell}\circ\mathbf G_\omega
  =\mathbf G_{\vartheta_\ell\omega}\circ\mathbf\phi_{\omega,\ell}$.
Before we can bundle the maps
$\bigl((\mathbf\phi_{\omega,\ell})_{\omega\in\Omega}\bigr)_{\ell\in\cL}$
into a group action of~$\cL$
on~$\mathbf M:=\Union_{\omega\in\Omega}\{\omega\}\times\mathbf M_\omega$,
\nomenclature[AM]{$\mathbf M$}{twofold branched covering of extended phase space}{}{}%
we have to divide through the group of deck transformations.

But $\pi_\omega$ is a branched covering,
so to keep the differentiable structure we first follow \cite{Kna87,KK92}
and restrict the geodesic flow to the energy surface
\begin{equation*}
  \mathbf\Sigma_{\omega,E}
    :=\big\{(x,v)\in T\bC^2\mid\mathbf g_{\omega,E}(v,v)=1\big\}\text.
\end{equation*}
\nomenclature[GSigmaomegaE]{$\mathbf\Sigma_{\omega,E}$}{unit sphere bundle of branched covering of $\mathbf M_\omega$}{}{}%
We divide $\mathbf\Sigma_{\omega,E}$ by the group~%
$\bZ_2\cong\big\{\Id_{\mathbf\Sigma_{\omega,E}},
  D\mathbf G_\omega\rstr_{\mathbf\Sigma_{\omega,E}}\big\}$
and get the smooth manifold
\begin{equation*}
  \Sigma_{\omega,E}:=\mathbf\Sigma_{\omega,E}/\bZ_2\text.
\end{equation*}
\nomenclature[GSigmaomegaE]{$\Sigma_{\omega,E}$}{$\bZ_2$--quotient of $\mathbf\Sigma_{\omega,E}$}{}{}%
The disjoint union
\begin{equation*}
  \mathbf\Sigma_E
    :=\Union_{\omega\in\Omega}\{\omega\}\times\mathbf\Sigma_{\omega,E}
\end{equation*}
\nomenclature[GSigmaE]{$\mathbf\Sigma_E$}{$\bZ_2$--quotient of $\mathbf\Sigma_E$}{}{}%
inherits its topology from the embedding into $\Omega\times T\bC^2$.
The quotient
\begin{equation*}
  \Sigma_E:=\mathbf\Sigma_E/\bZ_2
    =\Union_{\omega\in\Omega}\{\omega\}\times\Sigma_{\omega,E}
\end{equation*}
is the phase space for which we can construct the desired group action
\begin{equation*}
  \Theta^E\colon\cL\times\Sigma_E\to\Sigma_E\textq,
  \Theta^E_\ell\bigl(\omega,[x]_{\bZ_2}\bigr)
    :=\bigl(\vartheta_\ell\omega,[D\mathbf\phi_{\omega,\ell}(x)]_{\bZ_2}\bigr)\text.
\end{equation*}
\nomenclature[GThetaE]{$\Theta^E$}{$\cL$--action on $\Sigma_E$}{}{}%
This group action is continuous.
Analogous to \cref{sec2} the quotient
\begin{equation*}
  \pi_\cL\colon\Sigma_E\to\hat\Sigma_E:=\Sigma_E/\cL
\end{equation*}
\nomenclature[GpiL]{$\pi_\cL$}{quotient $\Sigma_E\to\widetilde\Sigma_E$}{}{}%
\nomenclature[GSigmahatE]{$\hat\Sigma_E$}{compact coulomb phase space}{}{}%
is compact and metrizable, inherits the geodesic flow
\begin{equation*}
  \hat\Phi_E\colon\bR\times\hat\Sigma_E\to\hat\Sigma_E
\end{equation*}
\nomenclature[GPhihatE]{$\hat\Phi_E$}{geodesic flow on $\hat\Sigma_E$}{}{}%
and carries the finite and $\hat\Phi_E$--invariant
Liouville measure~$\hat\mu_E$, see \cite{Sch10}.
\nomenclature[GMstar]{$\hat\mu_E$}{Liouville measure on $\hat\Sigma_E$}{}{}%
An overview over the different phase spaces including
the embedding into $\Omega\times T\bC^2$
and the unit tangent bundle~$\mathbf\Sigma_{\omega,E}^*$
\nomenclature[GSigmastaromegaE]{$\mathbf\Sigma_{\omega,E}^*$}{unit tangent bundle of $\Mo^*$}{}{}%
of the universal cover~$\mathbf\pi_\omega^*\colon\Mo^*\to\Mo$
is given in \cref{abb:orientierung}.
\nomenclature[GPistaromega]{$\mathbf\Pi_\omega^*$}{covering $\mathbf\Sigma_{\omega,E}^*\to\mathbf\Sigma_{\omega,E}$}{}{}%

\begin{figure}[h]
    \begin{equation*}
    \begin{CD}
      \mathbf\Sigma_{\omega,E}^*@>\mathbf\Pi_\omega^*>>
		\mathbf\Sigma_{\omega,E}@>>>
                \mathbf\Sigma_E@>>>\Omega\times T\bC^2\\
      &&@VV\pi_{\omega,\bZ_2}V@VV\pi_{\bZ_2}V\\
      &&\Sigma_{\omega,E}@>\iota_{\omega,E}>>\Sigma_E
      @>\pi_{\cL}>>\hat\Sigma_E\phantom{\times T\bC^2}
    \end{CD}
  \end{equation*}
  \caption{Overview over the relations between the different phase spaces}
  \label{abb:orientierung}
\end{figure}

\begin{theorem}\label{thm:erg:Coul}
  Let the shift action $\vartheta\colon\cL\times\Omega\to\Omega$
  be ergodic with respect to~$\beta$,
  the potential $\tilde V_\omega$ three times continuous differentiable and
  $(\tilde V_\omega,E)$ of strictly negative curvature for
  $\beta$-almost all $\omega\in\Omega$.\\
  Then the flow~$\hat\Phi_E$ is ergodic with respect to~$\hat\mu_E$.
\end{theorem}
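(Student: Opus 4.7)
The plan is to implement Hopf's classical argument \cite{Hop41} for geodesic flows on surfaces of strictly negative curvature, combined with the $\vartheta$--ergodicity assumption on $\beta$ to handle the random parameter. Fix a continuous test function $f\in C(\hat\Sigma_E)$ and consider its forward and backward Birkhoff time averages $f^\pm(\hat x):=\lim_{T\to\pm\infty}\tfrac{1}{T}\int_0^T f\circ\hat\Phi_E^t(\hat x)\,dt$. Since $\hat\mu_E$ is finite and $\hat\Phi_E$--invariant, Birkhoff's ergodic theorem yields $\hat\mu_E$--a.e.\ existence of $f^\pm$ and their a.e.\ equality to a $\hat\Phi_E$--invariant function $\bar f\in L^\infty(\hat\Sigma_E,\hat\mu_E)$. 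To establish ergodicity it suffices to show that each such $\bar f$ is $\hat\mu_E$--a.e.\ constant.

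Next I work inside a single realisation. For $\beta$--a.e.\ $\omega\in\Omega$, the universal cover $(\Mo^*,\mathbf g_{\omega,E}^*)$ is a $C^2$ Hadamard surface of strictly negative curvature, so its unit tangent bundle $\mathbf\Sigma_{\omega,E}^*$ carries smooth strong stable and unstable horocyclic foliations. These descend through the deck transformation $\mathbf G_\omega$ to foliations $W^{ss}_\omega,W^{su}_\omega$ on $\Sigma_{\omega,E}$. Using uniform continuity of $f$ on the compact space $\hat\Sigma_E$ together with the exponential contraction of distances along stable leaves under forward time evolution (and along unstable leaves under backward time evolution), the standard Hopf computation shows that $\bar f$ is constant along every leaf of $W^{ss}_\omega$ and of $W^{su}_\omega$.

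The third step exploits the two--dimensional product structure. In each fibre $\Sigma_{\omega,E}$ small open sets carry local product charts $(\text{flow})\times W^{ss}_\omega\times W^{su}_\omega$, and the horocyclic foliations are $C^1$ (since the Jacobi metric is $C^3$ when the potentials are $C^\eta$ with $\eta\ge3$). Hopf's original two--dimensional argument then bypasses the need for Anosov--type absolute continuity theorems by applying Fubini's theorem directly against the smooth horocyclic leaves, yielding that $\bar f\rstr_{\Sigma_{\omega,E}}$ is $\lambda_{E,\omega}$--a.e.\ equal to a single constant $c(\omega)\in\bR$ on each dynamically connected component. Topological transitivity of $\hat\Phi_E$ (\cref{thm:toptrans}) and the density of periodic orbits (\cref{prop:periodicdense}) are invoked to ensure that the local constancy propagates to a global constant within each fibre.

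Finally, by the $\Theta^E$--invariance built into the quotient $\hat\Sigma_E$, the function $c\colon\Omega\to\bR$ is $\vartheta$--invariant, so $\vartheta$--ergodicity of $\beta$ forces $c$ to be $\beta$--a.s.\ constant; hence $\bar f$ is $\hat\mu_E$--a.e.\ constant. The main obstacle will be making the third step rigorous in this bundle setting over the totally disconnected parameter space $\Omega$: the $\bZ_2$--quotient by $\mathbf G_\omega$ and the $\cL$--quotient by $\Theta^E$ must be handled compatibly with the local product charts adapted to the horocyclic foliations, and measurability of $c(\omega)$ in $\omega$ must be established before $\vartheta$--ergodicity of $\beta$ can be applied.
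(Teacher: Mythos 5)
There is a genuine gap in your fibrewise Hopf step. You assert ``exponential contraction of distances along stable leaves'' on $\Sigma_{\omega,E}$, but no such uniform contraction is available here: the single site potentials are short range, so the Jacobi metric $\mathbf g_{\omega,E}$ becomes asymptotically Euclidean and the curvature \eqref{eq:curvature}, while strictly negative, tends to zero at spatial infinity. The fibrewise geodesic flow is therefore only non--uniformly hyperbolic, and for an orbit that escapes to infinity the distance to a neighbouring point on its stable horocycle need not decay at all, so the standard Hopf computation showing constancy of $\bar f$ along stable and unstable leaves breaks down exactly on the set of non--returning orbits. This is why Hopf's theorem in \cite{Hop41} is stated for \emph{recurrent} and hyperbolic geodesic flows, and it is the step your proposal never supplies. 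The paper closes this hole by passing to the finite factor first: since $\hat\mu_E$ is finite and $\hat\Phi_E$--invariant, Poincar\'e's recurrence theorem makes $(\hat\Sigma_E,\hat\Phi_E,\hat\mu_E)$ recurrent, so that $\liminf_{t\to\infty}\abs{f_\omega^*\circ(\mathbf\Phi_{\omega,E}^*)_t-f_\omega^*}=0$ almost everywhere on the lift to $\mathbf\Sigma_{\omega,E}^*$, and only then is the Hopf argument about constancy along stable and unstable manifolds applied, fibre by fibre.

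A secondary objection: you invoke topological transitivity (\cref{thm:toptrans}) and density of periodic orbits to promote local constancy of $\bar f\rstr_{\Sigma_{\omega,E}}$ to a global fibrewise constant, but these are purely topological facts and do not by themselves control a measure--theoretic almost--everywhere statement; what one actually uses is connectedness of the fibre (or of the relevant invariant set) together with the fact that the local product neighbourhoods have positive Liouville measure, as in Hopf's chain argument. Your final step (measurability and $\vartheta$--invariance of the fibrewise constant, then $\vartheta$--ergodicity of $\beta$) matches the paper's conclusion, which realises it concretely through the $\Theta^E$--invariant function $f_{\Sigma_E}=\bar f\circ\pi_\cL$; that part of your outline is sound once the fibrewise constancy has been rigorously obtained, but without the recurrence input the core of the argument does not go through.
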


\begin{proof}
  In \cite{Hop41} the ergodicity of a recurrent and hyperbolic geodesic flow
  on a $2$-dimensional riemannian surface is proven.  We use that work here.

  We need to show that for every continuous function
  $\hat f\colon\hat\Sigma_E\to\bR$ the limits
  \begin{equation}\label{fbar}
    \bar f^\pm:=\lim_{T\to\pm\infty}\frac1T
      \int_0^T\hat f\circ\hat\Phi_E(t,\cdot)\,dt
  \end{equation}
  exist and are almost everywhere constant.
  The existence a.e.\ of~$\bar f^\pm$
  is guaranteed by Birkhoff's ergodic theorem,
  and so is $\bar f^+=\bar f^-=:\bar f$ $\hat\mu_E$-a.e..

  Via the quotients and embeddings, see \cref{abb:orientierung}, we introduce
  \begin{equation*}
    f_\omega^*\colon\mathbf\Sigma_{\omega,E}^*\to\bR\textq,
    f_\omega^*:=\hat f
      \circ\pi_\cL
      \circ\iota_{\omega,E}
      \circ\pi_{\omega,\bZ_2}
      \circ\mathbf\Pi_\omega^*
      \quad(\omega\in\Omega)\text.
  \end{equation*}
  By Poincar\'e's recurrence theorem
  the finite measure preserving dynamical system
  $(\hat\Sigma_E,\hat\Phi_E,\hat\mu_E)$ is recurrent, so that
  $\liminf_{t\to\infty}\abs{f_\omega^*\circ(\mathbf\Phi_{\omega,E}^*)_t
    -f_\omega^*}=0$ a.e.,
    see \emph{e.g.} \cite[p.~14--16]{Aar97}.
  Hyperbolicity is guaranteed by the curvature assumption.
  Using the a.e.\ constancy of $\bar f_\omega^*$
  along stable and unstable manifolds like in \cite{Hop41}
  we see, that for each $\omega\in\Omega$ the limit~$\bar f_\omega^*$
  is almost surely constant.

  Finally we use the ergodicity of $(\vartheta,\beta)$
  to conclude that $\bar f$ ist constant on~$\hat\Sigma_E$.
  By denoting this a.e.-value by $F(\omega)$
  for $\beta$-a.a.\ $\omega\in\Omega$
  we define $\beta$-a.e.\ a function $F\colon\Omega\to\bR$.
  Since the whole construction depends measurably on $\omega\in\Omega$,
  $F$~itself is measurable.
  To see that $F$ is in fact $\vartheta$--invariant, we introduce
  \begin{equation}\label{fhat}
    f_{\Sigma_E}:=\bar f\circ\pi_\cL
      \colon\Sigma_E\to\bR
    \qquad(\omega\in\Omega)
  \end{equation}
  and write for a.e.\ $\omega\in\Omega$,
  a.e.\ $x\in\Sigma_{\vartheta_\ell\omega}$ and a.e.\ $y\in\Sigma_\omega$
  \begin{align*}
    F(\vartheta_\ell\omega)&
      =f_{\Sigma_E}\circ\iota_{\vartheta_\ell\omega}(x)
      =f_{\Sigma_E}\circ\Theta^E\circ\iota_\omega(y)
      =f_{\Sigma_E}\circ\iota_\omega(y)
      =F(\omega)\text,
  \end{align*}
  using that $f_{\Sigma_E}$~by definition is $\Theta^E$--invariant,
  see \cref{fbar,fhat}.
  Therefore, by ergodicity of $(\vartheta,\beta)$,
  $F$~is $\beta$-a.s.\ constant,
  which then implies that $\bar f$~is constant modulo~$\hat\mu_E$.
\end{proof}
The above theorem relates to a geodesic flow on 
a (extended and compactified) unit tangent bundle.
This construction has been made in order to make use of the
well-known consequences of negative curvature on the dynamics.\\
Like in the case of smooth potentials treated in \cref{sec5},
we are finally interested in the (analog of) 
the Hamiltonian flow \eqref{tPhiE} generated by $\hat{H}$ (based on the Hamiltonian $H\colon P\to \bR$ on the extended phase space~$P$
of the first regularization in \cref{sec9}).
As this flow is related to the geodesic flow of \cref{thm:erg:Coul}
by a continuous time change, we get:
\begin{corollary}\label{cor:erg:Coul}
Under the conditions of \cref{thm:erg:Coul} 
the coulombic Hamiltonian flow \eqref{tPhiE}
on the compactified energy surface $\hS_E$
is ergodic w.r.t.\ Liouville measure $\tlambda_E$.
\end{corollary}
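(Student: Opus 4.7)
The plan is to deduce the corollary from \cref{thm:erg:Coul} by invoking the classical fact that ergodicity of a flow with respect to a finite invariant measure is preserved under a continuous time reparametrisation by an almost everywhere positive density.

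First I would pin down the relation between the two flows. By the Maupertuis principle, on a Hamiltonian trajectory of energy $E$ the Jacobi--Maupertuis arclength $s$ and the physical time $t$ satisfy $\frac{ds}{dt}=\sqrt{2}(E-\tilde V_\omega)$, so off the singularity set the geodesic flow $\hat\Phi_E$ and the Hamiltonian flow $\tilde\Phi_E$ parametrise the same orbits, the time change being a strictly positive continuous function. After the regularisation recalled in \cref{sec9,sect10}, this cocycle lifts to a nonnegative continuous density $\rho\colon\hat\Sigma_E\to[0,\infty)$ which is strictly positive off the regularised collision locus; the latter is closed, of lower dimension, invariant under both flows, and assigned measure zero by both $\hat\mu_E$ and $\tilde\lambda_E$.

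Second I would relate the two Liouville measures on $\hat\Sigma_E$. A standard symplectic computation (see e.g.\ \cite{AM78}) gives $d\tilde\lambda_E=c\,\rho^{-1}\,d\hat\mu_E$ away from collisions, for a global constant $c>0$, so the two measures are equivalent on $\hat\Sigma_E$. Then, given a $\tilde\Phi_E$-invariant measurable set $A\subseteq\hat\Sigma_E$, off the collision locus the orbits of $\tilde\Phi_E$ and $\hat\Phi_E$ coincide as point sets, so after removing the (null, invariant) collision locus $A$ becomes $\hat\Phi_E$-invariant as well. \cref{thm:erg:Coul} then forces $\hat\mu_E(A)\in\{0,\hat\mu_E(\hat\Sigma_E)\}$, and the equivalence of the two measures transports this dichotomy to $\tilde\lambda_E$.

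I expect the main technical point to lie in verifying that $\rho$ and the orbit identification behave correctly at the regularised collision locus, where $\rho$ vanishes. This is essentially the content of the regularisation constructions of \cite{KK92,Kn11} already used in \cref{sec9}: there one checks that the continuation of trajectories across the branch locus is smooth and that both measures assign zero mass to the collisions. Given these facts, the time-change argument runs as above and no separate hyperbolicity input is needed — the ergodicity of $\tilde\Phi_E$ is already encoded in that of $\hat\Phi_E$.
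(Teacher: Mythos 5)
Your proposal is correct and takes essentially the same route as the paper: the paper deduces \cref{cor:erg:Coul} in one line from the remark that the Hamiltonian flow \eqref{tPhiE} is related to the geodesic flow of \cref{thm:erg:Coul} by a continuous time change, which is precisely the reparametrisation argument you spell out. The extra details you supply (the Maupertuis relation, equivalence of the two Liouville measures, and the null, invariant collision locus) are the standard facts implicit in that one-line deduction, so no new idea is needed beyond what the paper uses.
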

\begin{remark}[Markov Partition]\label{todo}
In a forthcoming article, one of us (CS) shows 
for all large energies, under some additional conditions,
the existence of Markov partitions for the system, cf.~\cite{Sch10}.
\end{remark}

In \cite{CLS10} Cristadoro, Lenci and Seri show recurrence for 
random Lorenz tubes. Their configuration spaces are contained in a connected 
union of translates of a fundamental polygon. 
We a have similar statement for the Coulomb case. 

Again with $H\colon P\to \bR$ from \cref{sec9},
for $\ell\in\cL\setminus \{0\}$ we define the factors 
\[\check\Sigma_E:=H^{-1}(E)/\mathrm{span}_{\bZ}(\ell)
\qtextq{and}
\check\Sigma_{E,\omega}:=H_\omega^{-1}(E)/\mathrm{span}_{\bZ}(\ell)\text,\] 
for $\ell$--periodic $\omega\in\Omega$.
\nomenclature[AXhat]{$\check\Sigma_{E,\omega}$}{AXhat...}{}{}%
\nomenclature[AXhatomega]{$\check\Sigma_{E,\omega}$}{Coulomb tube}{}{}%
Again we assume $(V_\omega,E)$ to be of strictly negative curvature $\beta$--a.s..
As we divided by the free $\Theta$--action of a subgroup of~$\cL$,  
by the analog of \eqref{B} we obtain smooth flows on the non--compact
three--manifolds~$\check\Sigma_{E,\omega}$.

We assume for simplicity that $\ell\in\cL$ is primitive.
Then $\cL/\mathrm{span}_{\bZ}(\ell)\cong \bZ$ 
(and via the element $\ell'$ of a positively oriented basis 
$(\ell,\ell')$ of $\cL$, this isomorphism is unique).
We may consider random potentials indexed by $\Omega_\ell:=J^\bZ$.
\nomenclature[GOmegaell]{$\Omega_\ell$}{single site potential configurations on $\check\Sigma_E$}{}{}%
$\bZ$--ergodic probability measures $\beta_\ell$
\nomenclature[Gbetal]{$\beta_\ell$}{probability measure on $\Omega_\ell$}{}{}%
on $\Omega_\ell$ give rise to $\cL$--ergodic image measures~$\beta$
on $\Omega$, via the injection $\mathbf{i}\colon\Omega_\ell\to\Omega$,
${\bf i}(\omega)_{a\ell+b\ell'}=\omega_b$.
\nomenclature[Ai]{$\mathbf i$}{injection $\Omega_\ell\to\Omega$}{}{}%
These are analogues of the measures considered in \cite{CLS10}.
For non--trivial $\beta_\ell$ the flow on~$\check\Sigma_E$ is not ergodic
w.r.t.\ the measure induced by~$\check\mu_E$
via the projection $\check\Sigma_E\to \hat\Sigma_E$.

\begin{proposition}[Ergodicity and Recurrence of Coulomb Tubes] 
Under\\ the conditions of \cref{thm:erg:Coul} 
\begin{compactitem}
  \item $\beta$--a.s.\ the motion on~$\check\Sigma_E$ is recurrent.
  \item For ${\bf i}(\beta_\ell)$--almost all $\omega\in \Omega$
	the motion on~$\check\Sigma_{E,\omega}$ is recurrent and ergodic.
\end{compactitem}
\end{proposition}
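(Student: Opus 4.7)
The plan is to view $\check\Sigma_E$ as a $\bZ$-extension of the compact ergodic factor $\hat\Sigma_E$ from \cref{cor:erg:Coul}. Since $\ell\in\cL$ is primitive and $\cL=\mathrm{span}_\bZ(\ell,\ell')$, the deck group $\cL/\mathrm{span}_\bZ(\ell)\cong\bZ$ acts freely on $\check\Sigma_E$ with quotient $\hat\Sigma_E$, so the lifted flow $\check\Phi_E$ is a skew product over $\hat\Phi_E$ with $\bZ$-valued cocycle $\psi^t\colon\hat\Sigma_E\to\bZ$ recording the net number of $\ell'$-fundamental-domain crossings accumulated along a trajectory in time~$t$. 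For $\omega=\mathbf i(\omega_0)$ the $\ell$-periodicity of $V_\omega$ makes $\check\Sigma_{E,\omega}$ a flow-invariant subset of $\check\Sigma_E$, and the induced Liouville measure on $\check\Sigma_E$ disintegrates over~$\Omega$ as $\check\mu_E=\int_\Omega\check\mu_{E,\omega}\,d\beta(\omega)$.

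For recurrence in both bullet points, I would apply Atkinson's theorem for $\bZ$-extensions over ergodic finite-measure-preserving flows: such a skew product is recurrent whenever the cocycle is integrable with vanishing mean. Integrability of $\psi^1$ follows from the $L^1$-boundedness of $\tp$ on the finite factor (the Coulomb analog of \cref{lem:L}, which goes through because the branched-covering regularisation preserves the quadratic bound on momentum in terms of energy). Vanishing mean follows from the symmetry $(E,\ovv)\mapsto(E,-\ovv)$ of the energy-velocity distribution, which carries over verbatim from the lattice setting because the reversibility $(p,q)\mapsto(-p,q)$ lifts to an antisymplectic involution of $\mathbf M_\omega$ commuting with the Jacobi geodesic flow. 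Atkinson's theorem then gives recurrence of $\check\Phi_E$ on $\check\Sigma_E$ $\beta$-almost surely, and applying Fubini to the disintegration above yields recurrence of $\check\Sigma_{E,\omega}$ for $\mathbf i(\beta_\ell)$-a.e.~$\omega$.

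Ergodicity on $\check\Sigma_{E,\omega}$ is the genuinely delicate step. I would invoke Schmidt's essential-values criterion: the $\bZ$-extension is ergodic iff the cocycle $\psi^t$ is aperiodic, meaning its group of essential values equals all of~$\bZ$. To realize the essential value~$1$ it suffices to exhibit two closed geodesics of $\hat\Phi_E$ whose $\ell'$-winding numbers are consecutive integers, which can be produced by combining the density of periodic orbits (\cref{prop:periodicdense}) with the topological transitivity of \cref{thm:toptrans} and the reversibility $\ovv\mapsto-\ovv$, then applying an Anosov closing-lemma argument in the strictly negatively curved Jacobi metric. The symbolic-dynamical framework Schmidt's theorem requires — H\"older regularity of $\psi$ relative to a Markov coding of the base flow — is provided by the partition construction announced in \cref{todo}, which is to be used here as a black box.

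The main obstacle is thus precisely the ergodicity claim. Lifting ergodicity of the finite factor to ergodicity of the $\bZ$-extension cannot be accomplished by soft measure-theoretic arguments (Hopf chains, spectral gap), because the extension is by an infinite abelian group acting by translation in the $\ell'$-direction and there is a genuine cohomological condition to verify; it relies essentially on the hyperbolic symbolic dynamics of \cref{todo}. The other ingredients — recurrence via Atkinson, the disintegration of $\check\mu_E$, and the $I$-symmetry input — are comparatively routine once \cref{cor:erg:Coul} is in hand.
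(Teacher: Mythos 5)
Your recurrence argument is essentially sound and runs parallel to the paper's: the paper also passes to the $\bZ$--cover $\check\Sigma_E\to\hat\Sigma_E$ and applies a cocycle recurrence criterion, namely Schmidt's result \cite{Sch98} (vague convergence of the normalised displacement distributions to $\delta_0$ implies recurrence) to a Poincar\'e--map discretization of $\hat\Phi_E$, with the displacement $\LA q(T(x_0))-q_0,\ell^\perp\RA$ as cocycle; the hypothesis is supplied not by the $I$--symmetry but by the stronger pointwise fact $\ovv=0$ a.e., obtained from \cref{cor:erg:Coul} via the Coulomb analog of \cref{prop:ConsequencesOfErgodicity}. Your Atkinson variant (integrability from the analog of \cref{lem:L}, zero mean from momentum reversal) would do the same job, and both versions lean on \cref{todo} for the existence of a good section, so that part is a fair alternative.

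The ergodicity step is where your proposal goes wrong, in two ways. First, your assertion that lifting ergodicity to the $\bZ$--extension ``cannot be accomplished by Hopf chains'' is exactly the opposite of what the paper does: having established recurrence (conservativity), it concludes ergodicity of the infinite--measure flow on $\check\Sigma_{E,\omega}$ from local ergodicity together with Hopf's argument \cite{Hop39}, i.e.\ constancy of (ratio) time averages along the absolutely continuous stable and unstable foliations of the strictly negatively curved Jacobi metric; for two degrees of freedom this conservativity--plus--Hopf mechanism is the standard route for covers and tubes (compare \cite{CLS10}), and it needs no symbolic dynamics at all. Second, your own substitute has a genuine gap: Schmidt's criterion requires the group of essential values of the winding cocycle to be all of $\bZ$, and exhibiting two periodic orbits with consecutive winding numbers only establishes aperiodicity (the cocycle is not cohomologous to one with values in a proper subgroup). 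Recurrence plus aperiodicity does not in general force the essential value group to be $\bZ$; to get that over the hyperbolic base one needs the local limit theorem / Aaronson--Denker type machinery for H\"older cocycles over a Markov coding, which you neither state nor prove and which is a far heavier input than the Hopf argument you dismissed. As written, the second bullet of the proposition is therefore not proven by your proposal.
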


\begin{proof}
By \cref{cor:erg:Coul} the flow~$\hat\Phi_E$
is ergodic w.r.t.\ $\tlambda_E$. Thus by the analog of 
\cref{prop:ConsequencesOfErgodicity} for Coulomb potentials,  
for $\beta$--a.e.~$\omega$
the asymptotic velocity $\ovv(x)=0$ a.e.\ on~$\Sigma_{E,\omega}$, 
and thus on $\check\Sigma_E$.
If $\beta$ is of the form ${\bf i}(\beta_\ell)$, then for $\beta$--a.e.~$\omega$
the asymptotic velocity $\ovv(x)=0$ a.e.\ on~$\check\Sigma_{E,\omega}$.

We compare with Schmidt \cite{Sch98} (and the references cited therein).
In the case of an ergodic transformation~$T$ on 
a standard probability space $(X,\mu)$, a Borel map $f\colon X\ar\bR$
and the induced cocycle (orbit sum) $\hat f\colon\bZ\times X\ar\bR$, he defines 
$\sigma_k(A):=\mu(\{x\in X\mid\hat f(k,x)/k\in A\})\quad(k\in\bN)$,
for Borel sets $A\subseteq \bR$.

Under the assumption that vaguely $\lim_{k\ar\infty}\sigma_k=\delta_0$,
he notes that $f$~is \emph{recurrent}, meaning that 
$\liminf_{k\ar\infty} \abs{\hat f(k,x)}=0$, $\mu$--a.s..

Here we use a Poincar\'e map discretization $(X,\mu,T)$
of the flow $\hat\Phi_E$ on~$\hat\Sigma_E$
(by \cref{todo} such a discretization exists and is ergodic).
For $f(x_0)$, with $x_0=[(\omega, p_0,q_0)]\in X$, we use the difference 
$\LA q(T(x_0))-q_0,\ell^\perp\RA$ in position along the direction 
$\ell^\perp:=\bsm 0&1\\-1&0\esm\ell\in \bR^2$ perpendicular to $\ell$.
$f$ is well-defined on~$X$. 

The assumptions of \cite{Sch98} apply and $f$~is recurrent. 
This means that for a.e.\ $x_0$ 
the motion in the Coulomb tube returns infinitely often to every neighbourhood
of the circle in the configuration torus given by $\LA q_0,\ell^\perp\RA=0$.
Then one obtains, by Poincar\'e's Recurrence Theorem for the induced map, 
recurrence in the usual sense.

Ergodicity of the flow on~$\check\Sigma_{E,\omega}$ follows from local ergodicity,
using Hopf's argument \cite{Hop39}.
\end{proof}

%
%
\printnomenclature
\addcontentsline{toc}{section}{References}
\bibliographystyle{alpha}
\bibliography{motion}

\end{document}